\newtheorem{theorem}{Theorem}[section]
\newtheorem{lemma}[theorem]{Lemma}
\newtheorem{corollary}[theorem]{Corollary}
\newtheorem{definition}{Definition}
\numberwithin{subcase}{case}
\newtheorem{remark}[theorem]{Remark}
\begin{document}
\algnewcommand{\algorithmicgoto}{\textbf{go to}}%
\algnewcommand{\Goto}[1]{\algorithmicgoto~\ref{#1}}%
\title{\normalsize\bf \vspace{-8ex}
Online landmark replacement for out-of-sample dimensionality reduction methods}
\author{Chanon Thongprayoon$^1$, Naoki Masuda$^{1,2,3,*}$}
\date{\vspace{-5ex}}
\maketitle
\vspace{-14pt}
\begin{center}
\small
$^1$Department of Mathematics, State University of New York at Buffalo, Buffalo, NY, USA.\\
$^2$Institute for Artificial Intelligence and Data Science, State University of New York at Buffalo, Buffalo, NY, USA\\
$^3$Center for Computational Social Science, Kobe University, Kobe, Japan\\
*Corresponding author: naokimas@buffalo.edu
\end{center}

\begin{abstract}
A strategy to assist visualization and analysis of large and complex data sets is dimensionality reduction, with which one maps each data point into a low-dimensional manifold. However, various dimensionality reduction techniques are computationally infeasible for large data. Out-of-sample techniques aim to resolve this difficulty; they only apply the dimensionality reduction technique on a small portion of data, referred to as landmarks, and determine the embedding coordinates of the other points using landmarks as references. Out-of-sample techniques have been applied to online settings, or when data arrive as time series. However, existing online out-of-sample techniques use either all the previous data points as landmarks or the fixed set of landmarks and therefore are potentially not good at capturing the geometry of the entire data set when the time series is non-stationary. To address this problem, we propose an online landmark replacement algorithm for out-of-sample techniques using geometric graphs and the minimal dominating set on them. We mathematically analyze some properties of the proposed algorithm, particularly focusing on the case of landmark multidimensional scaling as the out-of-sample technique, and test its performance on synthetic and empirical time series data.
\end{abstract}

\section{Introduction\label{sec:introduction}}

One way to aid visualization and analysis of high-dimensional data is to seek representations of the given set of data points in low-dimensional spaces, including the case of non-Euclidean manifolds \cite{duin2005dissimilarity, Desilva2003Nips, ayesha2020overview}. Finding a low-dimensional representation of high-dimensional data is referred to as dimensionality reduction \cite{van2009dimensionality, lee2007nonlinear, cunningham2015linear}. Examples of dimensionality reduction methods are the principal component analysis (PCA) \cite{abdi2010principal}, multidimensional scaling (MDS) \cite{borg2005modern}, locally linear embedding (LLE) \cite{roweis2000lle}, isometric feature mapping (ISOMAP) \cite{tenenbaum2000global}, and $t$-distributed stochastic neighbor embedding ($t$-SNE) \cite{van2008visualizing}. These and many other methods aim to preserve the dissimilarity among data points as much as possible through the dimensionality reduction process \cite{van2009dimensionality, gisbrecht2015data}. Applications of dimensionality reduction include pattern recognition, statistical analysis of multi-variable data, complexity reduction, feature extraction, and signal processing, to name a few \cite{duin2005dissimilarity, carreira1997review, velliangiri2019review, ghojogh2023elements}.

In general, running a dimensionality reduction algorithm may require a prohibitive amount of time when a data set is large. For example, the PCA, MDS, and ISOMAP have the cubic time complexity \cite{van2009dimensionality}, while other methods such as LLE and $t$-SNE have the quadratic time complexity~\cite{van2009dimensionality, van2008visualizing}. One strategy for suppressing the time complexity of dimensionality reduction algorithms is to approximate the embedding coordinates of the data points. To do so, one can turn to approximate minimization of the cost function of an embedding method \cite{schleif2015indefinite, van2013barnes, yang2013scalable}. Alternatively, one can construct a computationally feasible projection matrix, whose entries are random variables that take values from probability distributions of data points of one's choice, such as the Gaussian distribution \cite{dasgupta2013experiments, lopes2011more, durrant2010compressed} or the distributions introduced in \cite{achlioptas2003database, bingham2001random}. Such a projection matrix applied to the data points in the original high-dimensional Euclidean space provides their approximate embedding coordinates in a low-dimensional Euclidean space \cite{achlioptas2003database, schleif2015indefinite, durrant2010compressed, bingham2001random, dasgupta2013experiments, lopes2011more}.  Another strategy is to exactly embed only a relatively small subset of the entire data, referred to as out-of-sample techniques \cite{GISBRECHT2015kerneltsne, Bengio2004Nips, zhang2021bikerneltsne, long2019landmark, gisbrecht2015metric, van2009dimensionality}. An out-of-sample dimensionality reduction method calculates the precise embedding coordinate only for a small portion of data points, called the landmarks, and approximates the coordinates for the remaining data points. Because one only approximately calculates the coordinates of the non-landmark data points, the embedding of the entire data set with an out-of-sample method usually runs faster than the original embedding method. Examples of out-of-sample techniques are the landmark multidimensional scaling (LMDS) \cite{Desilva2003Nips, de2004sparse}, landmark isometric feature mapping (L-ISOMAP) \cite{Desilva2003Nips}, landmark diffusion maps \cite{long2019landmark}, kernel $t$-SNE \cite{GISBRECHT2015kerneltsne}, and bi-kernel $t$-SNE \cite{zhang2021bikerneltsne}.
 
We have discussed embeddings of static data sets. In practice, one may want to embed a stream of data arriving in real time into a low-dimensional space. Out-of-sample techniques are directly applicable to such online settings because one can calculate the embedding coordinates of newly arriving data points, which are regarded as out-of-sample data. In general, online methods should suppress the computational time by avoiding repeatedly applying a dimensionality reduction procedure every time a new data point arrives \cite{migenda2021adaptive, hoi2021online}. Known online dimensionality reduction techniques include online PCA \cite{migenda2021adaptive, hall1998incremental, artac2002incremental, cardot2018online, boutsidis2014online}, online Laplacian eigenmaps \cite{malik2016online}, online ISOMAP \cite{law2004nonlinear, law2006incremental}, online L-ISOMAP \cite{law2006incremental}, and online LLE \cite{schuon2008truly}. There are also downstream tasks facilitated by online dimensionality reduction techniques for stream data, such as online anomaly detection for cybersecurity \cite{juvonen2015online}, online classification and clustering of data points \cite{zhou2016online, wang2013online, liberty2016algorithm, hoi2014libol}, and autonomous control in robotics \cite{gepperth2016incremental}.

The online dimensionality reduction techniques mentioned above are instances of out-of-sample techniques. Regarding the selection of landmarks, there are two major classes of these methods. First, some algorithms use all the previous data points as landmarks and therefore the number of landmarks keeps growing \cite{hall1998incremental, artac2002incremental, cardot2018online, boutsidis2014online, migenda2021adaptive, law2004nonlinear, law2006incremental, malik2016online, schuon2008truly}. This class of methods does not restrict the number of landmarks and hence does not particularly aim to reduce runtime. Second, other algorithms use a set of landmarks that is fixed over time once it is initialized \cite{de2004sparse, GISBRECHT2015kerneltsne, long2019landmark, zhang2021bikerneltsne, law2006incremental}. An earlier study suggested that using evenly distributed landmarks preserves the data structure better than using a set of poorly distributed landmarks \cite{de2004sparse}. Therefore, one idea is to choose landmarks from the data set uniformly at random. However, the new data points may explore an unexplored part of the space of the data, and they may be far away from any of the already revealed data points. In this case, 
embedding with an initially fixed set of landmarks may cause substantial distortion because they are unevenly distributed given the data set including the newer data points. It may then be better to adaptively replace some landmarks as new data points arrive to assure that the majority of landmarks do not congregate in a small region of the data space and the set of landmarks provides a reasonable scaffold of the entire data set at any given time. We currently lack such landmark replacement algorithms for online out-of-sample embedding.

We remark that there are online dimensionality reduction techniques that do not use out-of-sample techniques. In fact, they require one to increase the dimension of the embedding space as new data points arrive \cite{migenda2021adaptive, hall1998incremental, artac2002incremental, boutsidis2014online}, which may be inconvenient in practice.

To fill this gap, we study online landmark selection for out-of-sample techniques. By constructing an adaptive geometric graph \cite{penrose2003random, barthelemy2011spatial} from the given data set and exploiting an algorithm that finds the online dominating set \cite{hjuler2019dominating}, we propose an algorithm of online landmark replacement for out-of-sample dimensionality reduction methods. Upon the arrival of a new data point, the proposed algorithm determines whether or not a current landmark should be replaced. We then provide mathematical underpinnings of the proposed algorithm, specifically in the case of the LMDS, and apply the proposed algorithm to three multidimensional time series data sets.

\section{Online landmark replacement algorithm}

In this section, we propose an online landmark replacement algorithm for out-of-sample embedding methods.
We introduce building blocks for our algorithm in sections~\ref{sub:geometric-graphs}, \ref{sub:offline_landmark}, and \ref{sub:online_landmark}.
Then, we explain our online algorithm in section~\ref{sub:proposed_online_landmark}.

\subsection{Geometric graph, dominating set, and landmark\label{sub:geometric-graphs}}

\begin{figure}[t]
    \centering
     \includegraphics[width=0.6 \textwidth]{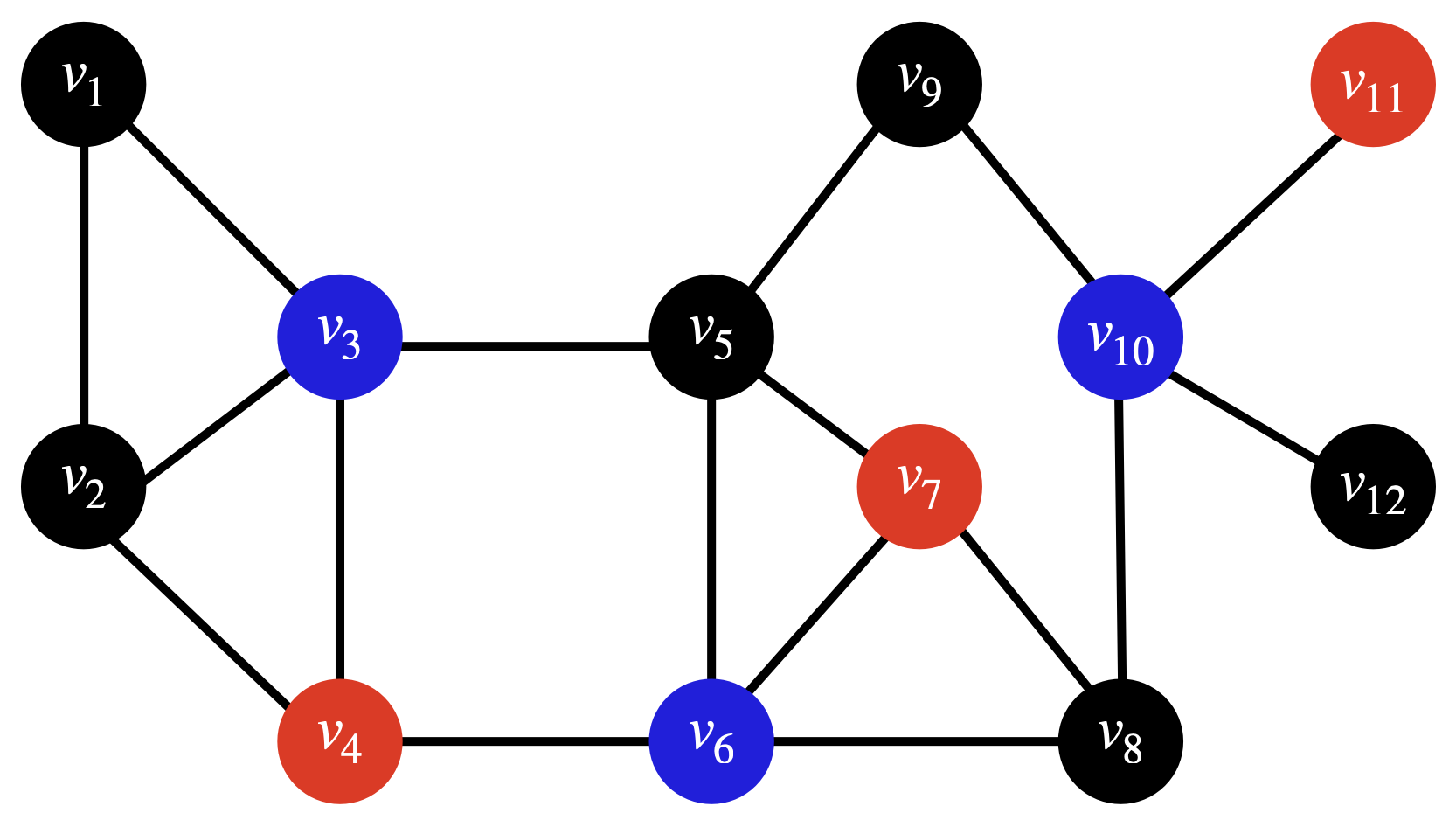}
     \caption{A graph with 12 nodes. The set $L = \{v_3, v_6, v_{10}\}$, shown in blue, is a dominating set because, for each $r\in\{1,\ldots, 12\}$, $v_r$ either belongs to $L$ or is adjacent to an element in $L$. On the other hand, $L_0 = \{v_4, v_7, v_{11}\}$, shown in red, is not a dominating set because $v_1, v_9,$ and $v_{12}$ do not belong to $L_0$ or is adjacent to an element in $L_0$. Here, $OnlyBy(v_3)=\{v_1,v_2,v_3,v_4\}$. For instance, $v_5\not\in OnlyBy(v_3)$ because $v_5$ is dominated by both $v_3$ and $v_6$.}
\label{fig:dominating_set_onlyby}
\end{figure}

In this section, we introduce the notion of landmark via the definition of geometric graph and dominating set.
We let $V = \{x_1,x_2,\ldots, x_n\}$ be the set of data, arriving at $0\le t_1 < t_2 < \cdots < t_n$, $d:V\times V\rightarrow \mathbb{R}$ be a metric, and $\rho$ be a positive number. We define a network with node set $V$ and edge set
given by $E = \{(x_r, x_s): d(x_r, x_s) < \rho\}$. This network is called a geometric graph \cite{penrose2003random, barthelemy2011spatial}. We also denote by $L$ $(\subset V)$ the set of landmarks. Our problem of landmark replacement is to update $L$ that contains at most $m$ landmarks given $V$ and $d(x_r, x_s)$ $\forall r, s \in \{1, \ldots, n \}$ such that $L$ has some desirable properties. We fix $m$.

To find $L$, we turn to the dominating set problem \cite{rahman2017basic, hedetniemi1991bibliography}. The set $L\subseteq V$ is said to be a dominating set of a graph $(V,E)$ if, for each node $v$, either $v$ belongs to $L$ or $v$ is adjacent to a node in $L$. In the former case, we say that $v$ covers itself (see Fig.~\ref{fig:dominating_set_onlyby} for an illustration). In the latter case, we say that $v$ is covered by a different node in $L$. If $v$ neither belongs to $L$ nor is covered by a different node in $L$, then we say that $v$ is uncovered. If there is at least one $v$ that is uncovered, then $L$ is not a dominating set. We aim to select a dominating set $L$ to be used as the set of landmarks for out-of-sample embedding methods.

\subsection{Selecting landmarks offline\label{sub:offline_landmark}} 

Here, we introduce a procedure to determine $L$ in an offline setting without an upper limit on the number of landmarks. We assume that $V$ is given and fixed.
If $n\le m$, then we set $L = V$. Otherwise, we apply Algorithm \ref{alg:greedy_ds} to find a dominating set $L\subseteq V$ for network $(V,E)$ \cite{parekh1991analysis} and use $L$ as the set of landmarks. A dominating set, obtained from Algorithm \ref{alg:greedy_ds}, attains the size of $O(\log(n))\times \left|OPT\right|$, where $\left|OPT\right|$ is the size of the optimal dominating set~\cite{hjuler2019dominating}; see electronic supplementary material section~\ref{sec:bigO} for the formal definition of the Big $O$ notation.

\begin{algorithm}
	\caption{Greedy dominating set} 
	\label{alg:greedy_ds}
	\begin{algorithmic}[1]
	\State Set $L = \emptyset$
		\While {there is an uncovered node}
			\State Choose the $x_r \notin L$ such that $x_r$ covers the largest number of uncovered nodes.
			\State $L \gets L\cup \{x_r\}$
		\EndWhile
	\State \Return $L$
	\end{algorithmic} 
\end{algorithm}

\subsection{Previous algorithms for updating landmarks online\label{sub:online_landmark}}

We now turn our interest to the case of online dominating set. There are algorithms that allow updating of the dominating set as nodes are sequentially added or deleted~\cite{eidenbenz2002online, bockenhauer2021advice, king1997line, hjuler2019dominating}. For instance, one method recomputes a dominating set on time-varying unit disk graphs upon each addition of deletion of a node~\cite{eidenbenz2002online}. In this algorithm, there is no limitation to the size of the dominating set, which we denote by $\left| L \right|$. If the added node is covered by the current dominating set, then no action is required. Otherwise, we include the added node into the dominating set. When deleting a node $v$ from the network, if $v$ does not belong to the dominating set, no action is taken. Otherwise, there may be uncovered nodes as a result of deleting $v$ from the dominating set. If the latter is the case, we add all the uncovered neighbors of $v$ to the dominating set. This last step may drastically increase $\left| L \right|$. Another online algorithm, which is the latter of the two algorithms proposed in \cite{hjuler2019dominating}, seeks to find a minimal dominating set and does not restrict $m$. In addition to allowing sequentially adding nodes to $L$, this algorithm permits sequential addition of edges to the network, which is a useful property for our landmark replacemenet algorithm as we explain later. Two other online algorithms do not allow us to add an edge between the existing nodes \cite{king1997line}. Reference~\cite{bockenhauer2021advice} provides lower and upper bounds on the size of dominating sets for deterministic and randomized online dominating set algorithms. 

All these algorithms allow $\left| L \right|$ to indefinitely grow and do not provide a mechanism to control $\left| L \right|$. In contrast, we want to maintain the dominating set to satisfy $\left| L \right| \le m$ as we add nodes. Therefore, we propose to allow $\rho$ to adaptively increase, which is equivalent to adding edges to the network, with the aim of decreasing $\left| L \right|$ when $\left| L \right|$ exceeds $m$. 
Although the two algorithms proposed in \cite{king1997line}, discussed above, allow us to add edges, one can only add new edges incident to any new node $v$ when $v$ has joined the network. These algorithms do not allow us to add an edge $(v, u)$ at a later stage, which we may want to do in our landmark replacement procedure in which we increase $\rho$ as we add nodes, as we describe below. Therefore, we need to resort to a different online dominating set algorithm to realize our goal.

\subsection{Proposed algorithm for updating landmarks online\label{sub:proposed_online_landmark}}

In this section, we explain our algorithm to update the set of landmarks, $L$, upon the arrival of each data point under the assumption that the data points sequentially arrive. We limit the size of $L$ by imposing $\left| L\right| \le m$.
Whenever a new data point arrives after the first $m$ data points, we either keep the current $L$ or replace an element in $L$ by another element to respect $\left| L\right| \le m$.

We adapt an online algorithm that finds the minimal dominating set of time-varying networks \cite{hjuler2019dominating} to our landmark replacement problem. With our proposed method, we control the number of landmarks in a dynamic manner by gradually increasing $\rho$, which corresponds to adding edges to the geometric graph.

We assume that $V = \{ x_1, \ldots, x_n \}$ and that $x_{n+1}$ is a new data point. We regard that $x_{n+1}$ is a new node of the network, which directly connects to each of the existing nodes within distance $\rho$ from $x_{n+1}$ by a new edge. Then, exactly one of the following three statements holds true:
\begin{enumerate}
\item $x_{n+1}$ is not isolated and is covered by $L$. In other words, $\mathcal{N}(x_{n+1})\cup\{x_{n+1}\}\neq \{x_{n+1}\}$ and $\left(\mathcal{N}(x_{n+1})\cup\{x_{n+1}\}\right)\cap L \neq\emptyset$,
\item $x_{n+1}$ is not isolated and is not covered by $L$. In other words, $\mathcal{N}(x_{n+1})\cup\{x_{n+1}\}\neq \{x_{n+1}\}$ and $\left(\mathcal{N}(x_{n+1})\cup\{x_{n+1}\}\right)\cap L =\emptyset$,
\item $x_{n+1}$ is isolated. In other words, $\mathcal{N}(x_{n+1})\cup\{x_{n+1}\} = \{x_{n+1}\}$,
\end{enumerate}
where $\mathcal{N}(x_{n+1})$ is the set of neighbors of $x_{n+1}$. 

In the first case, $L$ continues to be a dominating set of the graph. Therefore, we do not need to update $L$ given the current value of $\rho$. Therefore, there is no need to update $\rho$. In the second and third cases, we update $L$. To discuss the latter two cases, we define $OnlyBy(x)$ for any $x\in L$ as follows:
\begin{equation}
OnlyBy(x) = \{y\in\mathcal{N}(x)\cup\{x\}: \left| \left(\mathcal{N}(y)\cup\{y\}\right)\cap L\right| = 1\}\label{eq:unique_cover}.
\end{equation}
In other words, $OnlyBy(x)$ is the set of neighbors of $x$ that only $x$ dominates (see Fig.~\ref{fig:dominating_set_onlyby} for an illustration).

Let us consider the second case. If $\left| L\right| < m$, then we add $x_{n+1}$ to the set of landmarks, $L$. 
The addition of $x_{n+1}$ to $L$ does not violate $\left| L\right| \le m$. Otherwise, i.e., if $\left| L\right| = m$ before the arrival of $x_{n+1}$, we use Algorithm~\ref{alg:landmark_replacement} to determine if we replace a landmark in $L$ by $x_{n+1}$ or by another non-landmark data point.  Algorithm~\ref{alg:landmark_replacement} either adds $x_{n+1}$ or a neighbor of $x_{n+1}$ to $L$; we make the decision based on $\deg(x_{n+1})$, which is the degree of node $x_{n+1}$. Then, because $\left| L\right|$ has now increased to $m+1$, we need to drop one data point from $L$ to enforce $\left| L\right| = m$. We realize this by increasing $\rho$, which leads to addition of new edges to the graph, until one landmark becomes redundant. Specifically, we add an edge between a pair of nodes that attains the smallest distance among all the node pairs with distance greater than the current $\rho$. This entails an increase in $\rho$. If neither of the two nodes joined by the new edge is an element of $L$, then we further increase $\rho$ such that the next pair of nodes is additionally joined by an edge. We repeat adding edges in this manner until either of the two nodes that are joined by the new edge is a landmark. Once the repetitive addition of edges has been completed, there may be a redundant landmark that can be removed. Therefore, to decide which landmark will be replaced, we call Algorithm~\ref{alg:landmark_removal_condition} (i.e., function REMOVE) \cite{hjuler2019dominating}.
Algorithm~\ref{alg:landmark_removal_condition} returns a landmark after the removal of which $L$ remains a dominating set. If Algorithm~\ref{alg:landmark_removal_condition} returns such a landmark, then Algorithm~\ref{alg:landmark_replacement} removes this landmark. Otherwise, i.e., if Algorithm~\ref{alg:landmark_removal_condition} returns an empty set, Algorithm~\ref{alg:landmark_replacement} sequentially keeps adding edges by gradually increasing $\rho$ until REMOVE returns a landmark that can be removed (which Algorithm~\ref{alg:landmark_replacement} then removes). Lines \ref{code:our_contribution_1}-\ref{code:our_contribution_2} and \ref{code:our_contribution_3}-\ref{code:our_contribution_4} in Algorithm~\ref{alg:landmark_replacement} are novel, and the remainder of Algorithm~\ref{alg:landmark_replacement} including Algorithm~\ref{alg:landmark_removal_condition} is the same as the original one \cite{hjuler2019dominating}.

In the third case, $x_{n+1}$ is an isolated node in $(V, E)$. This is similar to the second case in that $x_{n+1}$ is not adjacent to any element in $L$. If $\left| L\right| < m$, then we just add $x_{n+1}$ to $L$. Note that the addition of $x_{n+1}$ respects  $\left| L\right| \le m$. Otherwise, i.e., if $\left| L \right| = m$ before the arrival of $x_{n+1}$, we apply Algorithm~\ref{alg:landmark_replacement} to $(V,E)$. Because $\deg(x_{n+1}) = 0$, lines~\ref{code:smaller_degree_x_n+1} and \ref{code:include_x_n+1} of Algorithm~\ref{alg:landmark_replacement} imply that one adds $x_{n+1}$ to $L$. Then, there are now $m+1$ landmarks, and we need to remove one of them. Therefore, we carry out lines \ref{code:our_contribution_1}-\ref{code:our_contribution_4} in Algorithm~\ref{alg:landmark_replacement} to remove a landmark, recovering $\left| L \right| = m$.

\begin{algorithm}
\caption{Landmark replacement}
\label{alg:landmark_replacement}
	\hspace*{\algorithmicindent} \textbf{Input} $(V,E)$
	\begin{algorithmic}[1]
	\If {$\deg(x_{n+1})\le 2\sqrt{\left| E\right|}$}\label{code:smaller_degree_x_n+1}
		\State $L\gets L\cup\{x_{n+1}\}$\label{code:include_x_n+1}
	\Else
		\State Choose an arbitrary $\tilde{x}\in\mathcal{N}(x_{n+1})$ such that $\deg(\tilde{x})\le\sqrt{\left|E\right|}$\label{code:search_neighbor_x_n+1}
		\State $L\gets L\cup\{\tilde{x}\}$\label{code:include_x_tilde}
	\EndIf
	\State $\left(x_r, x_s\right) = \text{argmin}_{\{x_{r'}, x_{s'}\}\not\in E}d(x_{r'}, x_{s'})$\label{code:our_contribution_1}
	\State Update $\rho = d(x_r, x_s)$ and $E\gets E\cup \{(x_r,x_s)\}$\label{code:first_epsilon_update}
	\While {$x_r\not\in L \text{ and } x_s\not\in L$}\label{code:start_first_while_loop}
		\State $\left(x_r, x_s\right) = \text{argmin}_{\{x_{r'}, x_{s'}\}\not\in E}d(x_{r'}, x_{s'})$
		\State Update $\rho = d(x_r, x_s)$ and $E\gets E\cup \{(x_r,x_s)\}$
	\EndWhile \label{code:our_contribution_2}
	\State $\pi = \text{REMOVE}(x_r, x_s)$\label{code:initial_remove_comp}
	\While {$\pi = \emptyset$}\label{code:our_contribution_3}
		\State Update $\rho = \min_{\{x_{r'}, x_{s'}\}\not\in E}d(x_{r'}, x_{s'})$ and $E\gets E\cup \{(x_{r'},x_{s'})\}$\label{code:add_edge_for_algorithm_remove}
		\State $\pi = \text{REMOVE}(x_r, x_s)$\label{code:run_algorithm_remove}
	\EndWhile \label{code:end_while_loop_remove}
	\State $L\gets L\setminus \pi$ \label{code:our_contribution_4}
	\State \Return $L, \rho$
	\end{algorithmic} 
\end{algorithm}

\begin{algorithm}
\caption{REMOVE}
\label{alg:landmark_removal_condition}
	\hspace*{\algorithmicindent} \textbf{Input} $x_r, x_s$
	\begin{algorithmic}[1]
	\If {$x_r \in L$}
		\If {$OnlyBy(x_r) = \emptyset$} 
			\State\Return $\{x_r\}$
		\ElsIf {$x_s \in L \text{ and } OnlyBy(x_s) = \emptyset$}
			\State\Return $\{x_{s}\}$
		\Else
			\State\Return $\emptyset$
		\EndIf
	\Else
		\If {$OnlyBy(x_s) = \emptyset$}
			\State\Return $\{x_s\}$
		\Else
			\State\Return $\emptyset$
		\EndIf
	\EndIf
	\end{algorithmic} 
\end{algorithm}

\begin{algorithm}
\caption{Two-way merge sorting algorithm~\cite{knuth1998art}}
\label{alg:linear_sort}
	\hspace*{\algorithmicindent} \textbf{Input} $X = \left(X[1],\ldots, X[p]\right), Y = \left(Y[1],\ldots, Y[q]\right)$, where $X[r']\le X[r'+1]$ $\forall r'\in\{1,\ldots, p-1\}$, $Y[s']\le Y[s'+1]$ $\forall s' \in \{1,\ldots, q-1\}$
	\begin{algorithmic}[1]
	\State Initialize an empty array $Z = ()$.
	\State $r = 1, s = 1$\label{code:initial_sort_index}
	\While {$r < p+1 \text{ and } s < q+1$}\label{code:start_while_sort}
	\If {$X[r] \le Y[s]$}\label{code:if_condition_while_sort}
		\State $Z.\text{append}\left(X[r]\right)$\label{code:start_if_condition_while_sort}
		\State $r = r + 1$\label{code:end_if_condition_while_sort}
	\Else\label{code:else_condition_while_sort}
		\State $Z.\text{append}\left(Y[s]\right)$\label{code:start_else_condition_while_sort}
		\State $s = s + 1$\label{code:end_else_condition_while_sort}
	\EndIf
	\EndWhile\label{code:end_while_sort}
	\If {$r = p+1$}
		\State\Return $\text{Concatenate}\left(Z, \left(Y[s],\ldots, Y[q]\right)\right)$\label{code:concat_first_case}
	\Else
		\State\Return $\text{Concatenate}\left(Z,\left(X[r],\ldots,X[p]\right)\right)$\label{code:concat_second_case}
	\EndIf
	\end{algorithmic} 
\end{algorithm}

\section{Mathematical properties\label{sec:math}}

In this section, we show some mathematical properties related to the proposed landmark replacement algorithm, in particular its time complexity.

\begin{theorem}{\cite{parekh1991analysis}}\label{thm:upperbound_greedy}
For an undirected graph $(V,E)$,
\begin{equation}
\left| L\right| \le \left| V\right| + 1 - \sqrt{2\left| E\right| + 1},
\end{equation}
where $L$ is the dominating set of $(V,E)$ obtained from Algorithm~\ref{alg:greedy_ds}.
\end{theorem}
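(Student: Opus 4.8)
The plan is to analyze Algorithm~\ref{alg:greedy_ds} through the ``profile'' of coverage counts it produces. Let $v_1,v_2,\ldots,v_{|L|}$ be the nodes added to $L$, in the order they are chosen, and for each $i$ let $d_i$ be the number of nodes that become covered for the first time when $v_i$ is added; write $D_i$ for this set of $d_i$ nodes. The sets $D_i$ partition $V$, so $\sum_{i=1}^{|L|} d_i=|V|$. Two further elementary facts follow from the greedy rule. First, since the set of uncovered nodes only shrinks over the iterations and the selection step always picks a node of maximum current coverage, $d_1\ge d_2\ge\cdots\ge d_{|L|}\ge 1$. Second, at the start of iteration $i$ every still-uncovered node has at most $d_i-1$ uncovered neighbours (otherwise it would have coverage exceeding $d_i$), so the subgraph induced on the nodes uncovered before iteration $i$ has maximum degree at most $d_i-1$; in particular, taking $i=1$, the maximum degree of $(V,E)$ is at most $d_1-1$.

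Next I would bound $|E|$ in terms of the profile $(d_1,\ldots,d_{|L|})$. Charge each edge to the earlier of the two iterations at which its endpoints are first covered, so that $|E|$ is the total over iterations. An edge charged to iteration $i$ has an endpoint in $D_i$ and its other endpoint in $D_i\cup D_{i+1}\cup\cdots\cup D_{|L|}$; summing over $z\in D_i$ the number of neighbours of $z$ still uncovered at iteration $i$ gives at most $d_i(d_i-1)$, and each edge inside $D_i$ is counted twice. When the node $v_i$ selected at iteration $i$ is itself uncovered at that time (which always holds at $i=1$), $v_i$ contributes a star of $d_i-1$ edges inside $D_i$, so at most $(d_i-1)^2$ edges are charged to iteration $i$; in the remaining iterations, where Algorithm~\ref{alg:greedy_ds} happens to pick an already-covered node, one only gets the weaker bound $d_i(d_i-1)$, and there one uses that $d_i$ is then strictly smaller than the coverage count of the iteration that first covered $v_i$. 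Combined with the global inequality $|E|\le |V|(d_1-1)/2$ from the maximum-degree bound, this yields an upper bound on $|E|$ depending only on $|V|$, $|L|$ and the profile.

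The final step is to maximize this bound over all admissible profiles, under $d_i$ non-increasing, $d_i\ge 1$, $\sum_i d_i=|V|$, \emph{and} the realizability constraint that after iteration $i$ the residual instance still forces $|L|-i$ further greedy steps. It is this last point that rules out wildly front-loaded profiles: for instance, on the candidate worst profile $(|V|-|L|+1,1,1,\ldots,1)$, the condition $d_i=1$ for $i\ge 2$ means each node of $D_1$ has at most one neighbour among the $|L|-1$ leftover nodes, which together with the degree bound $\le |V|-|L|$ inside the $D_1$-block trades off internal edges against pendant edges and gives exactly $|E|\le\tfrac{(|V|-|L|)(|V|-|L|+2)}{2}$, matched by a graph consisting of a near-complete block on $|V|-|L|+1$ nodes with the remaining $|L|-1$ nodes pendant or isolated. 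Granting that this is the worst case, $|E|\le\tfrac{(|V|+1-|L|)^2-1}{2}$, and rearranging gives $\sqrt{2|E|+1}\le |V|+1-|L|$, i.e.\ $|L|\le |V|+1-\sqrt{2|E|+1}$.

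I expect the first step to be routine and the edge-charging to be only mildly delicate; the real obstacle is the last step. Neither the per-iteration charge $\approx(d_i-1)^2$ nor the global degree bound $|E|\le|V|(d_1-1)/2$ is by itself enough once $|L|\ge 3$ — both exceed $\tfrac{(|V|+1-|L|)^2-1}{2}$ on the front-loaded profile — so the argument genuinely needs the realizability structure of the greedy profile, and formalizing that the ``one heavy iteration plus singletons'' profile is extremal, while also correctly handling iterations where the greedy choice is already dominated, is the part I would expect to take the most care.
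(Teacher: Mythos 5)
You should first note that the paper itself contains no proof of Theorem~\ref{thm:upperbound_greedy}: it is imported verbatim from Parekh \cite{parekh1991analysis} and only the Corollary is proved, so your argument must stand entirely on its own — and as written it does not. The preparatory material is fine: the monotonicity $d_1\ge\cdots\ge d_{|L|}$, the fact that an uncovered vertex has at most $d_i-1$ uncovered neighbours at step $i$, the charging of each edge to the iteration at which its first endpoint is covered, and the computation for the single profile $(|V|-|L|+1,1,\ldots,1)$ are all correct. The genuine gap is the step you yourself flag at the end. The only quantitative bounds you actually establish charge at most $(d_i-1)^2$ (or $d_i(d_i-1)$) edges to iteration $i$, and on the front-loaded profile these sum to about $(|V|-|L|)^2$, roughly twice the target $\bigl((|V|-|L|+1)^2-1\bigr)/2$; the subsequent appeal to ``realizability'' together with ``granting that this is the worst case'' is not a proof, because establishing that the front-loaded profile is extremal, with the sharp constant, \emph{is} the theorem. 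Worse, the proposed reduction ``bound $|E|$ by a function of the profile, then maximize over admissible profiles'' is not available with the estimates you derived: your per-iteration charges discard exactly the trade-off (between a vertex's degree inside $D_i$ and its edges to later-covered vertices) that your own front-loaded computation exploits, so $|E|$ is not controlled by the profile alone in your argument.

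The way to close it is to make that trade-off the heart of the count for \emph{every} iteration, not a special case. Writing $\mathrm{fwd}(y)$ for the number of neighbours of $y\in D_i$ in $U_{i+1}$ (the vertices still uncovered after step $i$), greedy maximality at step $i$ gives $\deg_{D_i}(y)+\mathrm{fwd}(y)\le d_i-1$, maximality at step $i+1$ gives $\mathrm{fwd}(y)\le d_{i+1}$ for every not-yet-selected covered vertex, and $\mathrm{fwd}(v_i)=0$ when the selected vertex is itself uncovered. Counting internal edges of $D_i$ with weight $\tfrac12$ per endpoint, this yields $2|E|\le\sum_i\left[d_i(d_i-1)+(d_i-1)d_{i+1}\right]$ (with $d_{|L|+1}=0$), and setting $e_i=d_i-1$, $\sum_i e_i=|V|-|L|$, the right-hand side is at most $(|V|-|L|)^2+2(|V|-|L|)$, which is exactly the claimed bound. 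Even then, the iterations in which greedy selects an already-covered vertex need a separate argument — your parenthetical claim that $d_i$ is then strictly smaller than the coverage count of the iteration that first covered $v_i$ is unsubstantiated and, as far as I can see, not what rescues the count. So the plan is salvageable, but the decisive counting and the covered-pick case are missing, and without them the proof does not go through.
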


\noindent The following Corollary of Theorem~\ref{thm:upperbound_greedy} tells us how we should set the value of $\rho$ to guarantee that the number of landmarks obtained by Algorithm \ref{alg:greedy_ds} is at most $m$.
\begin{corollary}\label{corr:upperbound_greedy}
Consider an undirected graph $(V,E)$, where $V=\{x_1,\ldots,x_n\}$. If we set the value of parameter $\rho$ such that
\begin{equation}
\left| E\right| = \left\lceil \frac{(n - m + 1)^{2} - 1}{2}\right\rceil,
\label{eq:E-for-corollary}
\end{equation}
where $E = \{(x_r, x_s): d(x_r, x_s) < \rho\}$, then $L$ obtained by Algorithm \ref{alg:greedy_ds} satisfies $\left| L\right| \le m$.
\end{corollary}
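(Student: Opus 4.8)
The plan is to derive the corollary directly from Theorem~\ref{thm:upperbound_greedy} by choosing $\rho$ so that the right-hand side bound $\left|V\right| + 1 - \sqrt{2\left|E\right| + 1}$ is at most $m$. Since $\left|V\right| = n$, the inequality $n + 1 - \sqrt{2\left|E\right| + 1} \le m$ is equivalent to $\sqrt{2\left|E\right| + 1} \ge n - m + 1$, which (assuming $n - m + 1 \ge 0$, i.e. we are in the nontrivial regime where $n \ge m$, else $L = V$ already) is equivalent to $2\left|E\right| + 1 \ge (n - m + 1)^2$, i.e. $\left|E\right| \ge \frac{(n - m + 1)^2 - 1}{2}$.

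First I would note that $\left|E\right|$ is necessarily an integer, so the smallest integer value of $\left|E\right|$ satisfying this inequality is exactly $\left\lceil \frac{(n - m + 1)^2 - 1}{2}\right\rceil$, which is the quantity appearing in~\eqref{eq:E-for-corollary}. Then I would observe that $\left|E\right|$, as a function of the threshold $\rho$, is a nondecreasing step function of $\rho$ taking values in $\{0, 1, \ldots, \binom{n}{2}\}$: as $\rho$ increases past each pairwise distance $d(x_r, x_s)$, the edge count increases. Hence one can select $\rho$ (for instance, $\rho$ slightly larger than the $\left\lceil \frac{(n-m+1)^2 - 1}{2}\right\rceil$-th smallest pairwise distance) so that $E = \{(x_r, x_s) : d(x_r, x_s) < \rho\}$ has exactly the cardinality prescribed by~\eqref{eq:E-for-corollary}. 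Substituting this choice of $\left|E\right|$ back into Theorem~\ref{thm:upperbound_greedy} gives $\left|L\right| \le n + 1 - \sqrt{2\left\lceil \frac{(n-m+1)^2-1}{2}\right\rceil + 1} \le n + 1 - \sqrt{(n-m+1)^2} = n + 1 - (n - m + 1) = m$, which is the desired conclusion.

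The routine calculation is the algebraic rearrangement and the monotonicity observation; neither is deep. The one point that warrants a little care — and which I expect to be the main (minor) obstacle — is the ceiling: one must check that rounding $\frac{(n-m+1)^2-1}{2}$ up rather than down is the correct direction, i.e. that it only strengthens the inequality $2\left|E\right| + 1 \ge (n-m+1)^2$ and does not overshoot the feasible range $\left|E\right| \le \binom{n}{2}$. The first is immediate since the ceiling is $\ge$ the exact value; the second holds because $(n - m + 1)^2 / 2 \le n^2/2$ is comfortably below $\binom{n}{2} = n(n-1)/2$ whenever $m \ge 1$ (and when it is not comfortably below, $m$ is so small relative to $n$ that one is effectively asking for a near-complete graph, which is still attainable by taking $\rho$ large). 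I would also explicitly flag the boundary assumption $n > m$, since for $n \le m$ the offline procedure of section~\ref{sub:offline_landmark} sets $L = V$ and the claim is trivial; the corollary is only of interest when $n > m$, in which case $n - m + 1 \ge 2 > 0$ and taking square roots is unproblematic.
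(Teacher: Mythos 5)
Your proposal is correct and follows essentially the same route as the paper: substitute the prescribed $\left|E\right|$ into Theorem~\ref{thm:upperbound_greedy}, use $\left\lceil \frac{(n-m+1)^2-1}{2}\right\rceil \ge \frac{(n-m+1)^2-1}{2}$ to get $\sqrt{2\left|E\right|+1} \ge n-m+1$, and conclude $\left|L\right| \le n+1-\sqrt{2\left|E\right|+1} \le m$. Your extra discussion of attainability of that edge count via the choice of $\rho$ and the $n>m$ boundary case goes beyond what the (conditional) statement requires and what the paper proves, but it does not affect correctness.
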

\begin{proof}
Using Eq.~\eqref{eq:E-for-corollary}, we obtain
\begin{align}
\sqrt{2\left| E\right| + 1} &=  \sqrt{2\left(\left\lceil \frac{(n - m + 1)^{2} - 1}{2}\right\rceil\right) + 1}\notag \\
&\ge \sqrt{2\left(\frac{(n - m + 1)^{2}- 1}{2}\right) + 1}\notag \\
&= n - m + 1.\label{eq:upper_bound_corr_1}
\end{align}
Equation~\eqref{eq:upper_bound_corr_1} implies that
\begin{equation}
n + 1 - \sqrt{2\left| E\right| + 1}\le m.
\label{eq:upper_bound_corr_2}
\end{equation}
By combining Theorem~\ref{thm:upperbound_greedy} and Eq.~\eqref{eq:upper_bound_corr_2}, we obtain
\begin{equation}
\left| L\right| \le n + 1 - \sqrt{2\left| E\right| + 1}\le m.
\end{equation}
\end{proof}

\begin{theorem}\label{thm:neighbor_degree}
Let $x$ be a node in an undirected graph $(V, E)$. If $\deg(x) > 2\sqrt{\left| E\right|}$, then there is a neighbor $x'$ of $x$ such that $\deg(x')\le\sqrt{\left|E\right|}$.
\end{theorem}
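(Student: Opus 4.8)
The plan is to argue by contradiction: suppose every neighbor $x'$ of $x$ has $\deg(x') > \sqrt{|E|}$, and show this forces too many edges. The key observation is that $\deg(x) = |\mathcal{N}(x)|$, so under the hypothesis $\deg(x) > 2\sqrt{|E|}$ there are more than $2\sqrt{|E|}$ nodes in $\mathcal{N}(x)$, and by assumption each of them has degree strictly greater than $\sqrt{|E|}$. First I would write $\sum_{v\in V}\deg(v) = 2|E|$ (the handshake lemma), and then bound the left-hand side from below by restricting the sum to $\mathcal{N}(x)$:
\begin{equation}
2|E| = \sum_{v\in V}\deg(v) \ge \sum_{v\in\mathcal{N}(x)}\deg(v) > |\mathcal{N}(x)|\cdot\sqrt{|E|} = \deg(x)\cdot\sqrt{|E|} > 2\sqrt{|E|}\cdot\sqrt{|E|} = 2|E|,
\end{equation}
which is the desired contradiction. (One should handle the degenerate case $|E| = 0$ separately, where the hypothesis $\deg(x) > 0$ cannot hold, so the statement is vacuously true; and note that degrees are nonnegative, so dropping the terms outside $\mathcal{N}(x)$ is legitimate.)

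The main thing to be careful about is the direction of the inequalities and whether they are strict. The hypothesis $\deg(x) > 2\sqrt{|E|}$ is strict, and the negation of the conclusion — "there is \emph{no} neighbor $x'$ with $\deg(x')\le\sqrt{|E|}$" — means every neighbor has $\deg(x') > \sqrt{|E|}$, also strict. Both strict inequalities feed into the chain above, so the final contradiction $2|E| > 2|E|$ is genuinely strict and the argument closes cleanly. I do not anticipate a serious obstacle here; the only subtlety is making sure that $\mathcal{N}(x)$ does not include $x$ itself (no self-loops in a geometric graph) so that $|\mathcal{N}(x)| = \deg(x)$ exactly, and that the sum $\sum_{v\in\mathcal{N}(x)}\deg(v)$ does not double-count — it does not, since it is a sum over distinct vertices.

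An alternative, slightly more hands-on phrasing avoids the handshake lemma: each neighbor $x'\in\mathcal{N}(x)$ with $\deg(x') > \sqrt{|E|}$ is incident to more than $\sqrt{|E|}$ edges, and summing incidences over the more than $2\sqrt{|E|}$ such neighbors counts each edge at most twice, giving $2|E| \ge \sum_{x'\in\mathcal{N}(x)}\deg(x') > 2\sqrt{|E|}\cdot\sqrt{|E|} = 2|E|$, the same contradiction. I would present the handshake-lemma version as the main proof since it is the cleanest, and simply remark that the bound $\deg(x') \le \sqrt{|E|}$ produced for the neighbor is exactly what lines~\ref{code:smaller_degree_x_n+1}--\ref{code:include_x_tilde} of Algorithm~\ref{alg:landmark_replacement} rely on, so that Theorem~\ref{thm:neighbor_degree} guarantees line~\ref{code:search_neighbor_x_n+1} always succeeds in finding a suitable $\tilde{x}$.
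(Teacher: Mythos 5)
Your proof is correct and follows essentially the same route as the paper's: assume every neighbor has degree exceeding $\sqrt{\left|E\right|}$, bound $\sum_{x'\in\mathcal{N}(x)}\deg(x')$ above by $2\left|E\right|$ and below by $\deg(x)\sqrt{\left|E\right|} > 2\left|E\right|$, and derive the contradiction. The extra remarks (the $\left|E\right|=0$ case, no self-loops, relevance to line~\ref{code:search_neighbor_x_n+1} of Algorithm~\ref{alg:landmark_replacement}) are fine but not needed beyond the paper's argument.
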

\begin{proof}
Suppose to the contrary that $\deg(x')>\sqrt{\left|E\right|}$ for all $x'\in\mathcal{N}(x)$. Therefore, we obtain
\begin{align}
2\left| E\right| &\ge \sum_{x'\in\mathcal{N}(x)} \deg(x')\notag\\
&>  \sum_{x'\in\mathcal{N}(x)} \sqrt{\left| E\right|}\notag\\
&= \sqrt{\left| E\right|} \cdot \left| \mathcal{N}(x)\right| \notag\\
&> 2\left|E\right|,
\end{align}
which is a contradiction.
\end{proof}
\begin{remark}
This theorem extends a theorem given in \cite{hjuler2019dominating}, which showed that $\deg(x) > 2 \sqrt{\left|E\right|} + 1$ leads to $\deg(x') > \sqrt{\left|E\right|}$.
\end{remark}
\begin{remark}
A proof of the theorem in \cite{hjuler2019dominating}, which is omitted in \cite{hjuler2019dominating}, is similar to this proof.
\end{remark}

\begin{lemma}
Assume that we have constructed the adjacency list of graph $(V,E)$, where $V = \{x_1,\ldots, x_n\}$, and $E = \{(x_r, x_s) : d(x_r, x_s) < \rho \}$, where $r, s\in\{1,\ldots, n\}$; the adjacency list is the list of neighbors of each node. Then, the time complexity for computing $OnlyBy(z)$ for a given $z\in L$ is $O(mn)$.
\label{lemma:onlyby_time_complexity}
\end{lemma}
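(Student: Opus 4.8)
The plan is to prove the bound constructively: exhibit an explicit algorithm that computes $OnlyBy(z)$ from the adjacency list and then bound its running time line by line. The first step is to record a small observation that simplifies the test in Eq.~\eqref{eq:unique_cover}. Every $y\in\mathcal{N}(z)\cup\{z\}$ satisfies $z\in\left(\mathcal{N}(y)\cup\{y\}\right)\cap L$, because $z\in L$ and, by symmetry of adjacency, $y\in\mathcal{N}(z)$ iff $z\in\mathcal{N}(y)$. Hence, for such $y$, the quantity $\left|\left(\mathcal{N}(y)\cup\{y\}\right)\cap L\right|$ is always at least $1$, and it equals $1$ if and only if $\left(\mathcal{N}(y)\cup\{y\}\right)\cap\left(L\setminus\{z\}\right)=\emptyset$; that is, $y\in OnlyBy(z)$ iff $y$ lies in the closed neighborhood of no landmark other than $z$.

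Next I would describe the algorithm. Allocate a Boolean array $\mathrm{dom}[\cdot]$ indexed by $V$ and set every entry to \texttt{false} ($O(n)$ time). Then loop over the landmarks $w\in L\setminus\{z\}$: for each such $w$, set $\mathrm{dom}[w]\leftarrow\texttt{true}$ and, walking down the adjacency list of $w$, set $\mathrm{dom}[u]\leftarrow\texttt{true}$ for every $u\in\mathcal{N}(w)$. After this loop, $\mathrm{dom}[y]=\texttt{true}$ precisely when $y$ belongs to the closed neighborhood of some landmark in $L\setminus\{z\}$, which by symmetry of adjacency is exactly $\left(\mathcal{N}(y)\cup\{y\}\right)\cap\left(L\setminus\{z\}\right)\neq\emptyset$. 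Finally, walk down the adjacency list of $z$ and return $OnlyBy(z)=\{y\in\mathcal{N}(z)\cup\{z\}:\mathrm{dom}[y]=\texttt{false}\}$; combined with the observation of the previous paragraph, this output is exactly the set defined in Eq.~\eqref{eq:unique_cover}.

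The complexity accounting then proceeds as follows. Initialization costs $O(n)$. The marking loop performs, for each $w\in L\setminus\{z\}$, an amount of work $O(1+\deg(w))$, so its total cost is $O\!\left(|L|+\sum_{w\in L}\deg(w)\right)$; since $|L|\le m$ and $\deg(w)\le n-1$ for every node $w$, this is $O(m+mn)=O(mn)$. The final scan over $\mathcal{N}(z)\cup\{z\}$ costs $O(\deg(z)+1)=O(n)$. Summing the three phases gives $O(mn)$. (If $OnlyBy$ is computed repeatedly, one keeps a list of the at most $O(mn)$ entries of $\mathrm{dom}$ that were flipped to \texttt{true} and resets only those, which again is $O(mn)$ and leaves the bound unchanged.)

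The step I expect to require the most care is exactly this complexity analysis of the marking loop, because the most natural competing algorithm does \emph{not} obviously achieve $O(mn)$: iterating over the $O(n)$ nodes $y$ in the closed neighborhood of $z$ and, for each, scanning $\mathcal{N}(y)$ to detect a second landmark costs $\Theta\!\left(\sum_y\deg(y)\right)=\Theta(|E|)$, which can be $\Theta(n^2)$. The point to emphasize is that pushing domination marks \emph{from} the at most $m$ landmarks, using the adjacency-list representation so that each landmark's scan costs only $O(\deg(w))\le O(n)$ rather than $O(|E|)$, is what yields the $O(mn)$ form; one could alternatively record the slightly tighter bound $O(n+\sum_{w\in L}\deg(w))$, but $O(mn)$ is what the lemma asserts.
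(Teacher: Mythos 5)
Your proof is correct and follows essentially the same strategy as the paper's: mark, via a length-$n$ boolean array filled by walking the adjacency lists of the at most $m$ landmarks in $L\setminus\{z\}$ (cost $O(mn)$), which nodes are covered by some landmark other than $z$, and then filter the closed neighborhood of $z$ in $O(n)$ time. The only differences are cosmetic — the paper uses two indicator arrays ($A$ for $\mathcal{N}(z)\cup\{z\}$ and $B$ for coverage by $L\setminus\{z\}$) and treats the isolated and $\mathcal{N}(z)\subseteq L$ cases separately, whereas your single array plus a direct scan of $z$'s list handles all cases uniformly with the same $O(mn)$ bound.
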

\begin{proof}
First, if $z$ is an isolated node, then $\mathcal{N}(z)\cup\{z\} = \{z\}$ and $\left| \left(\mathcal{N}(z)\cup\{z\}\right)\cap L\right| = \left| \{z\}\right| = 1$. Therefore, we obtain $OnlyBy(z) = \{z\}$, and the time complexity for computing $OnlyBy(z)$ is $O(1)$. Now, we assume that $z$ is not isolated. Then, we check if $\mathcal{N}(z) \subseteq L$. To do so, we examine the list of neighbors of $z$. If $\mathcal{N}(z) \subseteq L$, then every neighbor of $z$ is a landmark, and we obtain $OnlyBy(z) = \emptyset$. The time complexity for checking whether or not $\mathcal{N}(z) \subseteq L$ is
\begin{equation}
O\left(\left|L\right|\deg(z)\right)\le O\left(mn\right).\label{eq:time_complexity_neighbor_subset_landmark}
\end{equation}

If $\mathcal{N}(z)\nsubseteq L$, then we create two arrays of length $n$, denoted by $A = (A[1], \ldots, A[n])$ and $B = (B[1], \ldots, B[n])$ and initialize each entry of $A$ and $B$ by zero. Then, we set $A[r]= 1$ for each $x_r \in\mathcal{N}(z)\cup\{z\}$. Similarly, we set $B[r] = 1$ for each $x_r$ satisfying $x_r \in \mathcal{N}(z')\cup\{z'\}$ for any $z' \in L\setminus \{z\}$.
The time complexity of setting $A$ and $B$ is $O(n)$ and $O(mn)$, respectively.
%
%
Because every node is covered by a landmark, either $A\left[r \right]=1$ or $B\left[r \right]=1$ holds true for all $r \in\{1,\ldots, n\}$. Therefore, there are three cases to consider: $(A\left[r \right], B\left[r \right]) = (1, 1)$, $(1, 0)$, and $(0, 1)$.

If $(A\left[r \right], B\left[r \right]) = (1, 1)$, then node $x_r$ is covered by both $z$ and at least one node in $L\setminus\{z\}$. In this case, we do not add $x_r$ to $OnlyBy(z)$. If $(A\left[r \right], B\left[r \right]) = (1, 0)$, then we add $x_r$ to $OnlyBy(z)$ because $z$ is the only landmark that covers $x_r$. If $(A\left[r \right], B\left[r\right]) = (0, 1)$, then we do not add $x_r$ to $OnlyBy(z)$ because $x_r$ is not a neighbor of $z$ while $x_r$ is covered by at least one node in $L\setminus\{z\}$. Therefore, the total time complexity for constructing $OnlyBy(z)$ from the arrays $A$ and $B$ is $O(n)$. 

By combining these time complexities, we find that the computation of $OnlyBy\left(z\right)$ requires
$O(1) + O\left(mn\right) + O\left(n\right) + O\left(mn\right) + O\left(n\right) = O\left(mn\right)$
time.
\end{proof}

We now analyze the time complexity for executing Algorithm~\ref{alg:landmark_replacement} upon the addition of a single new data point, $x_{n+1}$. We assume that the pairwise distance between the existing points has been calculated and sorted and that the pairwise distance between each existing point and $x_{n+1}$ has been calculated but not sorted yet. We proceed by sorting the array of pairwise distance between each existing point and $x_{n+1}$. Then, we apply Algorithm~\ref{alg:linear_sort} to the two sorted arrays to obtain the sorted array of the pairwise distances of all points including $x_{n+1}$. Algorithm~\ref{alg:linear_sort}, known as the two-way merge sorting algorithm, runs in $O(p+q)$ time, where $p$ and $q$ are the lengths of each input array~\cite{knuth1998art}.
\begin{theorem}
\label{thm:time_complexity_single_arrival}
Upon the arrival of $x_{n+1}$, we assume that $d(x_1, x_{n+1})$, $\ldots$, $d(x_n, x_{n+1})$ have been computed, that $\{ d(x_1, x_2), d(x_1, x_3), \ldots, d(x_{n-1}, x_n) \}$ has been sorted, and that we have updated the adjacency list. Then, the time complexity of executing Algorithm~\ref{alg:landmark_replacement}
with $V=\{x_1,\ldots, x_{n+1}\}$ and $E = \{(x_r, x_s) : r, s \in \{1, \ldots, n+1\}, d(x_r, x_s) < \rho \}$ as input is $O\left(n^2 + e_{n+1}mn\right)$, where $e_{n+1}$ is the number of edges added upon executing Algorithm~\ref{alg:landmark_replacement}, and $m = \left|L\right|$, i.e., the number of landmarks.
\end{theorem}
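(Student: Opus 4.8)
The plan is to bound the running time of Algorithm~\ref{alg:landmark_replacement} line by line under the stated assumptions, charging each primitive operation against data structures that are maintained incrementally across edge insertions, and then to collapse the resulting sum. Lines~\ref{code:our_contribution_1}, the loop of lines~\ref{code:start_first_while_loop}--\ref{code:our_contribution_2}, and line~\ref{code:add_edge_for_algorithm_remove} all ask for the pair of nodes realizing the smallest pairwise distance that is not yet an edge, so the first thing I would do is build, once, the sorted array of all $\binom{n+1}{2}$ pairwise distances: by hypothesis the $\binom{n}{2}$ distances among $x_1,\ldots,x_n$ are already sorted, so I sort the $n$ new distances $d(x_1,x_{n+1}),\ldots,d(x_n,x_{n+1})$ in $O(n\log n)$ time and merge the two sorted lists with Algorithm~\ref{alg:linear_sort} in $O\!\left(\binom{n}{2}+n\right)=O(n^2)$ time. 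Keeping a pointer into this array, each later $\mathrm{argmin}$/$\min$ query is answered in $O(1)$ by advancing the pointer; alongside it I would maintain, under each edge insertion, the degrees $\deg(x_r)$, the value $\left|E\right|$, and a Boolean membership array for $L$, each supporting $O(1)$ query and update. This $O(n^2)$ merge is what produces the $n^2$ term of the bound.

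Then I would walk through the body. The test in line~\ref{code:smaller_degree_x_n+1}, rewritten as $\deg(x_{n+1})^2\le 4\left|E\right|$ to avoid a square root, and the update in line~\ref{code:include_x_n+1} are $O(1)$; in the else branch we have $\deg(x_{n+1})>2\sqrt{\left|E\right|}$, so Theorem~\ref{thm:neighbor_degree} guarantees a neighbor $\tilde x$ of $x_{n+1}$ with $\deg(\tilde x)\le\sqrt{\left|E\right|}$, which a scan of the adjacency list of $x_{n+1}$ finds in $O(\deg(x_{n+1}))=O(n)$ time, after which line~\ref{code:include_x_tilde} is $O(1)$. Every edge insertion in lines~\ref{code:first_epsilon_update}--\ref{code:our_contribution_2} is $O(1)$ (a pointer advance plus $O(1)$ updates of $\rho$, $E$, the adjacency lists, the degrees and $\left|E\right|$), and the loop guard is $O(1)$ via the membership array, so this block costs $O(a)$, where $a\ge 0$ is the number of edges it inserts. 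For the subroutine $\mathrm{REMOVE}$ of Algorithm~\ref{alg:landmark_removal_condition}, a single call does a constant number of $O(1)$ membership tests and at most two evaluations of $OnlyBy(\cdot)$, so by Lemma~\ref{lemma:onlyby_time_complexity} it costs $O(mn)$; line~\ref{code:initial_remove_comp} is one such call, each iteration of the loop of lines~\ref{code:our_contribution_3}--\ref{code:end_while_loop_remove} is one $O(1)$ edge insertion plus one $\mathrm{REMOVE}$ call, i.e.\ $O(b\,mn)$ if it inserts $b\ge 0$ edges, and line~\ref{code:our_contribution_4} deletes one landmark in $O(m)$ time.

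Summing gives $O(n^2)+O(n)+O(a)+O(mn)+O(b\,mn)+O(m)$. The total number of edges Algorithm~\ref{alg:landmark_replacement} inserts is $e_{n+1}=1+a+b$, so $a,b\le e_{n+1}$; and $L\subseteq V$ gives $m\le n+1$, hence $mn=O(n^2)$ and $m,n=O(n^2)$, while $mn\ge 1$ gives $e_{n+1}=O(e_{n+1}mn)$, so every term collapses into $O(n^2+e_{n+1}mn)$, which is the claim. I expect the one delicate point to be the $O(1)$-per-call accounting of the three $\mathrm{argmin}$/$\min$ operations: this is exactly what the pre-sorted distance array buys, and building that array is the unavoidable source of the $O(n^2)$ term --- without it, each such query would cost $\Theta(n^2)$ and the bound would degrade to $O(e_{n+1}n^2)$. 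A secondary point to state carefully is that a single $\mathrm{REMOVE}$ call is only $O(mn)$, which is precisely Lemma~\ref{lemma:onlyby_time_complexity} applied to the at most two $OnlyBy$ evaluations inside Algorithm~\ref{alg:landmark_removal_condition}.
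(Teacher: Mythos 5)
Your proposal is correct and follows essentially the same route as the paper's proof: an $O(n\log n)$ sort of the new distances plus an $O(n^2)$ merge via Algorithm~\ref{alg:linear_sort}, $O(1)$ per edge insertion thanks to the pre-sorted distance array, Lemma~\ref{lemma:onlyby_time_complexity} giving $O(mn)$ per call to REMOVE, and a final summation collapsing to $O(n^2+e_{n+1}mn)$. The only differences are cosmetic (e.g., you scan all of $\mathcal{N}(x_{n+1})$ in $O(n)$ where the paper inspects $O(\sqrt{|E|})$ neighbors via Theorem~\ref{thm:neighbor_degree}, and you use an $O(1)$ membership array where the paper charges $O(m)$), none of which affects the bound.
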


\begin{proof}
First, we sort the array $\{ d(x_1, x_{n+1}), \ldots, d(x_n, x_{n+1}) \}$, which requires $O(n\log(n))$ time. Then, we use Algorithm~\ref{alg:linear_sort} to merge sort the sorted array of $\{ d(x_1, x_{n+1}), \ldots, d(x_n, x_{n+1}) \}$ and the sorted array of $\{ d(x_r, x_s) : r, s \in \{1, \ldots, n \}, r \neq s \}$. This merge sort requires $O(n^2)$ time because the length of the second array is $O(n^2)$. To summarize these two steps, we have sorted $d(x_r, x_s)$ with $\forall r, \forall s \in \{1, \ldots, n+1 \}$, $r \neq s$ in
$C_1 = 
O(n\log n) + O(n^2) = O(n^2)$
time. 

Theorem~\ref{thm:neighbor_degree} implies that one only needs to inspect at most $2\left\lceil\sqrt{\left|E\right|}\right\rceil +1$ neighbors of $x_{n+1}$ to be able to obtain a neighbor whose degree is less than or equal to $\sqrt{\left| E\right|}$. Therefore, we can carry out line~\ref{code:search_neighbor_x_n+1} of Algorithm~\ref{alg:landmark_replacement} in $O\left(\sqrt{\left| E\right|}\right)$ time. Because we have computed and sorted the pairwise distances between the nodes, lines~\ref{code:our_contribution_1}--\ref{code:first_epsilon_update} run in $O\left(1\right)$ time. Therefore, the time complexity of carrying out lines~\ref{code:search_neighbor_x_n+1}--\ref{code:first_epsilon_update} is 
$C_2 =
O\left(\sqrt{\left| E\right|}\right)$.
We denote by $e^{(1)}_{n+1}$ the number of edges that one adds to the network in lines~\ref{code:start_first_while_loop}--\ref{code:our_contribution_2}. Moreover, we update the adjacency list upon adding $e^{(1)}_{n+1}$ edges to the network. Therefore, the total cost of carrying out lines~\ref{code:start_first_while_loop}--\ref{code:our_contribution_2} is
$C_3 =
O\left(e^{(1)}_{n+1}\right) + O\left(2e^{(1)}_{n+1}\right) = O\left(e^{(1)}_{n+1}\right)$,
where $O\left(2e^{(1)}_{n+1}\right)$ comes from the time complexity for updating the adjacency list. Lemma~\ref{lemma:onlyby_time_complexity} guarantees that line~\ref{code:initial_remove_comp} runs in 
$C_4 =
O\left(m(n+1)\right)$
time. 

In lines~\ref{code:our_contribution_3}--\ref{code:end_while_loop_remove}, for each added edge, we update the adjacency list and then call Algorithm~\ref{alg:landmark_removal_condition}. Algorithm~\ref{alg:landmark_removal_condition} requires that we (i) check whether or not one or both input nodes belong to $L$ and then (ii) compute an $OnlyBy$. The cost of (i) is $O\left(m\right)$. Lemma~\ref{lemma:onlyby_time_complexity} indicates that the cost of (ii) is $O(m(n+1))$. The total time complexity for updating the adjacency list and carrying out lines~\ref{code:our_contribution_3}--\ref{code:end_while_loop_remove} is
$C_5 =
O\left(2e^{(2)}_{n+1}\right) + O\left(e^{(2)}_{n+1}m\right) + O\left(e^{(2)}_{n+1}m(n+1)\right)  = O\left(e^{(2)}_{n+1}m(n+1)\right)$,
where $e^{(2)}_{n+1}$ is the number of edges added upon the execution of lines~\ref{code:our_contribution_3}--\ref{code:end_while_loop_remove}. 

Finally, we
add $C_1$, $C_2$, $C_3$, $C_4$, and $C_5$,
and use $\left| E\right| \le n(n-1)/2$ to find the total time complexity of executing Algorithm~\ref{alg:landmark_replacement} as follows:
\begin{align}
O\left(n^2 + \sqrt{\left| E\right|} + e^{(1)}_{n+1} + m(n+1) + e^{(2)}_{n+1}m(n+1)\right) &= O\left(n^2 + e^{(1)}_{n+1} + e^{(2)}_{n+1}m(n+1)\right)\notag\\
&\le O\left(n^2 + \left[e^{(1)}_{n+1} + e^{(2)}_{n+1}\right]m(n+1)\right)\notag\\
&= O\left(n^2 + e_{n+1}mn\right).
\label{eq:time_complexity_single_arrival}
\end{align}
\end{proof}

\begin{corollary}
Suppose that we sequentially add $x_{n+1},x_{n+2},\ldots, x_{n'}$ with possible landmark replacements
by iterating Algorithm~\ref{alg:landmark_replacement} with $\left| L\right| = m$. We further assume that the distance between the new data point, $x_{\ell}$, and each existing point, i.e., $d(x_r, x_\ell)$, $r \in \{1, \ldots, \ell-1\}$, has been computed upon the addition of $x_\ell$ and that $\{ d(x_1, x_2), d(x_1, x_3), \ldots, d(x_{\ell-2}, d_{\ell-1}) \}$ has been sorted. Then, the time complexity for iterating Algorithm~\ref{alg:landmark_replacement} over $x_{n+1},x_{n+2},\ldots, x_{n'}$ is $O\left(m{n'}^3\right)$.
\label{corr:time_complexity_sequential_arrival}
\end{corollary}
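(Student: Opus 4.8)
The plan is to invoke Theorem~\ref{thm:time_complexity_single_arrival} once per arrival $x_\ell$, $\ell = n+1, \ldots, n'$, sum the per-arrival costs, and control the total via an amortized bound on the number of edges ever inserted into the geometric graph.

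First I would verify that the hypotheses of Theorem~\ref{thm:time_complexity_single_arrival} are reestablished before each call. By assumption, when $x_\ell$ arrives the distances $d(x_r, x_\ell)$, $r\in\{1,\ldots,\ell-1\}$, are available and the pairwise distances among $x_1,\ldots,x_{\ell-1}$ are already sorted; updating the adjacency list of the new node and recording its incident edges costs $O(\ell)$ by scanning those $\ell-1$ distances against the current $\rho$. The proof of Theorem~\ref{thm:time_complexity_single_arrival} itself produces, as a byproduct of its merge-sort step, the sorted array of \emph{all} pairwise distances among $x_1,\ldots,x_\ell$, and it maintains the adjacency list as edges are inserted. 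Hence the inputs required to apply the theorem to $x_{\ell+1}$ hold, and by induction the theorem applies at every step, giving a cost of $O\left(\ell^2 + e_\ell m\ell\right)$ for the $\ell$-th arrival, where $e_\ell$ is the number of edges inserted during that execution of Algorithm~\ref{alg:landmark_replacement}.

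Next I would sum over $\ell = n+1,\ldots,n'$. The quadratic terms contribute $\sum_{\ell=n+1}^{n'} O(\ell^2) = O\left({n'}^3\right)$, which also absorbs the $O(\ell)$ per-step bookkeeping above. For the second terms, using $\ell\le n'$ gives $\sum_{\ell=n+1}^{n'} O(e_\ell m\ell) \le m n'\sum_{\ell=n+1}^{n'} e_\ell$. The crucial point is that edges are only ever added to the geometric graph---raising $\rho$ never deletes an edge---and the final graph on $n'$ nodes has at most $\binom{n'}{2}$ edges; therefore the total number of edge insertions over the entire run satisfies $\sum_{\ell=n+1}^{n'} e_\ell \le \binom{n'}{2} = O\left({n'}^2\right)$. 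Combining, the overall cost is $O\left({n'}^3\right) + m n'\cdot O\left({n'}^2\right) = O\left(m{n'}^3\right)$, since $m\ge 1$.

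The one genuine subtlety---and the step I would treat most carefully---is this amortization: bounding each $e_\ell$ separately by the worst-case $O({n'}^2)$ would only yield the weaker $O(m{n'}^4)$, so it is essential to argue globally that all the edge insertions together are bounded by the total edge budget $O({n'}^2)$. I would also make sure the edges created when $x_\ell$ first connects to its in-range neighbors are, like the ones added inside Algorithm~\ref{alg:landmark_replacement}, monotone additions to the same graph, so that they too fall under the $O({n'}^2)$ global count and do not need to be tracked separately.
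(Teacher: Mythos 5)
Your proposal is correct and follows essentially the same route as the paper: apply Theorem~\ref{thm:time_complexity_single_arrival} per arrival to get $O\left(\ell^2 + e_\ell m\ell\right)$, sum the quadratic terms to $O\left({n'}^3\right)$, and bound $\sum_{\ell=n+1}^{n'} e_\ell$ by $\binom{n'}{2}$ since edges are only ever added, yielding $O\left(m{n'}^3\right)$. Your explicit emphasis on the amortized edge count and on re-establishing the theorem's hypotheses before each call is exactly the (implicit) content of the paper's argument.
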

\begin{proof}
For each new $x_\ell$ with $\ell\in\{n+1,\ldots, n'\}$, Theorem~\ref{thm:time_complexity_single_arrival} implies that running Algorithm~\ref{alg:landmark_replacement} requires $O\left((\ell-1)^2 + e_\ell m(\ell-1)\right)=O\left(\ell^2 + e_\ell m\ell\right)$ time. Therefore, the time complexity of iterating Algorithm~\ref{alg:landmark_replacement} from $x_{n+1}$ to $x_{n'}$ is
\begin{align}
\sum_{\ell =n+1}^{n'} O\left(\ell^2 + e_\ell m\ell \right) 
%
%
&\le \sum_{\ell=n+1}^{n'} O\left(\ell^2\right) + O\left(mn'\right)\sum_{\ell=n+1}^{n'} O\left(e_\ell\right)\notag\\
&\le O\left({n'}^3\right) + O\left(mn'\right)O\left(\binom{n'}{2}\right)\notag\\
&= O\left(m{n'}^3\right).
\end{align}
\end{proof}

To summarize this section, we first related the distance threshold for the geometric graph, $\rho$, and the number of landmarks allowed, $m$, for 
Algorithm~\ref{alg:greedy_ds}, which computes the dominating set in an offline manner
(Theorem~\ref{thm:upperbound_greedy} and Corollary~\ref{corr:upperbound_greedy}). In the remainder of this section, we mathematically derived the time complexity of our online landmark replacement algorithm
(i.e., Algorithm~\ref{alg:landmark_replacement}),
in Theorem~\ref{thm:time_complexity_single_arrival} for a single landmark replacement and Corollary~\ref{corr:time_complexity_sequential_arrival} for sequential replacements.

\section{Landmark multidimensional scaling}

The landmark replacement method proposed in section~\ref{sub:proposed_online_landmark} works for any out-of-sample dimensionality reduction method as long as the distance between each pair of data points in the original space is defined. In the remainder of this article, we use the LMDS to evaluate the proposed landmark replacement method mathematically and numerically.

\subsection{Algorithms}

In this section, we briefly review the LMDS \cite{de2004sparse}, which is an out-of-sample extension of the classical MDS.

The classical MDS works as follows. We consider the set of $n$ data points to be embedded, $S= \{ x_1, \ldots, x_n \}$, with the distance between two data points being given by $d(x_r, x_s)$. The MDS is a map from $S$ to $\mathbb{R}^k$. To construct the MDS, we denote the $n \times n$ squared distance matrix by $\Delta = [\Delta_{rs}]$, where $\Delta_{rs} = d^2(x_r,x_s)$. We then define the double mean-centered dot product matrix by $D = -\frac{1}{2}H_n\Delta H_n$, where $H_n = I_n-\frac{1}{n}J_n$, matrix $I_n$ is the $n\times n$ identity matrix, and $J_n$ is the $n\times n$ matrix in which all the entries are equal to 1. The coordinates of $x_r$ in $\mathbb{R}^k$ after the mapping is the $r$th column of
\begin{equation}
    L_k = \begin{bmatrix}
    \sqrt{\lambda_1}v_1^{\top}\\
    \vdots \\
    \sqrt{\lambda_k}v_k^{\top}
    \end{bmatrix},
    \label{cMDS_coordinates}
\end{equation}
which is a $k \times n$ matrix. In Eq.~\eqref{cMDS_coordinates},
$\lambda_r$ is the $r$th largest eigenvalue of $D$, which we assume to be positive, $v_r$ is the right eigenvector associated with eigenvalue $\lambda_r$, normalized in terms of the $\ell^2$-norm, and ${}^\top$ represents the transposition.
The time complexity of the MDS is $O(n^3)$ owing to the eigenvalue and eigenvector computation~\cite{gisbrecht2015metric}. Note that the MDS preserves the pairwise distance between all pairs of data points in the Euclidean embedding space if and only if matrix $D = -\frac{1}{2}H_n\Delta H_n$ is positive semidefinite \cite{graepel1999classification, pekalska2001generalized}.

The LMDS only applies the MDS to a subset of the entire data set called the landmarks \cite{de2004sparse, Desilva2003Nips, Platt2005Aistats}. The LMDS then determines the mapping of other arbitrary data points with the help of the landmarks. We denote by $L =\{x_{i_1},\ldots, x_{i_m}\}\subseteq S$ the set of landmarks and apply the MDS to any $x \in L$ using Eq.~\eqref{cMDS_coordinates}. The coordinate in $\mathbb{R}^k$ of an arbitrary out-of-sample data point $x\notin L$ is given by
\begin{equation}
    \psi(x) = - \frac{1}{2}L'_k(\delta_x - \delta_\mu),
    \label{LMDS_arbitrary_coordinates}
\end{equation}
where
\begin{equation}
    L'_k = \begin{bmatrix}
    \frac{1}{\sqrt{\lambda_1}}v_1^{\top}\\
    \vdots \\ 
    \frac{1}{\sqrt{\lambda_k}}v_k^{\top}
    \end{bmatrix} \in \mathbb{R}^{k\times m},
    \label{eq:LMDS_eigen_projection}
\end{equation}
\begin{equation}
    \delta_{x} = \begin{bmatrix}
    d^2(x,x_{i_1})\\ \
    \vdots \\ 
    d^2(x,x_{i_m})
    \end{bmatrix},
\end{equation}
\begin{equation}
        \delta_\mu = \frac{1}{m}\sum_{r=1}^{m} \begin{bmatrix}
    d^2(x_{i_1},x_{i_r})\\ \
    \vdots \\ 
    d^2(x_{i_m},x_{i_r})
    \end{bmatrix}.
        \label{LMDS_avg_coordinates}
\end{equation}

\subsection{Goodness of embedding}

To assess the quality of the LMDS, we use the following two goodness of fit criteria.
First, the normalized stress function~\cite{borg2005modern, duin2005dissimilarity, kruskal1964multidimensional} is defined by
\begin{equation}
\sigma = \sqrt{\frac{\sum_{r=1}^n \sum_{s = 1}^{r - 1} \left[ d(x_{s},x_{r}) - \left|\psi(x_{s}) - \psi(x_{r}) \right|_2 \right]^2}
{\sum_{r=1}^n \sum_{s = 1}^{r-1} d^2(x_{s},x_r)}}\,.
\label{eq:stress_function}
\end{equation}
Note that $\left|\psi(x_{s}) - \psi(x_{r}) \right|_2$ is the Euclidean distance between the two data points in the embedding space. One obtains a smaller $\sigma$ if the MDS better preserves the distance between pairs of data points in the original space. A guideline of acceptable $\sigma$ values is $\sigma < 0.15$~\cite{borg2005modern}.

Second, we denote by $\sigma_L$ the normalized stress computed only on the basis of the set of landmarks. A small $\sigma_L$ value indicates that the MDS well preserves the distance structure for the set of $m$ landmarks.
%

\section{Dislocation of the embedding coordinates upon landmark replacement\label{sec:displacement}}

In this section, we mathematically evaluate how much a data point moves in the embedding space when one replaces a landmark with another data point. We provide relevant basic definitions and theorems in electronic supplementary material section~\ref{sec:linear_operators}.

\begin{definition}
Given a pair of $a\times b$ real-valued matrices $P=[p_{rs}]$ and $Q=[q_{rs}]$, the Frobenius inner product is defined by
\begin{equation}
\left\langle P,Q\right\rangle_F = \sum_{r=1}^{a}\sum_{s=1}^{b}p_{rs}q_{rs}.
\end{equation}
Additionally, we define the induced Frobenius norm by
\begin{equation}
\left\| P\right\|_F = \sqrt{\left\langle P,P\right\rangle_F}
\end{equation}
and the Frobenius distance by
\begin{equation}
d_F(P,Q) = \left\|P-Q\right\|_F.
\end{equation}
\end{definition}

\begin{theorem} (De Silva and Tenenbaum \cite{de2004sparse})
Let $\Delta$ be the squared-distance matrix of $m$ landmarks, and let $\lambda_1\ge \lambda_2\ge\cdots\ge \lambda_m$ be the eigenvalues of $D = -\frac{1}{2}H_m\Delta H_m$ with $\lambda_k > 0$ and $\lambda_k > \lambda_{k+1}$, where $k$ is the dimension of the embedding space, $\mathbb{R}^k$. Consider a perturbation of $\Delta$ given by $\tilde{\Delta} = \Delta + \varepsilon\Phi + O(\varepsilon^2)$, where $| \varepsilon | \ll 1$. Then, the $L'_k$ matrix after the perturbation, denoted by $\tilde{L}'_k$, is written as
\begin{equation}
\tilde{L}'_k = L'_k + \varepsilon\varphi + O(\varepsilon^2),
\end{equation}
where $\varphi \in \mathbb{R}^{k \times m}$ satisfies
\begin{equation}
\left\|\varphi\right\|_F \le \left[\frac{1}{4\lambda_k^{3/2}} + \frac{1}{2\lambda_k^{1/2}(\lambda_k - \lambda_{k+1})}\right]\left\|\Phi\right\|_F.
\end{equation}
%
\label{thm:eigen_projection_bound}
\end{theorem}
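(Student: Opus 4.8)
The plan is to treat this as a first-order perturbation problem for the eigendecomposition of the symmetric matrix $D = -\frac{1}{2}H_m\Delta H_m$. First I would write $\tilde{D} = -\frac{1}{2}H_m\tilde{\Delta}H_m = D + \varepsilon\Psi + O(\varepsilon^2)$, where $\Psi = -\frac{1}{2}H_m\Phi H_m$, and note the crucial norm bound $\|\Psi\|_F \le \frac{1}{2}\|H_m\|_{\mathrm{op}}^2\|\Phi\|_F = \frac{1}{2}\|\Phi\|_F$, since $H_m$ is an orthogonal projection and hence has operator norm $1$ (this is where the factor $1/2$ in the final bound originates). Then, for each $r \le k$, standard first-order eigenvalue/eigenvector perturbation theory gives $\tilde\lambda_r = \lambda_r + \varepsilon v_r^\top \Psi v_r + O(\varepsilon^2)$ and $\tilde v_r = v_r + \varepsilon\, w_r + O(\varepsilon^2)$ with $w_r = \sum_{j\neq r} \frac{v_j^\top \Psi v_r}{\lambda_r - \lambda_j} v_j$, valid because the hypotheses $\lambda_k > 0$ and $\lambda_k > \lambda_{k+1}$ guarantee the first $k$ eigenvalues stay positive and (at least $\lambda_k$) stays isolated from below under a small perturbation.

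Next I would assemble $\tilde L'_k$. Each row of $L'_k$ is $\lambda_r^{-1/2} v_r^\top$, so I expand $\tilde\lambda_r^{-1/2} = \lambda_r^{-1/2}\big(1 - \tfrac{1}{2}\varepsilon\lambda_r^{-1}(v_r^\top\Psi v_r) + O(\varepsilon^2)\big)$ and multiply by $\tilde v_r^\top$. Collecting the $O(\varepsilon)$ term, the $r$th row of $\varphi$ is
\begin{equation}
\varphi_r = -\tfrac{1}{2}\lambda_r^{-3/2}(v_r^\top\Psi v_r)\, v_r^\top + \lambda_r^{-1/2} w_r^\top.
\end{equation}
Then I bound $\|\varphi\|_F$. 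For the first piece, $|v_r^\top\Psi v_r| \le \|\Psi\|_{\mathrm{op}} \le \|\Psi\|_F$ and $\|v_r\|_2 = 1$, and since $\lambda_r \ge \lambda_k$ we get a contribution $\le \frac{1}{2}\lambda_k^{-3/2}\|\Psi\|_F$ per row; one would want to be slightly careful about how the $k$ rows are combined, but the cleanest route is to bound each row's Frobenius norm and use that the rows are... here one must be careful, as naively summing $k$ row norms gives an extra factor — I expect the intended bound treats the worst row or uses orthogonality of the $v_r$, so I would lean on the fact that $\|\varphi\|_F^2 = \sum_r \|\varphi_r\|_2^2$ and argue the dominant row controls the bound, or more likely the theorem as stated in \cite{de2004sparse} is bounding the per-row perturbation; I would follow their convention. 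For the second piece, $\|w_r\|_2^2 = \sum_{j\neq r} \frac{(v_j^\top\Psi v_r)^2}{(\lambda_r-\lambda_j)^2}$, and since the smallest relevant gap is $\lambda_k - \lambda_{k+1}$ (and $|\lambda_r - \lambda_j| \ge \lambda_k - \lambda_{k+1}$ for the most dangerous terms while other gaps only help), together with $\sum_{j\neq r}(v_j^\top\Psi v_r)^2 \le \|\Psi v_r\|_2^2 \le \|\Psi\|_F^2$, this yields $\|w_r\|_2 \le \frac{\|\Psi\|_F}{\lambda_k-\lambda_{k+1}}$, so $\lambda_r^{-1/2}\|w_r\|_2 \le \frac{\|\Psi\|_F}{\lambda_k^{1/2}(\lambda_k-\lambda_{k+1})}$.

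Finally, combining the two pieces via the triangle inequality and substituting $\|\Psi\|_F \le \frac{1}{2}\|\Phi\|_F$ gives
\begin{equation}
\|\varphi\|_F \le \left[\frac{1}{2}\lambda_k^{-3/2} + \frac{1}{\lambda_k^{1/2}(\lambda_k-\lambda_{k+1})}\right]\cdot\tfrac{1}{2}\|\Phi\|_F = \left[\frac{1}{4\lambda_k^{3/2}} + \frac{1}{2\lambda_k^{1/2}(\lambda_k-\lambda_{k+1})}\right]\|\Phi\|_F,
\end{equation}
which is exactly the claimed inequality. The main obstacle I anticipate is not the perturbation expansion itself — that is textbook — but rather the bookkeeping in passing from per-eigenvector bounds to the Frobenius norm of the whole matrix $\varphi$ without accumulating spurious factors of $k$ or $\sqrt{k}$; getting the constants $1/4$ and $1/2$ to come out precisely requires carefully using $\|H_m\|_{\mathrm{op}} = 1$, the unit normalization of the $v_r$, the Bessel-type inequality $\sum_j (v_j^\top \Psi v_r)^2 \le \|\Psi\|_F^2$, and the monotonicity $\lambda_r \ge \lambda_k$ in just the right places. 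Since this theorem is attributed to \cite{de2004sparse}, I would ultimately present the argument at the level of detail consistent with their treatment and cite it for the finer points.
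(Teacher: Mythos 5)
First, note that the paper does not prove this statement at all: it is imported verbatim from de Silva and Tenenbaum \cite{de2004sparse} and used as a black box (its only role here is to feed Lemma~\ref{lemma:bound_first_part} and Theorem~\ref{thm:embedding_dislocation}). So there is no in-paper proof to compare against; your perturbative route — pass to $\tilde D = D + \varepsilon\Psi + O(\varepsilon^2)$ with $\Psi = -\frac{1}{2}H_m\Phi H_m$, use $\|H_m\|_{\mathrm{op}}=1$ to get $\|\Psi\|_F \le \frac{1}{2}\|\Phi\|_F$, expand $\tilde\lambda_r^{-1/2}\tilde v_r^\top$ row by row, and trace the two terms to the constants $\frac{1}{4}\lambda_k^{-3/2}$ and $\frac{1}{2}\lambda_k^{-1/2}(\lambda_k-\lambda_{k+1})^{-1}$ — is exactly the natural reconstruction of the cited argument, and the structure of the constants confirms it.

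Two points in your sketch need repair, though. (i) Your claim that $|\lambda_r-\lambda_j|\ge\lambda_k-\lambda_{k+1}$ ``for the most dangerous terms'' is not implied by the stated hypotheses: only $\lambda_k>0$ and $\lambda_k>\lambda_{k+1}$ are assumed, so gaps \emph{inside} the top-$k$ block can be arbitrarily small (or zero), and then the per-eigenvector formula $w_r=\sum_{j\ne r}(\lambda_r-\lambda_j)^{-1}(v_j^\top\Psi v_r)v_j$ is not controlled by $\lambda_k-\lambda_{k+1}$. Handling this requires either an additional simplicity/distinctness assumption on $\lambda_1,\ldots,\lambda_{k+1}$ or a subspace-level (spectral-projection / Davis--Kahan type) argument in which within-block rotations are absorbed into the freedom in choosing the eigenbasis; this fine print belongs to \cite{de2004sparse}, but your sketch asserts the gap bound rather than addressing it. (ii) The $\sqrt{k}$ bookkeeping you flag is resolvable cleanly, not by a ``dominant row'' argument: since $w_r\perp v_r$, one has $\|\varphi\|_F^2=\sum_{r\le k}\bigl[\frac{1}{4}\lambda_r^{-3}(v_r^\top\Psi v_r)^2+\lambda_r^{-1}\|w_r\|_2^2\bigr]$, and the global inequalities $\sum_{r\le k}(v_r^\top\Psi v_r)^2\le\|\Psi\|_F^2$ and $\sum_{r\le k}\sum_{j\ne r}(v_j^\top\Psi v_r)^2\le\|\Psi\|_F^2$ (Frobenius norm in the orthonormal eigenbasis), followed by $\sqrt{a^2+b^2}\le a+b$ and $\|\Psi\|_F\le\frac{1}{2}\|\Phi\|_F$, give precisely the claimed constant with no factor of $\sqrt{k}$. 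With those two adjustments your argument goes through; as written, it has a concrete unjustified step (the gap claim) and an unresolved summation step that you only gesture at.
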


Replacing an existing landmark by a new one results in a change in a column in $\Delta$. 
Theorem~\ref{thm:eigen_projection_bound} provides an upper bound of the size of perturbation in $L'_k$ in response to such a change in $\Delta$.
In the following Theorem~\ref{thm:embedding_dislocation}, we exploit
Theorem~\ref{thm:eigen_projection_bound} to bound the size of dislocation of the embedding coordinate of an arbitrary data point upon a landmark replacement. The following Lemmas~\ref{lemma:bound_diff_sq_distance}, \ref{lemma:bound_first_part}, and \ref{lemma:diff_avg_column_bound} aid the proof of Theorem~\ref{thm:embedding_dislocation}. Note that Theorems~\ref{thm:bound_range_operator} and \ref{thm:bounded_operator} also assist minor details in the proof of Theorem~\ref{thm:embedding_dislocation}.

\begin{lemma}
Let $(X,d)$ be a metric space and $x,y,z\in X$. Assume that $d(y,z) < \max\{d(x,y),d(x,z)\}$. Then, we obtain
\begin{equation}
\left|d^2(x,y) - d^2(x,z)\right| \le 2d(x, y)d(y, z) + d^2(y,z).
\label{eq:bound_abs}
\end{equation}
\label{lemma:bound_diff_sq_distance}
\end{lemma}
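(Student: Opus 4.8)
The plan is to prove the inequality $\left|d^2(x,y) - d^2(x,z)\right| \le 2d(x,y)d(y,z) + d^2(y,z)$ by factoring the left-hand side and bounding each factor using the triangle inequality. Write $\left|d^2(x,y) - d^2(x,z)\right| = \left|d(x,y) - d(x,z)\right| \cdot \left(d(x,y) + d(x,z)\right)$. The first factor is controlled by the reverse triangle inequality: $\left|d(x,y) - d(x,z)\right| \le d(y,z)$. For the second factor, I want to show $d(x,y) + d(x,z) \le 2\max\{d(x,y), d(x,z)\} \le 2d(x,y) + \ldots$; more carefully, using $d(x,z) \le d(x,y) + d(y,z)$ (or the symmetric version), write $d(x,y) + d(x,z) \le 2d(x,y) + d(y,z)$ when $d(x,z) \le d(x,y)$, and $d(x,y) + d(x,z) \le 2d(x,z) + d(y,z) \le 2d(x,y) + 2d(y,z) + d(y,z)$ otherwise — so I need to be a little careful about which of $d(x,y), d(x,z)$ is larger.

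The cleanest route is: let $M = \max\{d(x,y), d(x,z)\}$ and $m' = \min\{d(x,y), d(x,z)\}$. Then $d(x,y) + d(x,z) = M + m' \le 2M$ and, by the triangle inequality, $M \le m' + d(y,z) \le d(x,y) + d(y,z)$ — wait, $m'$ need not equal $d(x,y)$. To avoid this, I will instead just use $M \le m' + d(y,z)$, so $M + m' \le 2m' + d(y,z)$. Combining with $\left|d(x,y)-d(x,z)\right| = M - m' \le d(y,z)$, I get
\begin{equation}
\left|d^2(x,y)-d^2(x,z)\right| = (M-m')(M+m') \le d(y,z)\left(2m' + d(y,z)\right) = 2m' d(y,z) + d^2(y,z).
\end{equation}
Since $m' = \min\{d(x,y),d(x,z)\} \le d(x,y)$, this gives $\left|d^2(x,y)-d^2(x,z)\right| \le 2d(x,y)d(y,z) + d^2(y,z)$, which is the claim. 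Note this argument did not even use the hypothesis $d(y,z) < \max\{d(x,y),d(x,z)\}$; that hypothesis presumably matters elsewhere (e.g. to ensure $m'$ stays positive, or it is used for a companion bound), so I will remark that the inequality in fact holds unconditionally, or simply carry the hypothesis along harmlessly.

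There is essentially no obstacle here — the only thing to be careful about is not to assume $d(x,y) \ge d(x,z)$ prematurely, which is why I route everything through $M$ and $m'$ and only at the last step replace $m'$ by the upper bound $d(x,y)$. The triangle-inequality facts $M \le m' + d(y,z)$ and $|d(x,y)-d(x,z)| \le d(y,z)$ are both immediate from the axioms of a metric space applied to the triple $x,y,z$. If one prefers to use the hypothesis, one can alternatively note that $d(y,z) < M$ forces the right-hand side to dominate even more crudely, but the factoring argument above is the shortest and I would present that.
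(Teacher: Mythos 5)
Your proof is correct, and it takes a genuinely different route from the paper. The paper establishes the two one-sided inequalities separately by squaring triangle inequalities: it squares $d(x,z)\le d(x,y)+d(y,z)$ to get the lower bound on $d^2(x,y)-d^2(x,z)$, and squares $d(x,y)-d(y,z)\le d(x,z)$ to get the upper bound, invoking the hypothesis $d(y,z)<\max\{d(x,y),d(x,z)\}$ (via a ``without loss of generality'' step asserting $d(y,z)\le d(x,y)$) to ensure the quantity being squared is nonnegative. You instead factor $\left|d^2(x,y)-d^2(x,z)\right|=(M-m')(M+m')$ with $M=\max\{d(x,y),d(x,z)\}$, $m'=\min\{d(x,y),d(x,z)\}$, and bound the factors by $M-m'\le d(y,z)$ (reverse triangle inequality) and $M+m'\le 2m'+d(y,z)$, finishing with $m'\le d(x,y)$. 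Your route buys two things: it is case-free, and it shows the stated hypothesis is superfluous --- the inequality holds for any $x,y,z$ in a metric space, since in the case $d(x,y)\le d(x,z)$ the upper-bound direction is trivial anyway. It also sidesteps a slight delicacy in the paper's argument: the conclusion is not symmetric in $y$ and $z$ (the right-hand side involves $d(x,y)$, not $d(x,z)$), so the paper's ``without loss of generality'' really amounts to the observation that when $d(y,z)>d(x,y)$ one must have $d(y,z)<d(x,z)$, in which case $d^2(x,y)-d^2(x,z)<0$ and the upper bound is immediate; your factorization absorbs this automatically. The paper's squaring argument is marginally more direct for a reader who wants to see the two one-sided bounds that are reused in Lemma~\ref{lemma:bound_first_part}, but either proof serves that purpose, and since the hypothesis is carried along in the later application, nothing downstream is affected by your stronger, unconditional version.
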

\begin{proof}
The triangle inequality gives $d(x,z)\le d(x,y) + d(y,z)$. Therefore,
\begin{equation}
d^2(x,y) -  d^2(x,z) \ge - 2d(x,y)d(y,z) - d^2(y,z).
\label{eq:triangle_upper_bd}
\end{equation}
Next, we consider another triangle inequality $d(x,y)\le d(x,z) + d(y,z)$, which leads to $d(x,y) - d(y,z)\le d(x,z)$. Because we assumed that $d(y, z) < \max\{d(x,y),d(x,z)\}$, we obtain $d(y, z) \le d(x, y)$, without the loss of generality. Therefore, we obtain
$d^2(x,y) - 2d(x,y)d(y,z) + d^2(y,z) \le d^2(x, z)$, i.e.,
\begin{equation}
d^2(x,y) - d^2(x, z) \le 2d(x,y)d(y,z) - d^2(y, z) \le 2d(x,y)d(y,z) + d^2(y, z).
\label{eq:triangle_upper_bd-2}
\end{equation}
Combination of Eqs.~\eqref{eq:triangle_upper_bd} and \eqref{eq:triangle_upper_bd-2} yields Eq.~\eqref{eq:bound_abs}.
\end{proof}

\begin{lemma}
Let $(X,d)$ be a metric space. Let $\Delta_1$ and $\Delta_2$ be the squared distance matrices associated with the sets of landmarks $\{x_1,\ldots, x_{m-1}, x_m\}$ and $\{x_1,\ldots, x_{m-1},y\}$, respectively, where $x_1,\ldots, x_m, y\in X$, and $\varepsilon \equiv d(x_m, y)$. We further let $L'^{(1)}_k$ and $L'^{(2)}_k$ be $k\times m$ matrices defined by Eq.~\eqref{eq:LMDS_eigen_projection} and associated with $\{x_1,\ldots, x_{m-1}, x_m\}$ and $\{x_1,\ldots, x_{m-1},y\}$, respectively. Furthermore, let $\lambda_1\ge\cdots\ge\lambda_k$ be the eigenvalues associated with $\Delta_1$ such that $\lambda_k >0$ and $\lambda_k > \lambda_{k+1}$. Then, there is a perturbation $\Delta_2 = \Delta_1 + \varepsilon\Phi + O(\varepsilon^2)$ for small $\varepsilon$ such that $\left\|\Phi\right\|_F \le 2\sqrt{2K(m-1)}$, where $K=\max\{d^2(x_1,x_m),\ldots, d^2(x_{m-1},x_m)\}$. Moreover, if $z\in X\setminus\{x_1,\ldots, x_m,y\}$ and $\varepsilon \le d(z,y)$, then we obtain
\begin{equation}
 \left| L'^{(2)}_k\delta^{(2)}_z - L'^{(1)}_k\delta^{(1)}_z\right|_2 \le 2\varepsilon \left[d(z,y)\left\|L'^{(2)}_k\right\|_2 + \widetilde{K}\sqrt{2K(m-1)} \left|\delta^{(1)}_z\right|_2\right] + O(\varepsilon^2),
 \label{eq:embed_bd_first_quant}
 \end{equation}
where $\delta^{(1)}_z = \begin{bmatrix} d^2(z,x_1)\\ \vdots\\ d^2(z,x_{m-1})\\ d^2(z,x_m)] \end{bmatrix}$, $\delta^{(2)}_z = \begin{bmatrix} d^2(z,x_1)\\ \vdots\\ d^2(z,x_{m-1})\\ d^2(z,y)\end{bmatrix}$, and $\widetilde{K} = \frac{1}{4\lambda_k^{3/2}} + \frac{1}{2\lambda_k^{1/2}(\lambda_k - \lambda_{k+1})}$. Here, $\left\|\cdot\right\|_2$ denotes the matrix 2-norm, which we obtain by substituting $\left(X_1,\left\|\cdot\right\|_{X_1}\right) = \left(\mathbb{R}^m, \left|\cdot\right|_2\right)$ and $\left(X_2,\left\|\cdot\right\|_{X_2}\right) = \left(\mathbb{R}^k, \left|\cdot\right|_2\right)$ in Definition~\ref{def:operator_norm} in electronic supplementary material section~\ref{sec:linear_operators}.
\label{lemma:bound_first_part}
\end{lemma}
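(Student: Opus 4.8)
\textbf{Proof proposal for Lemma~\ref{lemma:bound_first_part}.}

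The plan is to split the lemma into its two assertions and handle them in turn. For the bound $\left\|\Phi\right\|_F \le 2\sqrt{2K(m-1)}$, first I would write $\Delta_2 - \Delta_1$ explicitly: since the two landmark sets differ only in the last element ($x_m$ versus $y$), the matrices $\Delta_1$ and $\Delta_2$ agree everywhere except in the last row and last column. The last column of $\Delta_2-\Delta_1$ has entries $d^2(x_r,y) - d^2(x_r,x_m)$ for $r=1,\ldots,m-1$ and a zero in the $(m,m)$ slot, and the last row is the transpose of this by symmetry. Now I would invoke Lemma~\ref{lemma:bound_diff_sq_distance} with $x=x_r$, $y=x_m$, $z=y$ (the hypothesis $d(x_m,y)<\max\{d(x_r,x_m),d(x_r,y)\}$ holds for small enough $\varepsilon$, as $\varepsilon = d(x_m,y)\to 0$ while the $d(x_r,x_m)$ are fixed positive numbers), which gives $\left|d^2(x_r,y)-d^2(x_r,x_m)\right| \le 2 d(x_r,x_m)\varepsilon + \varepsilon^2 = 2\sqrt{K}\,\varepsilon + O(\varepsilon^2)$ for each $r$. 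Summing the squares over the $2(m-1)$ nonzero entries (last row and last column) gives $\left\|\Delta_2-\Delta_1\right\|_F^2 \le 2(m-1)\cdot(2\sqrt{K}\varepsilon)^2 + O(\varepsilon^3) = 8K(m-1)\varepsilon^2 + O(\varepsilon^3)$, so $\left\|\Delta_2-\Delta_1\right\|_F \le 2\sqrt{2K(m-1)}\,\varepsilon + O(\varepsilon^2)$. Writing $\Delta_2 = \Delta_1 + \varepsilon\Phi + O(\varepsilon^2)$ then forces $\left\|\Phi\right\|_F \le 2\sqrt{2K(m-1)}$.

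For the second assertion, I would use the triangle-inequality decomposition
\begin{equation}
\left|L'^{(2)}_k\delta^{(2)}_z - L'^{(1)}_k\delta^{(1)}_z\right|_2 \le \left|L'^{(2)}_k\bigl(\delta^{(2)}_z - \delta^{(1)}_z\bigr)\right|_2 + \left|\bigl(L'^{(2)}_k - L'^{(1)}_k\bigr)\delta^{(1)}_z\right|_2.
\end{equation}
The first term is bounded by $\left\|L'^{(2)}_k\right\|_2 \left|\delta^{(2)}_z - \delta^{(1)}_z\right|_2$; since $\delta^{(2)}_z$ and $\delta^{(1)}_z$ differ only in the last coordinate, $\left|\delta^{(2)}_z-\delta^{(1)}_z\right|_2 = \left|d^2(z,y)-d^2(z,x_m)\right|$, and applying Lemma~\ref{lemma:bound_diff_sq_distance} once more with $x=z$, $y=x_m$, $z=y$ (here the hypothesis $\varepsilon = d(x_m,y) \le d(z,y) = \max$-type bound is exactly what is assumed) yields $\left|d^2(z,y)-d^2(z,x_m)\right| \le 2 d(z,y)\varepsilon + \varepsilon^2$, giving a contribution $2\varepsilon\, d(z,y)\left\|L'^{(2)}_k\right\|_2 + O(\varepsilon^2)$. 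For the second term, I would invoke Theorem~\ref{thm:eigen_projection_bound}: by the first part of this lemma, $\Delta_2 = \Delta_1 + \varepsilon\Phi + O(\varepsilon^2)$ with $\left\|\Phi\right\|_F \le 2\sqrt{2K(m-1)}$, so $L'^{(2)}_k = L'^{(1)}_k + \varepsilon\varphi + O(\varepsilon^2)$ with $\left\|\varphi\right\|_F \le \widetilde{K}\left\|\Phi\right\|_F \le \widetilde{K}\cdot 2\sqrt{2K(m-1)}$. Hence $\left|\bigl(L'^{(2)}_k - L'^{(1)}_k\bigr)\delta^{(1)}_z\right|_2 \le \varepsilon\left\|\varphi\right\|_2 \left|\delta^{(1)}_z\right|_2 + O(\varepsilon^2) \le 2\varepsilon\widetilde{K}\sqrt{2K(m-1)}\left|\delta^{(1)}_z\right|_2 + O(\varepsilon^2)$, using $\left\|\varphi\right\|_2 \le \left\|\varphi\right\|_F$. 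Adding the two contributions gives exactly Eq.~\eqref{eq:embed_bd_first_quant}.

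I would then collect the subtle points that need care. The main obstacle is bookkeeping the $O(\varepsilon^2)$ terms consistently: the perturbation expansion $\Delta_2 = \Delta_1 + \varepsilon\Phi + O(\varepsilon^2)$ is only meaningful in the regime $\varepsilon\to 0$, and several hypotheses ($d(x_m,y)<\max\{d(x,x_m),d(x,y)\}$ in Lemma~\ref{lemma:bound_diff_sq_distance}, $\lambda_k>0$, $\lambda_k>\lambda_{k+1}$) are exactly the conditions needed to legitimately apply Lemma~\ref{lemma:bound_diff_sq_distance} and Theorem~\ref{thm:eigen_projection_bound}; I would note explicitly that for $\varepsilon$ small these hold since the $d(x_r,x_m)$ and the spectral gaps are fixed. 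A secondary technical point is the passage from the Frobenius norm bound on $\Phi$ and $\varphi$ to the operator ($2$-norm) bounds used when multiplying against vectors $\delta^{(i)}_z$: this is just $\left\|\cdot\right\|_2 \le \left\|\cdot\right\|_F$, but it should be stated. Everything else is a routine triangle-inequality and submultiplicativity argument.
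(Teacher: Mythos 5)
Your proof of the displayed inequality \eqref{eq:embed_bd_first_quant} follows the paper's argument essentially verbatim: the same add-and-subtract decomposition into $L'^{(2)}_k\bigl(\delta^{(2)}_z-\delta^{(1)}_z\bigr)$ and $\bigl(L'^{(2)}_k-L'^{(1)}_k\bigr)\delta^{(1)}_z$, the same use of Lemma~\ref{lemma:bound_diff_sq_distance} on the single differing coordinate, and the same appeal to Theorem~\ref{thm:eigen_projection_bound} together with $\left\|\cdot\right\|_2\le\left\|\cdot\right\|_F$. One small slip: to obtain the coefficient $2d(z,y)\varepsilon$ you should substitute $x=z$, $y=y$, $z=x_m$ into Lemma~\ref{lemma:bound_diff_sq_distance}; with the substitution you state ($x=z$, $y=x_m$, $z=y$) the bound comes out as $2d(z,x_m)\varepsilon+\varepsilon^2$. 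This is harmless, since $\left|d(z,x_m)-d(z,y)\right|\le\varepsilon$ lets you absorb the discrepancy into $O(\varepsilon^2)$, but the hypothesis $\varepsilon\le d(z,y)$ is what licenses the version you actually quote.

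The first assertion is where you deviate from the paper, and where there is a genuine (though repairable) gap. The paper constructs $\Phi$ explicitly: it realizes each triple $(x_r,x_m,y)$ as a triangle in $\mathbb{R}^2$ and uses the law of cosines to get the exact identity $d^2(x_r,y)-d^2(x_r,x_m)=\varepsilon\phi_r+\varepsilon^2$ with $\phi_r=-2d(x_r,x_m)\cos(\theta_r)$, whence $\left|\phi_r\right|\le 2\sqrt{K}$ and $\left\|\Phi\right\|_F\le 2\sqrt{2K(m-1)}$. You instead bound $\left\|\Delta_2-\Delta_1\right\|_F\le 2\sqrt{2K(m-1)}\,\varepsilon+O(\varepsilon^2)$ entrywise via Lemma~\ref{lemma:bound_diff_sq_distance} and then say this ``forces'' $\left\|\Phi\right\|_F\le 2\sqrt{2K(m-1)}$. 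That inference is not valid as stated: the lemma asserts the \emph{existence} of a first-order term $\Phi$ with that norm, and this is precisely what Theorem~\ref{thm:eigen_projection_bound} consumes as input; a norm bound on the full difference does not by itself produce the decomposition, and the naive choice $\Phi=(\Delta_2-\Delta_1)/\varepsilon$ only yields $\left\|\Phi\right\|_F\le 2\sqrt{2K(m-1)}+O(\varepsilon)$. The fix is short and stays within your toolkit --- for instance, set $\phi_r$ equal to $u_r/\varepsilon$ truncated to $[-2d(x_r,x_m),\,2d(x_r,x_m)]$, where $u_r=d^2(x_r,y)-d^2(x_r,x_m)$; Lemma~\ref{lemma:bound_diff_sq_distance} guarantees $\left|u_r-\varepsilon\phi_r\right|\le\varepsilon^2$, so the remainder is $O(\varepsilon^2)$ and the Frobenius bound holds exactly --- or you can simply adopt the paper's law-of-cosines construction. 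With that step supplied, your argument is correct and is arguably more elementary than the paper's for this part, at the cost of not exhibiting the first-order perturbation explicitly.
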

\begin{proof}
First, we obtain
\begin{equation}
\Delta_2 = \Delta_1 + \left[\renewcommand{\arraystretch}{1.2}\begin{array}{c|c}
\bm{0}_{(m-1)\times (m-1)} & u\\
\hline
u^{\top} & 0
\end{array}\right],
\label{eq:perturbed_sq_dist_matrix}
\end{equation}
where $\bm{0}_{(m-1)\times (m-1)}$ is the zero matrix of size $(m-1)\times (m-1)$, and
\begin{equation}
u = \begin{bmatrix} d^2(x_1,y) - d^2(x_1,x_m) \\ d^2(x_2,y) - d^2(x_2,x_m) \\ \vdots\\ d^2(x_{m-1},y) - d^2(x_{m-1},x_m)\end{bmatrix}.
\label{eq:def-u}
\end{equation}
For each $r \in\{1,\ldots,m-1\}$, we map $x_r\mapsto \tilde{x}_r,x_m\mapsto \tilde{x}_m$ and $y\mapsto \tilde{y}$, where $\tilde{x}_r, \tilde{x}_m, \tilde{y}$ are vertices of a triangle, whose sides are of lengths $d(x_r,x_m), d(x_r,y)$, and $\varepsilon$, in $\mathbb{R}^2$. This is doable because the metric $d$ satisfies the triangle inequality. By the law of cosine, we obtain
\begin{equation}
d^2(x_r, y) - d^2(x_r, x_m) = -2d(x_r,x_m)\cos(\theta_r) \varepsilon + \varepsilon^2,
\label{eq:sq_dist_perturbation_bigO}
\end{equation}
where $\theta_r$ is the angle between $\overrightarrow{\tilde{x}_m\tilde{x}_r}$ and $\overrightarrow{\tilde{x}_m\tilde{y}}$.
By substituting Eq.~\eqref{eq:sq_dist_perturbation_bigO} in Eq.~\eqref{eq:def-u}, we obtain
\begin{equation}
u = \begin{bmatrix}
\phi_1\\ \vdots \\ \phi_{m-1}
\end{bmatrix}\varepsilon + O(\varepsilon^2),
\label{eq:perturbed_u}
\end{equation}
where
\begin{equation}
\phi_r \equiv -2d(x_r,x_m)\cos(\theta_r),\quad r\in\{1,\ldots, m-1\}.
\label{eq:def-phi_r}
\end{equation}
Then, we define
\begin{equation}
\Phi = \begin{bmatrix}
0 & 0 & \ldots & 0 & \phi_1\\
0 & 0 & \ldots & 0 & \phi_2\\
\vdots & \vdots & \ddots & \vdots & \vdots\\
0 & 0 & \dots & 0 & \phi_{m-1}\\
\phi_1 & \phi_2 & \ldots & \phi_{m-1} & 0
\end{bmatrix}.
\label{eq:define_big_phi}
\end{equation}
Substituting Eqs.~\eqref{eq:perturbed_u}, \eqref{eq:define_big_phi} into Eq.~\eqref{eq:perturbed_sq_dist_matrix} yields
\begin{equation}
\Delta_2 = \Delta_1 + \varepsilon\Phi + O(\varepsilon^2).
\label{eq:perturb_dist_matrix_wtih_epsilon}
\end{equation}

By substituting $\left(X_1,\left\langle \cdot,\cdot \right\rangle_{X_1}\right) = \left(\mathbb{R}^m, \left\langle \cdot,\cdot \right\rangle_{E}\right)$ and $\left(X_2,\left\langle \cdot,\cdot \right\rangle_{X_2}\right) = \left(\mathbb{R}^k, \left\langle \cdot,\cdot \right\rangle_{E}\right)$ in Theorem~\ref{thm:bounded_operator}, where $ \left\langle \cdot,\cdot \right\rangle_{E}$ denotes the Euclidean inner product, it follows that $L'^{(2)}_k - L'^{(1)}_k$ and $L'^{(2)}_k$ are bounded. Using Theorem~\ref{thm:bound_range_operator}, we obtain
\begin{align}
\left| L'^{(2)}_k\delta^{(2)}_z - L'^{(1)}_k\delta^{(1)}_z\right|_2 &= \left| L'^{(2)}_k\delta^{(2)}_z - L'^{(2)}_k\delta^{(1)}_z + L'^{(2)}_k\delta^{(1)}_z - L'^{(1)}_k\delta^{(1)}_z\right|_2\notag\\
&\le \left| L'^{(2)}_k\delta^{(2)}_z - L'^{(2)}_k\delta^{(1)}_z\right|_2 + \left| L'^{(2)}_k\delta^{(1)}_z - L'^{(1)}_k\delta^{(1)}_z\right|_2\notag\\
&\le \left\|L'^{(2)}_k\right\|_2 \left|\delta^{(2)}_z - \delta^{(1)}_z\right|_2 + \left\|L'^{(2)}_k - L'^{(1)}_k\right\|_2 \left|\delta^{(1)}_z\right|_2\notag\\
&\le \left\|L'^{(2)}_k\right\|_2 \left|\delta^{(2)}_z - \delta^{(1)}_z\right|_2 + \left\|L'^{(2)}_k - L'^{(1)}_k\right\|_F \left|\delta^{(1)}_z\right|_2\notag\\
&=  \left\|L'^{(2)}_k\right\|_2 \left|\begin{bmatrix}0 \\ \vdots \\ 0 \\ d^2(z,y) - d^2(z,x_m)\end{bmatrix}\right|_2 + \left\|L'^{(2)}_k - L'^{(1)}_k\right\|_F \left|\delta^{(1)}_z\right|_2.
\label{eq:embedding_main_bound_first_part_subs}
\end{align}
Lemma~\ref{lemma:bound_diff_sq_distance} implies
\begin{equation}
\left| d^2(z,y) - d^2(z,x_m)\right| \le 2d(z,y)\varepsilon + O(\varepsilon^2).
\label{eq:proof-bound-derivation-a}
\end{equation}
Because $\Delta_2 = \Delta_1 + \varepsilon\Phi + O(\varepsilon^2)$, by Theorem~\ref{thm:eigen_projection_bound}, we can write
\begin{equation}
L'^{(2)}_k = L'^{(1)}_k + \varepsilon\varphi + O(\varepsilon^2),
\label{eq:proof-bound-derivation-b}
\end{equation}
where
\begin{equation}
\left\|\varphi\right\|_F \le \widetilde{K}\left\|\Phi\right\|_F.
\label{eq:proof-bound-derivation-c}
\end{equation}
By substituting Eqs.~\eqref{eq:proof-bound-derivation-a}, \eqref{eq:proof-bound-derivation-b}, and \eqref{eq:proof-bound-derivation-c} in Eq.~\eqref{eq:embedding_main_bound_first_part_subs}, we obtain
\begin{align}
(\text{RHS of Eq.~}\eqref{eq:embedding_main_bound_first_part_subs})
\le& \left\|L'^{(2)}_k\right\|_2\left[ 2d(z,y)\varepsilon + O(\varepsilon^2)\right] + \varepsilon\left\|\varphi\right\|_F \left|\delta^{(1)}_z\right|_2 + O(\varepsilon^2)\notag\\
%
%
\le& \varepsilon\left[2d(z,y)\left\|L'^{(2)}_k\right\|_2 + \widetilde{K}\left\|\Phi\right\|_F \left|\delta^{(1)}_z\right|_2\right] + O(\varepsilon^2).
\label{eq:before_final_bound_first_part}
\end{align}
To bound $\left\|\Phi\right\|_F$, we use the fact that $\phi^2_r \le 4d^2(x_r,x_m)$, which obeys from Eq.~\eqref{eq:def-phi_r}, to proceed as follows:
\begin{equation}
\left\|\Phi\right\|_F = \sqrt{\sum_{r=1}^{m-1} 2\phi^2_r} \le \sqrt{\sum_{r=1}^{m-1}8d^2(x_r,x_m)} \le \sqrt{\sum_{r=1}^{m-1}8K} = 2\sqrt{2K(m-1)},
\label{eq:bound_big_phi}
\end{equation}
where $K$ is defined in the statement of the present Lemma. Using Eq.~\eqref{eq:bound_big_phi}, we obtain
\begin{align}
\left| L'^{(2)}_k\delta^{(2)}_z - L'^{(1)}_k\delta^{(1)}_z\right|_2 &\le (\text{RHS of Eq.~}\eqref{eq:embedding_main_bound_first_part_subs}) \notag\\
&\le (\text{RHS of Eq.~}\eqref{eq:before_final_bound_first_part})\notag\\
&\le 2\varepsilon \left[d(z,y)\left\|L'^{(2)}_k\right\|_2 + \widetilde{K}\sqrt{2K(m-1)} \left|\delta^{(1)}_z\right|_2\right] + O(\varepsilon^2).
\label{eq:embedding_main_bound_first_part}
\end{align}
\end{proof}

\begin{lemma}
Let $(X,d)$ be a metric space, and let $\Delta_1 = [d^{(1)}_{rs}]$ and $\Delta_2 = [d^{(2)}_{rs}]$ be the squared distance matrices for $\{x_1,\ldots, x_m\}$ and $\{x_1,\ldots, x_{m-1},y\}$, respectively, 
where $x_1,\ldots, x_{m}, y \in X$, and $\varepsilon\equiv d(x_m,y)$. Then, we obtain
\begin{equation}
\left|\delta^{(2)}_\mu - \delta^{(1)}_\mu\right|_2 \le \frac{4 (m-1)\sqrt{K}}{m} \varepsilon + O(\varepsilon^2)
\end{equation}
for small $\varepsilon$, where $\delta^{(1)}_{\mu} = \frac{1}{m} \begin{bmatrix} \sum_{s=1}^{m} d^{(1)}_{1s}\\ \vdots\\ \sum_{s=1}^{m} d^{(1)}_{ms} \end{bmatrix}$ and $\delta^{(2)}_{\mu} = \frac{1}{m} \begin{bmatrix} \sum_{s=1}^{m} d^{(2)}_{1s}\\ \vdots\\ \sum_{r=1}^{m} d^{(2)}_{ms} \end{bmatrix}$.
\label{lemma:diff_avg_column_bound}
\end{lemma}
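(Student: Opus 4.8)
The plan is to compare $\delta^{(2)}_\mu$ and $\delta^{(1)}_\mu$ entrywise, exploiting the fact that the two landmark sets differ only in one element ($x_m$ versus $y$), so the squared distance matrices $\Delta_1$ and $\Delta_2$ agree except in the last row and last column. Concretely, for $r \in \{1, \ldots, m-1\}$ the $r$-th entry of $m\delta^{(i)}_\mu$ is $\sum_{s=1}^m d^{(i)}_{rs}$, and since $d^{(1)}_{rs} = d^{(2)}_{rs}$ for $s \le m-1$, the difference of the $r$-th entries reduces to $d^{(2)}_{rm} - d^{(1)}_{rm} = d^2(x_r, y) - d^2(x_r, x_m)$. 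For $r = m$ (the landmark that is being swapped), the difference of the $m$-th entries is $\sum_{s=1}^m \left[d^{(2)}_{ms} - d^{(1)}_{ms}\right] = \sum_{s=1}^{m-1}\left[d^2(y,x_s) - d^2(x_m,x_s)\right]$, since the diagonal terms both vanish. So every component of $\delta^{(2)}_\mu - \delta^{(1)}_\mu$ is a sum of terms of the form $d^2(\cdot, y) - d^2(\cdot, x_m)$, each with a $\frac{1}{m}$ prefactor.

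First I would invoke Lemma~\ref{lemma:bound_diff_sq_distance} (or the already-derived consequence Eq.~\eqref{eq:proof-bound-derivation-a}, in the form $|d^2(x_r,y) - d^2(x_r,x_m)| \le 2d(x_r,x_m)\varepsilon + O(\varepsilon^2) \le 2\sqrt{K}\,\varepsilon + O(\varepsilon^2)$) to bound each such term, using $d(x_r, x_m) \le \sqrt{K}$ by the definition of $K$. This gives that for $r \le m-1$, the $r$-th component of $\delta^{(2)}_\mu - \delta^{(1)}_\mu$ is at most $\frac{1}{m}\cdot 2\sqrt{K}\,\varepsilon + O(\varepsilon^2)$ in absolute value, and the $m$-th component is at most $\frac{1}{m}\sum_{s=1}^{m-1} 2\sqrt{K}\,\varepsilon + O(\varepsilon^2) = \frac{2(m-1)\sqrt{K}}{m}\varepsilon + O(\varepsilon^2)$.

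Then I would assemble the Euclidean norm from the componentwise bounds. The vector $\delta^{(2)}_\mu - \delta^{(1)}_\mu$ has $m-1$ components each bounded by $\frac{2\sqrt{K}}{m}\varepsilon + O(\varepsilon^2)$ and one component bounded by $\frac{2(m-1)\sqrt{K}}{m}\varepsilon + O(\varepsilon^2)$, so
\begin{equation}
\left|\delta^{(2)}_\mu - \delta^{(1)}_\mu\right|_2 \le \sqrt{(m-1)\left(\frac{2\sqrt{K}}{m}\varepsilon\right)^2 + \left(\frac{2(m-1)\sqrt{K}}{m}\varepsilon\right)^2} + O(\varepsilon^2) = \frac{2\sqrt{K}\,\varepsilon}{m}\sqrt{(m-1) + (m-1)^2} + O(\varepsilon^2).
\end{equation}
Since $\sqrt{(m-1)+(m-1)^2} = \sqrt{(m-1)m} \le m-1+\text{(something)}$—actually more simply $\sqrt{(m-1)m} \le \sqrt{(m-1)\cdot 2(m-1)} $ is too lossy; instead I would note $\sqrt{(m-1)m}\le \sqrt{m^2} = m$ is also too lossy to hit the stated $2(m-1)\sqrt{K}/m$ coefficient exactly. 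The cleaner route, and the one I would actually take, is the crude triangle-inequality bound $\left|\delta^{(2)}_\mu - \delta^{(1)}_\mu\right|_2 \le \left|\delta^{(2)}_\mu - \delta^{(1)}_\mu\right|_1 \le (m-1)\cdot\frac{2\sqrt{K}}{m}\varepsilon + \frac{2(m-1)\sqrt{K}}{m}\varepsilon + O(\varepsilon^2)$, which already overshoots; so the sharpest simple argument is to use $|v|_2 \le |v|_1$ only on the first $m-1$ coordinates is still not matching. I expect the intended argument simply bounds each of the $m-1$ "small" coordinates by the same quantity as the large one, i.e. bounds the whole vector coordinatewise by $\frac{2(m-1)\sqrt{K}}{m}\varepsilon$, giving $|v|_2 \le \sqrt{m}\cdot\frac{2(m-1)\sqrt{K}}{m}\varepsilon$; that still is not quite $\frac{4(m-1)\sqrt{K}}{m}\varepsilon$. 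The main obstacle, therefore, is bookkeeping the constant: I would reconcile the $\sqrt{(m-1)m}$ factor with the claimed $2(m-1)$ by using $\sqrt{(m-1)m} \le \sqrt{2}(m-1) + \text{l.o.t.}$ for $m\ge 2$, absorbing the difference, or (more likely matching the paper) bound $\sqrt{(m-1)+(m-1)^2}\le 2(m-1)$ which holds for $m \ge 2$ since $(m-1)+(m-1)^2 \le 2(m-1)^2 \iff 1 \le m-1$. That cleanly yields $\left|\delta^{(2)}_\mu - \delta^{(1)}_\mu\right|_2 \le \frac{2\sqrt{K}\varepsilon}{m}\cdot 2(m-1) + O(\varepsilon^2) = \frac{4(m-1)\sqrt{K}}{m}\varepsilon + O(\varepsilon^2)$, as claimed.
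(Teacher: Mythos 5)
Your argument is correct and reaches the stated bound, but it takes a more elementary route than the paper. The paper's proof reuses the machinery set up in Lemma~\ref{lemma:bound_first_part}: it writes $\Delta_2-\Delta_1=\varepsilon\Phi+O(\varepsilon^2)$ with the $\phi_r$'s from the law-of-cosines expansion, so that $\delta^{(2)}_\mu-\delta^{(1)}_\mu=\frac{\varepsilon}{m}\left(\phi_1,\ldots,\phi_{m-1},\sum_{s=1}^{m-1}\phi_s\right)^{\top}+O(\varepsilon^2)$, and then bounds the norm by the chain $|\cdot|_2\le|\cdot|_1$, conversion back to $|(\phi_1,\ldots,\phi_{m-1})|_2$, identification with $\sqrt{\left\|\Phi\right\|_F^2/2}$, and finally the bound $\left\|\Phi\right\|_F\le 2\sqrt{2K(m-1)}$ of Eq.~\eqref{eq:bound_big_phi}. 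You instead bound each entry of $\delta^{(2)}_\mu-\delta^{(1)}_\mu$ directly via Lemma~\ref{lemma:bound_diff_sq_distance} together with $d(x_r,x_m)\le\sqrt{K}$, and assemble the Euclidean norm componentwise, getting $\frac{2\sqrt{K}}{m}\sqrt{m(m-1)}\,\varepsilon+O(\varepsilon^2)$, which is in fact slightly sharper than the stated bound before you relax it via $\sqrt{m(m-1)}\le 2(m-1)$ (valid for $m\ge 2$, and trivial for $m=1$). What your route buys is independence from the $\Phi$/Frobenius bookkeeping and the $\ell^1$--$\ell^2$ detour; what the paper's route buys is uniformity, since the same $\Phi$ and the bound \eqref{eq:bound_big_phi} are needed anyway in Lemma~\ref{lemma:bound_first_part} and Theorem~\ref{thm:embedding_dislocation}, so no new estimate has to be introduced.

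Two small points to tighten in a final write-up. First, your use of Lemma~\ref{lemma:bound_diff_sq_distance} needs its hypothesis $d(x_m,y)<\max\{d(x_r,x_m),d(x_r,y)\}$; this holds for all sufficiently small $\varepsilon$ whenever $x_r\neq x_m$, and in the degenerate case $x_r=x_m$ the entry difference is exactly $\varepsilon^2=O(\varepsilon^2)$, so the bound survives — state this rather than leaving it implicit. Second, the exploratory passage in which you try and discard several constants should be deleted: the only inequality you need is $(m-1)+(m-1)^2\le 2(m-1)^2$ for $m\ge 2$, which immediately yields the claimed coefficient $\frac{4(m-1)\sqrt{K}}{m}$.
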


\begin{proof}
By combining Eqs.~\eqref{eq:define_big_phi} and \eqref{eq:perturb_dist_matrix_wtih_epsilon}, we obtain
\begin{equation}
\Delta_2 - \Delta_1 = \varepsilon\begin{bmatrix}
0 & 0 & \cdots & 0 &  \phi_1\\
0 & 0 & \cdots & 0 & \phi_2\\
\vdots & \vdots & \ddots & \vdots & \vdots\\
0 & 0 & \cdots & 0 & \phi_{m-1}\\
\phi_1 & \phi_2 & \cdots & \phi_{m-1} & 0
\end{bmatrix} + O(\varepsilon^2) = \varepsilon\Phi + O(\varepsilon^2).
\label{eq:perturbed_dist_matrix}
\end{equation}
Equation~\eqref{eq:perturbed_dist_matrix} implies that
\begin{equation}
d^{(2)}_{rs} - d^{(1)}_{rs} =
\begin{cases}
0, & r, s< m \text{ or } r=s,\\
\varepsilon\phi_r + O(\varepsilon^2), & r\in\{1,\ldots,m-1\}, s=m,\\
\varepsilon\phi_s + O(\varepsilon^2), &  r=m, s\in\{1,\ldots,m-1\}.
\end{cases}
\end{equation}
Therefore,
\begin{equation}
\delta^{(2)}_{\mu} - \delta^{(1)}_{\mu} = \frac{1}{m} \begin{bmatrix}
\sum_{s=1}^{m}\left(d^{(2)}_{1s} - d^{(1)}_{1s}\right)\\
\sum_{s=1}^{m}\left(d^{(2)}_{2s} - d^{(1)}_{2s}\right)\\
\vdots\\
\sum_{s=1}^{m}\left(d^{(2)}_{m-1, s} - d^{(1)}_{m-1, s}\right)\\
\sum_{s=1}^{m}\left(d^{(2)}_{ms} - d^{(1)}_{ms}\right)
\end{bmatrix}
= \frac{\varepsilon}{m} \begin{bmatrix}
\phi_1\\
\phi_2\\
\vdots\\
\phi_{m-1}\\
\sum_{s = 1}^{m-1} \phi_s
\end{bmatrix} + O(\varepsilon^2).
\label{eq:delta-mu-a}
\end{equation}
Using Eq.~\eqref{eq:delta-mu-a}, we obtain
\begin{align}
\left|\delta^{(2)}_\mu - \delta^{(1)}_\mu\right|_2\le \left|\delta^{(2)}_\mu - \delta^{(1)}_\mu\right|_1 &\le \frac{\varepsilon}{m} \sum_{r=1}^{m-1}\left|\phi_r\right| + \frac{\varepsilon}{m} \left| \sum_{s = 1}^{m-1} \phi_s\right| + \left|O(\varepsilon^2)\right|\notag\\
&\le \frac{\varepsilon}{m} \sum_{r=1}^{m-1}\left|\phi_r\right| + \frac{\varepsilon}{m} \sum_{s = 1}^{m-1}\left|\phi_s\right| + \left|O(\varepsilon^2)\right|\notag\\
&= 2\frac{\varepsilon}{m}\left|\left(\phi_1,\ldots,\phi_{m-1}\right)\right|_1 + O(\varepsilon^2)\notag\\
&\le 2\varepsilon\frac{\sqrt{m-1}}{m}\left|\left(\phi_1,\ldots,\phi_{m-1}\right)\right|_2 + O(\varepsilon^2)\notag\\
%
%
&= 2\varepsilon\frac{\sqrt{m-1}}{m}\sqrt{\frac{\left\|\Phi\right\|^2_F}{2}} + O(\varepsilon^2)\notag\\
&= \varepsilon\frac{\sqrt{2(m-1)}}{m}\left\|\Phi\right\|_F+ O(\varepsilon^2)\notag\\
&\le 4\varepsilon (m-1)\frac{\sqrt{K}}{m} + O(\varepsilon^2),
\label{eq:delta-mu-b}
\end{align}
where $\left|\cdot\right|_1$ denotes the $\ell^{1}$-norm, and we used Eq.~\eqref{eq:bound_big_phi} to derive the last inequality.
\end{proof}

\begin{theorem}
Let $(X,d)$ be a metric space, $\psi_1 = -\frac{1}{2}L'^{(1)}_k\left(\delta^{(1)}_z - \delta^{(1)}_\mu\right)$ be the LMDS map associated with the set of landmarks $\{x_1,\ldots, x_{m-1}, x_m\}\subseteq X$, and $\psi_2 = -\frac{1}{2}L'^{(2)}_k\left(\delta^{(2)}_z - \delta^{(2)}_\mu\right)$ be the LMDS map associated with the set of landmarks $\{x_1,\ldots, x_{m-1},y\}\subseteq X$. Furthermore, let $\lambda_1\ge\cdots\ge\lambda_k$ be the eigenvalues of $\Delta_1$ such that $\lambda_k >0$ and $\lambda_k > \lambda_{k+1}$. With any $z\in X\setminus \{x_1,\ldots,x_m, y\}$, we obtain
\begin{align}
\left|\psi_1(z) - \psi_2(z)\right|_2 &\le \varepsilon \left\{ \left\|L'^{(2)}_k\right\|_2\left[ d(z,y)+\frac{2\sqrt{K}(m-1)}{m}\right] + \widetilde{K}\sqrt{2K(m-1)}\left(\left|\delta^{(1)}_z\right|_2 + \left|\delta^{(1)}_\mu\right|_2\right)\right\}\notag\\
& + O(\varepsilon^2)
\end{align}
for small $\varepsilon \equiv d(x_m,y)$.
\label{thm:embedding_dislocation}
\end{theorem}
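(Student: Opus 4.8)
The plan is to split the dislocation $\psi_1(z)-\psi_2(z)$ into the part governed by the change in $\delta_z$ and $L'_k$ and the part governed by the change in $\delta_\mu$ and $L'_k$, then bound each part with the lemmas already in hand. Since $\psi_i = -\tfrac12 L'^{(i)}_k\bigl(\delta^{(i)}_z - \delta^{(i)}_\mu\bigr)$, I would start from
\begin{equation}
\psi_1(z) - \psi_2(z) = -\tfrac12\left[L'^{(1)}_k\delta^{(1)}_z - L'^{(2)}_k\delta^{(2)}_z\right] + \tfrac12\left[L'^{(1)}_k\delta^{(1)}_\mu - L'^{(2)}_k\delta^{(2)}_\mu\right]
\end{equation}
and apply the triangle inequality to get
\begin{equation}
\left|\psi_1(z) - \psi_2(z)\right|_2 \le \tfrac12\left|L'^{(2)}_k\delta^{(2)}_z - L'^{(1)}_k\delta^{(1)}_z\right|_2 + \tfrac12\left|L'^{(2)}_k\delta^{(2)}_\mu - L'^{(1)}_k\delta^{(1)}_\mu\right|_2 .
\end{equation}
The first term on the right is exactly the quantity bounded in Lemma~\ref{lemma:bound_first_part}, which yields $\tfrac12\left|L'^{(2)}_k\delta^{(2)}_z - L'^{(1)}_k\delta^{(1)}_z\right|_2 \le \varepsilon\bigl[d(z,y)\|L'^{(2)}_k\|_2 + \widetilde{K}\sqrt{2K(m-1)}\,|\delta^{(1)}_z|_2\bigr] + O(\varepsilon^2)$. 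I would note that its hypothesis $\varepsilon \le d(z,y)$ is subsumed by "small $\varepsilon$": since $z\notin\{x_1,\ldots,x_m,y\}$ we have $d(z,y)>0$, so the inequality holds once $\varepsilon$ is small enough.

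For the second term I would repeat the add-and-subtract argument of Lemma~\ref{lemma:bound_first_part} with $\delta_\mu$ in place of $\delta_z$, namely
\begin{equation}
L'^{(2)}_k\delta^{(2)}_\mu - L'^{(1)}_k\delta^{(1)}_\mu = L'^{(2)}_k\bigl(\delta^{(2)}_\mu - \delta^{(1)}_\mu\bigr) + \bigl(L'^{(2)}_k - L'^{(1)}_k\bigr)\delta^{(1)}_\mu ,
\end{equation}
and then bound via the operator-norm inequalities of Theorems~\ref{thm:bound_range_operator} and~\ref{thm:bounded_operator} together with $\|\cdot\|_2 \le \|\cdot\|_F$, obtaining $\|L'^{(2)}_k\|_2\,|\delta^{(2)}_\mu - \delta^{(1)}_\mu|_2 + \|L'^{(2)}_k - L'^{(1)}_k\|_F\,|\delta^{(1)}_\mu|_2$. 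Now Lemma~\ref{lemma:diff_avg_column_bound} bounds the first factor by $\tfrac{4(m-1)\sqrt{K}}{m}\varepsilon + O(\varepsilon^2)$, while Theorem~\ref{thm:eigen_projection_bound} together with the estimate $\|\Phi\|_F \le 2\sqrt{2K(m-1)}$ from Eq.~\eqref{eq:bound_big_phi} (the same $\Phi$ constructed in the proof of Lemma~\ref{lemma:bound_first_part}) gives $\|L'^{(2)}_k - L'^{(1)}_k\|_F \le 2\varepsilon\widetilde{K}\sqrt{2K(m-1)} + O(\varepsilon^2)$. Hence $\tfrac12\left|L'^{(2)}_k\delta^{(2)}_\mu - L'^{(1)}_k\delta^{(1)}_\mu\right|_2 \le \varepsilon\|L'^{(2)}_k\|_2\tfrac{2(m-1)\sqrt{K}}{m} + \varepsilon\widetilde{K}\sqrt{2K(m-1)}\,|\delta^{(1)}_\mu|_2 + O(\varepsilon^2)$.

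Adding the two bounds and collecting terms gives exactly the claimed inequality, with $\|L'^{(2)}_k\|_2$ multiplying $d(z,y)+\tfrac{2\sqrt{K}(m-1)}{m}$ and $\widetilde{K}\sqrt{2K(m-1)}$ multiplying $|\delta^{(1)}_z|_2 + |\delta^{(1)}_\mu|_2$. I do not expect a real obstacle: the theorem is essentially an assembly of the three lemmas plus one extra copy of the perturbation estimate already used inside Lemma~\ref{lemma:bound_first_part}. The only points requiring care are the bookkeeping of the $O(\varepsilon^2)$ remainders (all of which are genuinely second order because each constituent estimate is), and consistently using $\|L'^{(2)}_k\|_2$ rather than $\|L'^{(1)}_k\|_2$ so as to match the convention of Lemma~\ref{lemma:bound_first_part}; these differ only at order $\varepsilon$, so either choice is absorbable, but matching the lemma keeps the constants clean.
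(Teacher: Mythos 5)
Your proposal is correct and follows essentially the same route as the paper: the same triangle-inequality split into the $\delta_z$ and $\delta_\mu$ parts, Lemma~\ref{lemma:bound_first_part} for the first, and the same add-and-subtract with Lemma~\ref{lemma:diff_avg_column_bound}, Theorem~\ref{thm:eigen_projection_bound}, and the bound $\left\|\Phi\right\|_F \le 2\sqrt{2K(m-1)}$ for the second. The constants also assemble exactly as in the paper's proof, so no changes are needed.
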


\begin{proof}
Let $z\in X\setminus \{x_1,\ldots,x_m, y\}$. Then,
\begin{align}
\left|\psi_1(z) - \psi_2(z)\right|_2 &= \left| -\frac{1}{2}L'^{(1)}_k \left(\delta^{(1)}_z - \delta^{(1)}_\mu \right) + \frac{1}{2}L'^{(2)}_k \left(\delta^{(2)}_z - \delta^{(2)}_\mu \right)\right|_2\notag\\
&\le \frac{1}{2}\left[\left| L'^{(2)}_k\delta^{(2)}_z - L'^{(1)}_k\delta^{(1)}_z\right|_2 + \left| L'^{(1)}_k\delta^{(1)}_\mu - L'^{(2)}_k\delta^{(2)}_\mu\right|_2 \right].
\label{eq:embedding_main_bound}
\end{align}
By Lemma~\ref{lemma:bound_first_part} (i.e., Eq.~\eqref{eq:embed_bd_first_quant}), the first term of the RHS of Eq.~\eqref{eq:embedding_main_bound} is bounded as follows:
\begin{equation}
\frac{1}{2}\left| L'^{(2)}_k\delta^{(2)}_z - L'^{(1)}_k\delta^{(1)}_z\right|_2 \le \varepsilon \left[d(z,y)\left\|L'^{(2)}_k\right\|_2+ \widetilde{K}\sqrt{2K(m-1)}\left|\delta^{(1)}_z\right|_2\right] + O(\varepsilon^2).
\label{eq:main_bound_first_part}
\end{equation}

To bound $\frac{1}{2}\left| L'^{(1)}_k\delta^{(1)}_\mu - L'^{(2)}_k\delta^{(2)}_\mu\right|_2$, we first obtain
\begin{align}
\left| L'^{(1)}_k\delta^{(1)}_\mu - L'^{(2)}_k\delta^{(2)}_\mu\right|_2 &= \left| L'^{(1)}_k\delta^{(1)}_\mu - L'^{(2)}_k\delta^{(1)}_\mu + L'^{(2)}_k\delta^{(1)}_\mu - L'^{(2)}_k\delta^{(2)}_\mu\right|_2\notag\\
&\le \left| L'^{(1)}_k\delta^{(1)}_\mu - L'^{(2)}_k\delta^{(1)}_\mu\right|_2 + \left| L'^{(2)}_k\delta^{(1)}_\mu - L'^{(2)}_k\delta^{(2)}_\mu\right|_2 \notag\\
&\le \left\| L'^{(1)}_k - L'^{(2)}_k\right\|_2 \left|\delta^{(1)}_\mu\right|_2 + \left\| L'^{(2)}_k \right\|_2 \left| \delta^{(2)}_\mu - \delta^{(1)}_\mu\right|_2\notag\\
&\le \left\| L'^{(1)}_k - L'^{(2)}_k\right\|_F \left|\delta^{(1)}_\mu\right|_2 + \left\| L'^{(2)}_k \right\|_2 \left| \delta^{(2)}_\mu - \delta^{(1)}_\mu\right|_2\notag\\
&\le \left\| L'^{(1)}_k - L'^{(2)}_k\right\|_F \left|\delta^{(1)}_\mu\right|_2 + \left\| L'^{(2)}_k \right\|_2 \cdot 4\varepsilon (m-1)\frac{\sqrt{K}}{m} + O(\varepsilon^2).
\label{eq:embedding_main_bound_second_part}
\end{align}
To derive the second and the third inequalities in Eq.~\eqref{eq:embedding_main_bound_second_part}, we used the fact that
$L'^{(1)}_k - L'^{(2)}_k$ and $L^{(2)}_k$ are bounded (see Theorem~\ref{thm:bounded_operator}).
To derive the last inequality in Eq.~\eqref{eq:embedding_main_bound_second_part}, we used Lemma~\ref{lemma:diff_avg_column_bound}. By Theorem~\ref{thm:eigen_projection_bound}, we can write $L'^{(2)}_k - L'^{(1)}_k = \varepsilon\varphi + O(\varepsilon^2)$, where $\left\|\varphi\right\|_F \le \widetilde{K}\left\|\Phi\right\|_F$. Therefore, we obtain
\begin{align}
\left| L'^{(1)}_k\delta^{(1)}_\mu - L'^{(2)}_k\delta^{(2)}_\mu\right|_2 \le
(\text{RHS of Eq.~}\eqref{eq:embedding_main_bound_second_part})
& \le \varepsilon\widetilde{K}\left\|\Phi\right\|_F\left|\delta^{(1)}_\mu\right|_2 +  \left\| L'^{(2)}_k \right\|_2 \cdot 4 \varepsilon (m-1)\frac{\sqrt{K}}{m} + O(\varepsilon^2) \notag\\
&\le 2\varepsilon\sqrt{K} \left[\widetilde{K}\sqrt{2(m-1)}\left|\delta^{(1)}_\mu\right|_2 + \left\| L'^{(2)}_k \right\|_2 \frac{2(m-1)}{m}\right] \notag\\
& + O(\varepsilon^2),
\label{eq:before_final_bound_second_part}
\end{align}
where we used Eq~\eqref{eq:bound_big_phi} to derive the last inequality in Eq.~\eqref{eq:before_final_bound_second_part}. 

By substituting Eqs.~\eqref{eq:main_bound_first_part} and \eqref{eq:before_final_bound_second_part} in Eq.~\eqref{eq:embedding_main_bound}, we obtain
\begin{align}
& \left|\psi_1(z) - \psi_2(z)\right|_2 \notag\\
%
%
\le& \varepsilon \left[d(z,y)\left\|L'^{(2)}_k\right\|_2+ \widetilde{K}\sqrt{2K(m-1)}\left|\delta^{(1)}_z\right|_2 + \frac{1}{2} \cdot 2\varepsilon\sqrt{K} \left[\widetilde{K}\sqrt{2m-2}\left|\delta^{(1)}_\mu\right|_2 + \left\| L'^{(2)}_k \right\|_2 \frac{2(m-1)}{m}\right]\right] \notag\\
& + O(\varepsilon^2) \notag\\
=& \varepsilon \left\{ \left\|L'^{(2)}_k\right\|_2\left[ d(z,y)+\frac{2\sqrt{K}(m-1)}{m}\right] + \widetilde{K}\sqrt{2K(m-1)}\left(\left|\delta^{(1)}_z\right|_2 + \left|\delta^{(1)}_\mu\right|_2\right)\right\} + O(\varepsilon^2),
\end{align}
as desired.
\end{proof}

In sum, our main result in this section is Theorem~\ref{thm:embedding_dislocation}, which upper bounds the size of change in the embedding coordinate of any data point when just one landmark is replaced. We found that the coordinate change is small when the new landmark and the dropped landmark are close.  Theorem~\ref{thm:embedding_dislocation} is applicable to the landmark replacement implemented by Algorithm~\ref{alg:landmark_replacement}.

\section{Numerical results}

In this section, we demonstrate the proposed landmark replacement algorithm with three time series data sets,
one of which is synthetic and the other two are empirical. For each data set, we use the LMDS as the out-of-sampling embedding method to map the time series data into the two-dimensional Euclidean space. For each data set, we consider three embedding scenarios. In the first scenario, referred to as offline embedding with initial landmarks, we use the first $m$ data points as landmarks and never replace them as newer data points arrive. In the second scenario, referred to as online embedding, we update landmarks using Algorithm~\ref{alg:landmark_replacement} as each new data point arrives. We initialize the geometric graph for Algorithm~\ref{alg:landmark_replacement} with $\rho = 10^{-20}$, with which the first $m$ nodes are isolated in the initial geometric graph. In the third scenario, referred to as offline embedding with random landmarks, after all the points have been revealed, we select $m$ data points uniformly at random as landmarks and embed the entire data set. We note that the first and third scenarios are offline out-of-sample embedding methods under different conditions on the available information. 

\subsection{S-curve}

The S-curve
%
%
is a synthetic data set, constructed by the sklearn package in python~\cite{sklearn_api}. We generate the S-curve in $\mathbb{R}^3$ with $10^3$ points, which we visualize in Fig.~\ref{fig:s_curve}. Because the S-curve is specified by the Cartesian coordinate system of the two-dimensional S shape on the $(x, z)$ plane and $0 \le y \le 2$, we linearly ordered the $10^3$ points according to the $(x, z)$ coordinate so that the point nearest to the right-top corner of the S shape in the $(x, z)$ plane is the first, and that nearest to the left-bottom corner of the S shape is the last. The color code in the figure is a guide to the eyes to indicate the ordering of the points.


The black diamonds in Fig.~\ref{fig:s_curve} show the final landmarks in each of the three embedding scenarios. We set $m=100$; we verified the robustness of the online embedding with respect to the value of $m$ in electronic supplementary material section~\ref{sec:dependence_on_m}, including for the other two data sets analyzed in the following sections. Note that, in the offline embedding with initial landmarks (shown in panel (a)) and the offline embedding with random landmarks (shown in panel (c)), the landmarks do not change over time. Figure~\ref{fig:s_curve}(b) indicates that, with the online embedding, many final landmarks are devoted to early data points while some newer landmarks are scattered over the middle and lower parts of the S-curve, covering various parts of the S-curve. The final $\rho = 0.94$ for the online embedding.

We show the results of the LMDS into $\mathbb{R}^2$ in Fig.~\ref{fig:s_curve_embedding} for the three embedding scenarios. Figure~\ref{fig:s_curve_embedding}(a) shows that the shape of the S-curve is not preserved in the offline embedding with initial landmarks, presumably because the set of the first $m$ data points is only good at capturing local geometry of the S-curve, i.e., the beginning part of the S-curve. In contrast, Figs.~\ref{fig:s_curve_embedding}(b) and \ref{fig:s_curve_embedding}(c) preserve the structure of the original S-curve well. To quantitatively assess the quality of embedding, we show in Table~\ref{table:goodness_of_embedding} the values of $\sigma_L$ and $\sigma$ for each scenario; see the rows with $k=2$, where we remind that $k$ is the embedding dimension. For the offline embedding with random landmarks, we show the average and standard deviation of $\sigma_L$ and $\sigma$ calculated based on 100 runs. The normalized stress for the entire data, $\sigma$, is substantially larger for the offline embedding with initial landmarks (i.e., $\sigma = 0.23$) than the online embedding (i.e., $\sigma = 0.16$) and the offline embedding with random landmarks (i.e., $\sigma = 0.15 \pm 0.004$, and $\sigma=0.15$ for the run shown in Figs~\ref{fig:s_curve}(c) and \ref{fig:s_curve_embedding}(c)). The latter two embedding methods yield similar $\sigma$ values. These results support that our online algorithm performs on par with the offline embedding with random landmarks, which requires the knowledge of all the data points beforehand, and better than the offline embedding with initial landmarks. 
In terms of $\sigma_L$, the offline embedding with initial landmarks is better than the other two embedding algorithms. This is because the former has all the $m$ points geometrically close to each other on the approximately planar part of the S-curve, which renders embedding of the $m$ landmarks an easy task.

To justify the choice of embedding dimension, $k=2$, we also calculated $\sigma$ and $\sigma_L$ for the three embedding scenarios for $k=1$ and $k=3$.
We show the results in Table~\ref{table:goodness_of_embedding}. We find that $\sigma$ (as well as $\sigma_L$) is much larger than an acceptable range ($\sigma \le 0.15$~\cite{borg2005modern}) for any of the three embedding schemes with $k=1$. Therefore, using $k=1$ is not appropriate. In contrast, all the three embedding scenarios yield $\sigma, \sigma_L < 0.01$ when $k=3$. In other words, there is little distortion by LDMS, which is because the original S-curve is defined in the three-dimensional Euclidean space.
Therefore, we consider that embedding dimension $k=2$ is an adequate challenge for approximate distance-preserving embedding of the S-curve data.

Last, we further show in Table~\ref{table:goodness_of_embedding} the average and the standard deviation of $\sigma_L$ and $\sigma$ for the random online embedding. (See electronic supplementary material section~\ref{sec:rand_online} for the definition of the algorithm.) While the value of $\sigma$ for random online embedding is on par with that of our main online embedding, the value of $\sigma_L$ is substantially larger for the former than the latter.

\begin{figure}[t]
    \centering
     \includegraphics[width=\textwidth]{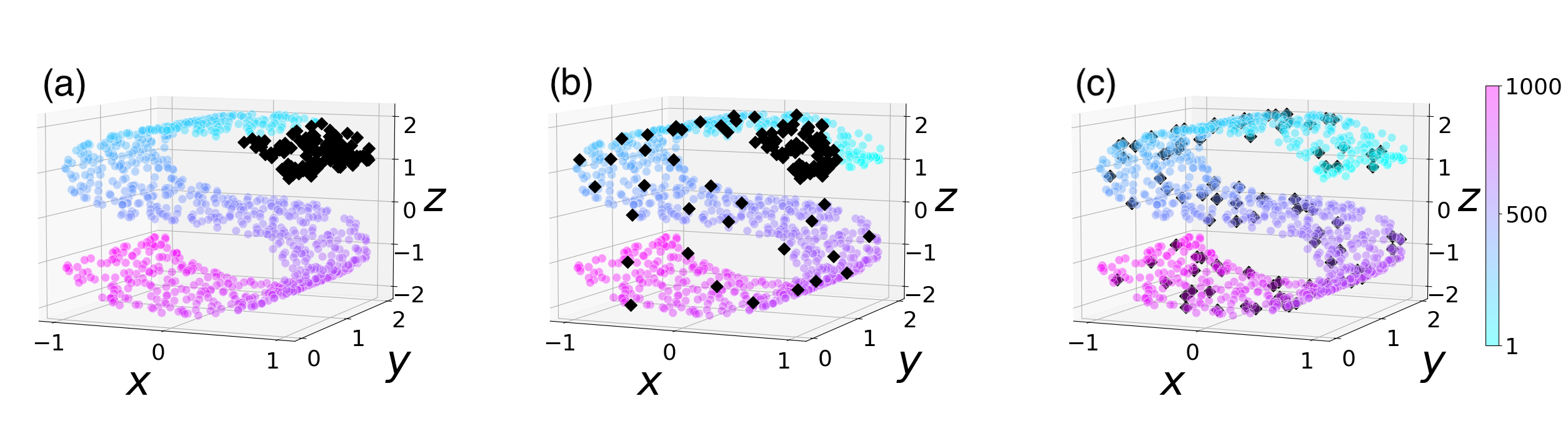}
     \caption{Three-dimensional S-curves composed of $10^3$ points, shown by the circles. (a) Offline embedding with initial landmarks. (b) Online embedding using Algorithm~\ref{alg:landmark_replacement}. (c) Offline embedding with random landmarks.  The color indicates the artificially defined order in which each data point arrives. Each diamond represents a landmark. In (b), the diamonds are the final landmarks. We set $m = 100$.}
\label{fig:s_curve}
\end{figure}

\begin{figure}[t]
    \centering
     \includegraphics[width=\textwidth]{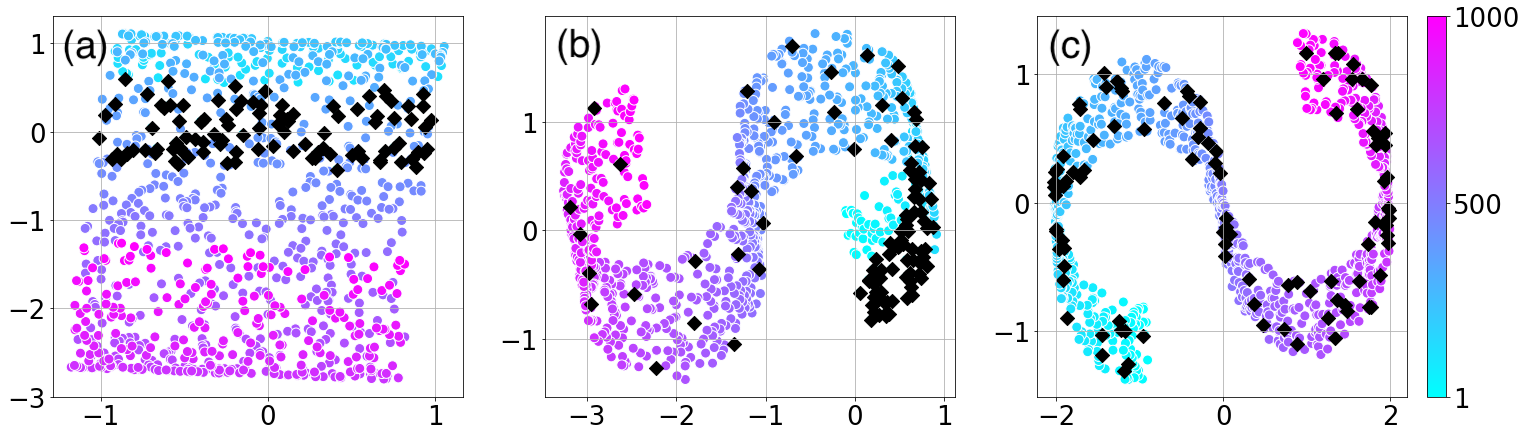}
     \caption{Two-dimensional embedding of the S-curve shown in Fig.~\ref{fig:s_curve}. (a) Offline embedding with initial landmarks. (b) Online embedding using Algorithm~\ref{alg:landmark_replacement}. (c) Offline embedding with random landmarks. Each diamond represents a landmark. In (b), the diamonds are the final landmarks.}
\label{fig:s_curve_embedding}
\end{figure}

\begin{table}[t]
\begin{center}
\caption{Goodness of the LMDS for the three data sets. Columns ``Initial'' and ``Random'' represent the offline embedding with the initial and random landmarks, respectively. ``Random online'' refers to the random online landmark replacement algorithm described in electronic supplementary material section~\ref{sec:rand_online}. In the last two columns, we report the average $\pm$ standard deviation on the basis of 100 runs.}
\label{table:goodness_of_embedding}
\begin{tabular}{|c|c|c|c|>{\centering\arraybackslash}p{1.3cm}|>{\centering\arraybackslash}p{1.3cm}|>{\centering\arraybackslash}p{2.3cm}|>{\centering\arraybackslash}p{2.3cm}|}
 \hline
Data & $m$ & $k$ & Index & Initial & Online & Random & Random online \\
\hline
\multirow{6}{*}{S-curve} & \multirow{6}{*}{100}
&\multirow{2}{*}1
& \multirow{1}{*}{$\sigma_{L}$}  & $0.21$  & 0.40 & $0.31\pm 0.01$ & $0.51\pm 0.03$ \\
\cline{4-8}
& & & $\sigma$  & 0.74  & 0.33 & $0.31\pm 0.002$ & $0.38\pm 0.10$ \\
\cline{3-8}
& &\multirow{2}{*}2
& \multirow{1}{*}{$\sigma_{L}$}  & $<0.01$  & 0.15 & $0.14\pm 0.01$ & $0.25\pm 0.02$\\
\cline{4-8}
& & & $\sigma$  & 0.23  & 0.16 & $0.15\pm 0.004$ & $0.17\pm 0.06$ \\
\cline{3-8}
\cline{3-8}
& &\multirow{2}{*}3
& \multirow{1}{*}{$\sigma_{L}$}  & $<0.01$  & $<0.01$ & $<0.01$ & $<0.01$ \\
\cline{4-8}
& & & $\sigma$  & $<0.01$  & $<0.01$ & $<0.01$ & $<0.01$ \\
\cline{3-8}
\hline

\multirow{6}{*}{SPY} & \multirow{6}{*}{10}
&\multirow{2}{*}1
& \multirow{1}{*}{$\sigma_{L}$}  & $0.29$  & 0.15 & $0.07\pm 0.03$ & $0.22\pm 0.08$ \\
\cline{4-8}
& & & $\sigma$  & $0.80$  & 0.09 & $0.09\pm 0.01$ & $0.41\pm 0.16$\\
\cline{3-8}
& &\multirow{2}{*}2
& \multirow{1}{*}{$\sigma_{L}$}  & 0.14  & 0.05 & $0.03\pm 0.01$ & $0.08\pm 0.02$ \\
\cline{4-8}
& & & $\sigma$  & 0.68  & 0.04 & $0.06\pm 0.02$ & $0.21\pm 0.06$ \\
\cline{3-8}
& &\multirow{2}{*}3
& \multirow{1}{*}{$\sigma_{L}$}  & 0.07  & 0.02 & $0.01\pm 0.01$ & $0.04\pm 0.01$ \\
\cline{4-8}
& & & $\sigma$  & 0.43  & 0.02 & $0.05\pm 0.02$ & $0.15\pm 0.05$ \\
\cline{3-8}
\hline

\multirow{6}{*}{Hospital} & \multirow{6}{*}{20}
& \multirow{2}{*}1
& \multirow{1}{*}{$\sigma_{L}$} & 0.13 & 0.16 & $0.21\pm 0.04$ & $0.28\pm 0.05$ \\
\cline{4-8}
& & & $\sigma$ & 0.61 & 0.22 & $0.23\pm 0.01$ & $0.24\pm 0.02$ \\
\cline{3-8}
& & \multirow{2}{*}2
& \multirow{1}{*}{$\sigma_{L}$}  & $<0.01$  & 0.06 & $0.08\pm 0.02$ & $0.11\pm 0.02$ \\
\cline{4-8}
& & & $\sigma$  & 0.40  & 0.11 & $0.11\pm 0.01$ & $0.11\pm 0.02$ \\
\cline{3-8}
& & \multirow{2}{*}3
& \multirow{1}{*}{$\sigma_{L}$}  & $<0.01$  & $<0.01$ & $0.04\pm 0.01$ & $0.03\pm 0.01$\\
\cline{4-8}
& & & $\sigma$  & 0.29  & 0.07 & $0.06\pm 0.01$ & $0.07\pm 0.01$ \\
\cline{3-8}
\hline
\end{tabular}
\end{center}
\end{table}

\subsection{Stock price time series}

Our second data set is the daily adjusted closing prices of the top ten holding stocks, in terms of market capitalization, from SPDR S\&P 500 ETF Trust (SPY). SPY is a member of Standard \& Poor's depository receipt (SPDR), which is a family of exchange-traded funds (ETFs). We have normalized the adjusted closing prices of each stock by the map $x\mapsto (x - x_{\min})/(x_{\max} - x_{\min})$, where $x$ is the adjusted closing price of a stock on a valid day, and $x_{\min}$ and $x_{\max}$ are the lowest and the highest prices of the stock in the observation period, respectively. As of June 2023, SPY is one of the largest ETFs by market capitalization. The SPY's ten largest stock holdings are Apple Inc.\,(AAPL), Microsoft Corp.\,(MSFT), Amazon (AMZN), NVIDIA Corp.\,(NVDA), Alphabet Inc.\,Class A (GOOGL), Tesla Inc.\,(TSLA), Alphabet Inc. Class C (GOOG), Berkshire Hathaway Inc.\,Class B (BRK.B), Meta Platforms Inc.\,Class A (META), and UnitedHealth Group Inc.\,(UNH). We collect the daily adjusted closing prices of these ten stocks from Yahoo finance from 1/2/2013 to 30/12/2022, resulting in $n = 2,518$ valid days in total.

Each data point, $x_{r}$, where $r\in \{1, \ldots, n\}$, is a ten-dimensional vector, whose each component is each stock's price on the $r$th valid day.
We use the Euclidean distance to measure the distance between each pair of data points. Then, we embed $\{x_1, \ldots, x_n\}$ into the two-dimensional Euclidean space. We set the number of landmarks $m=10$. The final $\rho = 0.94$ for the online embedding.

\begin{figure}[t]
    \centering
     \includegraphics[width=\textwidth]{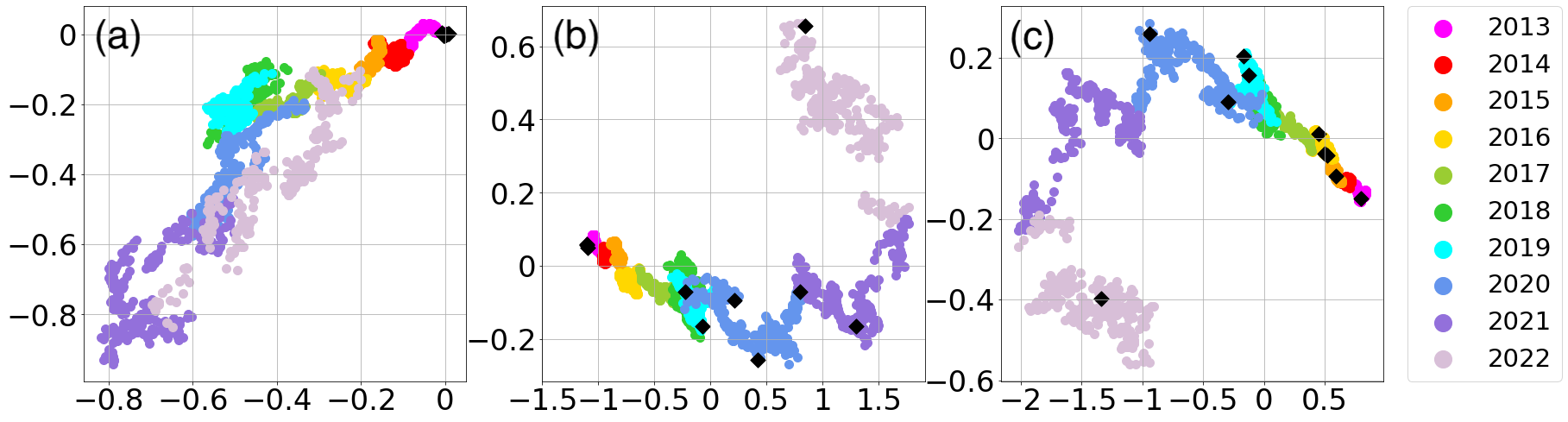}
     \caption{Two-dimensional embedding of the time series of the ten largest stocks of SPY. Each circle represents the data point on a valid day. The color of the circle represents the year. Each diamond represents a (final) landmark. (a) Offline embedding with initial landmarks. (b) Online embedding. (c) Offline embedding with random landmarks. We set $m=10$.}
\label{fig:etf_embedding}
\end{figure}

\begin{figure}[t]
    \centering
     \includegraphics[width=\textwidth]{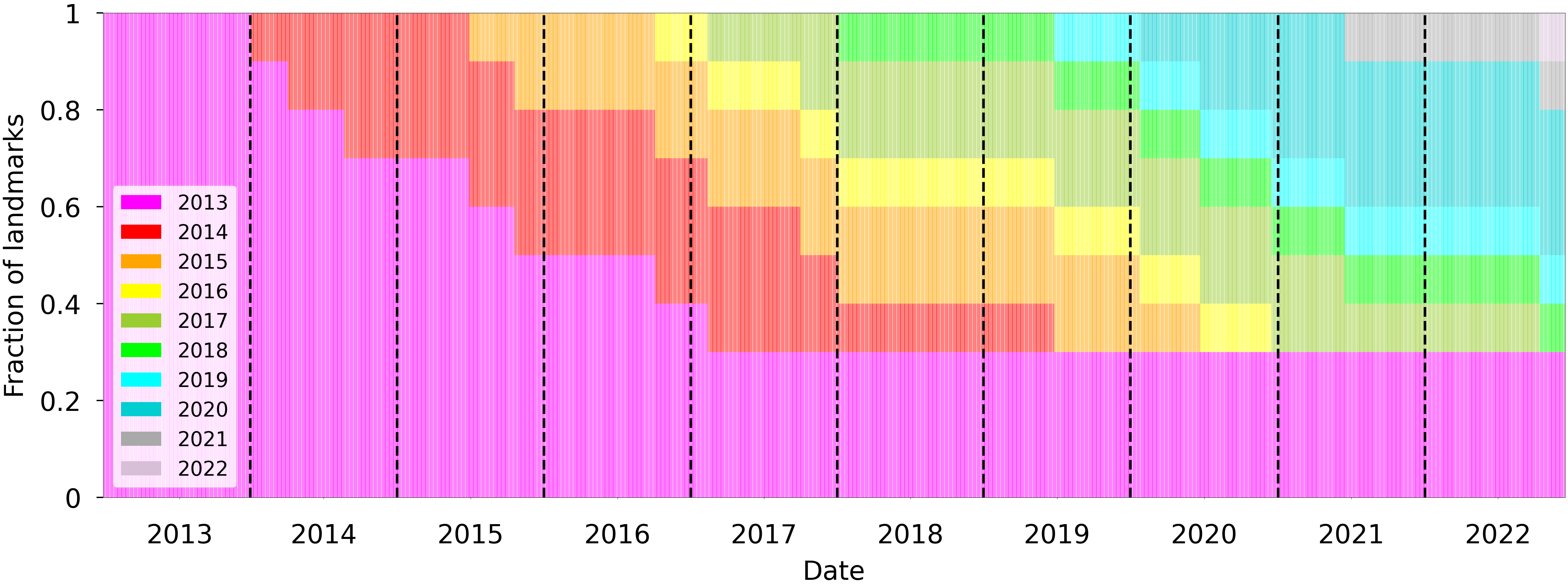}
     \caption{Fraction of landmarks by year when we embed the time series of the price of the ten largest stocks of SPY using the online embedding. The time span shown is from 1/15/2013 to 12/30/2022, i.e., from the 11th to the 2,518th days. The dashed lines represent the boundary between years.}
\label{fig:etf_fraction_of_landmarks}
\end{figure}

We show in Fig.~\ref{fig:etf_embedding} the embedding coordinates of the daily adjusted closing prices of ten stocks. Unlike in Fig.~\ref{fig:s_curve}(b), which shows that a majority of the final landmarks are data points arriving in early time steps (shown by the diamonds), seven out of the ten final landmarks shown in Fig.~\ref{fig:etf_embedding}(b) are data points arriving in 2018 or later, and the remaining three landmarks are in 2013. To examine evolution of landmarks over years, we show in Fig.~\ref{fig:etf_fraction_of_landmarks} the fraction of landmarks from each year, measured on each day starting on the 11th ($= (m+1)$th) day of the data stream. Specifically, to calculate the fraction of the year of the landmarks, we count the number of landmarks that correspond to each year and divide the number by $m$ $(=10)$. For example, if the fraction of year 2013 is 0.2 at time $t$, it means that two landmarks at time $t$ are dated year 2013. Figure~\ref{fig:etf_fraction_of_landmarks} indicates that our online embedding algorithm selects fewer data points in 2016, 2018, and 2019 as landmarks, even transiently. An intuitive explanation of this phenomenon is that the data points in these years do not explore a new part of the embedding space (see Fig.~\ref{fig:etf_embedding}(b)), or presumably in the original ten-dimensional space, either.

The trajectory in the embedding space obtained with the offline embedding with initial landmarks looks qualitatively different from those obtained with the online embedding and the offline embedding with random landmarks. Specifically, the data points in 2022 heavily overlap with those between 2016 and 2021 in the case of the offline embedding with initial landmarks (see Fig.~\ref{fig:etf_embedding}(a)). In contrast, with the online embedding and the offline embedding with random landmarks, the data points in 2022 stretch into part of the embedding space that has been unexplored by the trajectory in the prior years (see Figs.~\ref{fig:etf_embedding}(b) and \ref{fig:etf_embedding}(c)).
Similarly to the case of the S-curve data set, we hypothesize that the set of the first $m$ data points, which are the landmarks in the offline embedding with initial landmarks, only captures local geometry of the data set near the first $m$ data points and fails to capture the structure of the later data points that are far from the first $m$ landmarks. To discuss this possibility, we show $\sigma$ and $\sigma_L$ for the three embedding scenarios in Table~\ref{table:goodness_of_embedding}. The table indicates that the online embedding and the offline embedding with random landmarks perform better than the offline embedding with initial landmarks in terms of both $\sigma$ and $\sigma_L$ for this data set, affirming our hypothesis. The table also indicates that the online embedding and the offline embedding with random landmarks produce almost the same quality of embedding (i.e., $\sigma = 0.04$). 

We show in Table~\ref{table:goodness_of_embedding} the values of $\sigma$ and $\sigma_L$ for embedding dimensions $k=1,3$ and for each of the three embedding scenarios. The offline embedding with initial landmarks gives large $\sigma$ values, implying poor embedding, for both $k=1$ and $k=3$, as expected. It gives an acceptable $\sigma_L$ value for $k=3$, which only suggests successful embedding of the landmarks, not the non-landmark points. With the online embedding and offline embedding with random landmarks, $\sigma$ is sufficiently small (i.e., $\le 0.1$) for both $k=1$ and $k=3$ as well as for $k=2$. However, the gap in terms of the $\sigma$ value is larger between $k=1$ and $k=2$ than between $k=2$ and $k=3$, and the same holds true for $\sigma_L$. We chose $k=2$ for these reasons.

Last, for this data set, the random online embedding performs worse than our online embedding algorithm in terms of both $\sigma_L$ and $\sigma$ (see Table~\ref{table:goodness_of_embedding}).

\subsection{Hospital}

As a third data set, we analyze a temporal network data set using the LMDS-based embedding method proposed in our previous study~\cite{thongprayoon2023embedding}. The data set, which we refer to as the Hospital data set, is a sequence of time-stamped contact events between two individuals in a hospital ward in Lyon, France, recorded by the SocioPatterns project~\cite{vanhems2013estimating}. We transform the data set into a continuous-time temporal networks, i.e., weighted adjacency matrix as a function of continuous time, using the tie-decay network model~\cite{ahmad2021tie}.
The constructed temporal network contains 75 nodes, of which 11 nodes represent medical doctors, 35 nurses, and 29 patients. There are $1,139$ edges and $32,424$ contacts in total. The data set was recorded between 1:00 PM, 12/6/2010, Monday and 2:00 PM, 12/10/2010, Friday. There are $n = 9,453$ time points at which there is at least one contact event, which we use as $t_1$, $\ldots$, $t_n$.

We review tie-decay networks and the embedding method based on them~\cite{thongprayoon2023embedding} in electronic supplementary material section~\ref{sec:tie_decay_embedding}. The tie-decay network is a $75\times 75$ matrix that varies over time, and each of its entry represents the time-dependent strength of the tie between a pair of nodes. The time being continuous implies that the network is not limited to the times at which at least one contact event occurs. We embed the tie-decay network obtained from the Hospital data into the two-dimensional Euclidean space. The embedding method is based on the LMDS and therefore requires a distance measure between pairs of data points, which are networks in the present case. We employ the unnormalized Laplacian distance as network distance measure because it yielded satisfactory embedding in our previous study~\cite{thongprayoon2023embedding}. We set $m =20$ and the decay rate of the tie strength $\alpha = 10^{-2}$ (see electronic supplementary material section~\ref{sec:tie_decay_embedding} for the definition of $\alpha$). The final $\rho = 22.79$ for the online embedding.

We show in Fig.~\ref{fig:hosp_embedding} the hourly averages of the embedding coordinates for each of the three embedding scenarios. For example, the coordinate labeled 2 PM represents the averaged coordinate of the trajectory between 2:00 PM and 3:00 PM. Similar to the case of the S-curve and SPY data sets, the offline embedding with initial landmarks produces an embedding trajectory that is qualitatively different from the online embedding and the offline embedding with random landmarks. Specifically, the embedded data points are roughly linearly distributed in the two-dimensional embedding space (see Fig.~\ref{fig:hosp_embedding}(a)). In contrast, the trajectory is U-shaped in the case of the online embedding (see Fig.~\ref{fig:hosp_embedding}(b)). The shape of the trajectory is roughly intermediate between the trajectories for these two cases for the offline embedding with random landmarks (see Fig.~\ref{fig:hosp_embedding}(c)).

To illustrate how the landmarks are distributed over time and days, we show in Fig.~\ref{fig:hosp_landmark_dist} the time distribution of the $m = 20$ landmarks for each of the three embedding scenarios. Each diamond represents a landmark. By definition, the landmarks for the offline embedding with initial landmarks are concentrated around 1:00 PM, Monday, specifically, between 1:00 PM to 1:12 PM (see Fig.~\ref{fig:hosp_landmark_dist}(a)). The landmarks for the offline embedding with random landmarks are widely distributed over the different days and time (Fig.~\ref{fig:hosp_landmark_dist}(c)), which is also expected. In contrast, Fig.~\ref{fig:hosp_landmark_dist}(b) indicates that all but two final landmarks for the online embedding are the networks between 1:00 PM and 2:00 PM, Monday, and the other two landmarks are from Monday at 2:14 PM and Tuesday at 10:16 AM. We interpret that this result occurs because the embedding trajectory does not much explore new part of the embedding space after Monday such that the landmarks from Monday are mostly sufficient.

Table~\ref{table:goodness_of_embedding} indicates that the offline embedding with initial landmark yields $\sigma = 0.40$ (see the rows with $k=2$), which is substantially larger than the $\sigma$ values for the other two embedding scenarios. Therefore, the offline embedding with initial landmarks is not satisfactory despite that it is good at capturing the local geometry of the set of the initial landmarks (with $\sigma_L < 0.01$). We obtain $\sigma=0.11, 0.11\pm 0.01$
for the online embedding and the average of the offline embedding with random landmarks, respectively. These $\sigma$ values are roughly as small as that when we use all the 9,453 data points at which any contact event arrives as landmarks (i.e., $\sigma = 0.09$~\cite{thongprayoon2023embedding}). Therefore, we conclude that our online embedding algorithm with a small number of landmarks (i.e., $m=20$) and the two-dimensional embedding space is fairly satisfactory for this data set.

We previously showed for this data set that two networks at the same time of different days tend to be more similar than two networks whose times of the day are different \cite{thongprayoon2023embedding}. To examine this property for our online landmark replacement algorithm,
%
%
%
we compute the distances between arbitrary pairs of hours in each of Figs.~\ref{fig:hosp_embedding}(a), \ref{fig:hosp_embedding}(b), and \ref{fig:hosp_embedding}(c). For each embedding scenario, we group the distance values into two sets. The first set is composed of the distance values for an arbitrary pair of networks at the same time of the day. The second set is composed of the distance values for an arbitrary pair of networks at different times of the day. 
If the trajectory is similar across the different days, then the distance for the first set will be smaller than that for the second set.
Therefore, we run the Student's t-test for unequal sample sizes to test the difference in the average distance between the two sets of distance values. The $p$-values for the offline embedding with initial landmarks, online embedding, and offline embedding with random landmarks are equal to $2.46\times 10^{-9}, 8.08\times 10^{-15}$, and $5.74\times 10^{-15}$, respectively. Therefore, the Hospital temporal contact network produces similar trajectories over the different days, including the time, according to the t-test, no matter which of the three embedding scenarios we adopt. However, the $p$-value for the offline embedding with initial landmarks is roughly $10^{6}$ times larger than those for the online embedding and the offline embedding with random landmarks. Furthermore, the two-dimensional embedding in which we use all the $n = 9,453$ time points as landmarks yields $p = 3.67\times 10^{-15}$, which is roughly $10^6$ times smaller than that for the offline embedding with initial landmarks, but only roughly $2.2$ and $1.6$ times smaller than those for the online embedding and the offline embedding with random landmarks, respectively. This result lends another support to the strength of our online embedding algorithm.

Next, we highlight that Table~\ref{table:goodness_of_embedding} also provides the $\sigma$ and $\sigma_L$ values for embedding dimensions $k=1$ and $k=3$. Regardless of the $k$ ($\in \{ 1, 2, 3 \}$) value, $\sigma$ for the offline embedding with initial landmarks is unacceptably large and larger than $\sigma$ for
the online embedding and the offline embedding with random landmarks. For the online embedding and the offline embedding with random landmarks with $k=1$,
we obtain $\sigma = 0.22$ and $\sigma = 0.23\pm 0.01$, respectively. These $\sigma$ values are higher than a guideline cut-off value of $0.15$~\cite{borg2005modern}. In contrast, $\sigma$ is small enough (i.e., $\sigma \le 0.11$) for these two embedding algorithms when $k=2$ and $k=3$. The gap in $\sigma$ is larger between $k=1$ and $k=2$ than between $k=2$ and $k=3$, which is why we have mainly used $k=2$. These results on the dependence of $\sigma$ on $k$ are similar to those for the SPY data.

Last, the random online embedding performs worse than or on par with our online embedding algorithm in terms of $\sigma_L$ or $\sigma$, respectively, for the present data set (see Table~\ref{table:goodness_of_embedding}).

\begin{figure}[t]
    \centering
     \includegraphics[width=\textwidth]{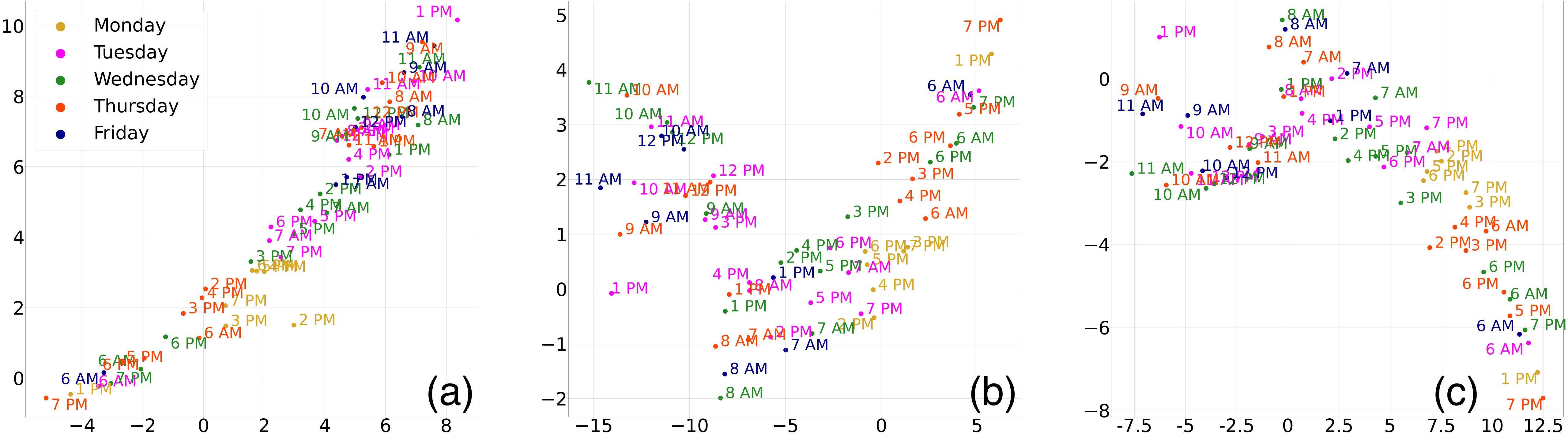}
     \caption{Hourly average embedding coordinate of the Hospital temporal network across weekdays. (a) Offline embedding with initial landmarks. (b) Online embedding. (c) Offline embedding with random landmarks. We set $m = 20$ and $\alpha = 10^{-2}$.}
\label{fig:hosp_embedding}
\end{figure}

\begin{figure}[t]
    \centering
     \includegraphics[width=\textwidth]{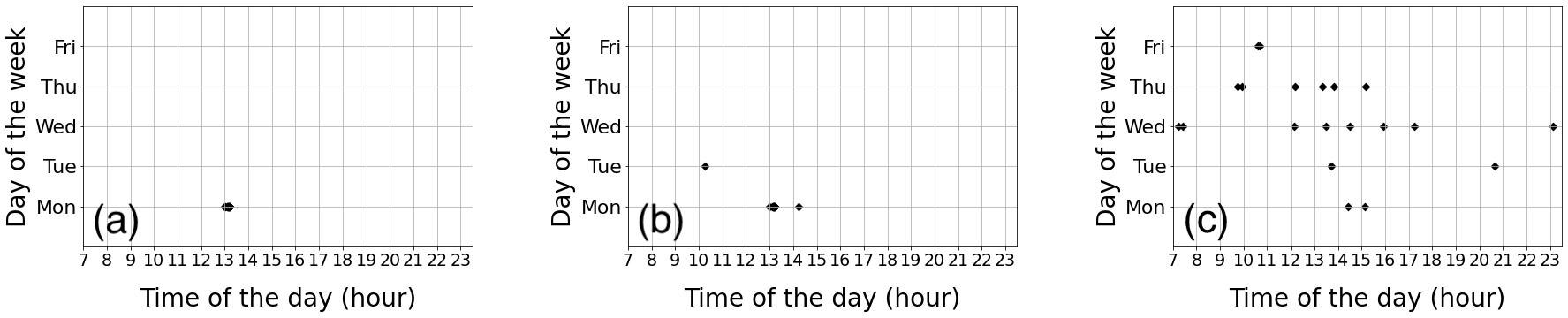}
     \caption{Distribution of the final landmarks over time for the Hospital temporal network. The midnight corresponds to $0$ and $24$ on the horizontal axis. (a) Offline embedding with initial landmarks. (b) Online embedding. (c) Offline embedding with random landmarks.}
\label{fig:hosp_landmark_dist}
\end{figure}

\subsection{Computation time}

In this section, we numerically examine the computation time and time courses of $\rho$ for our online embedding algorithm. We set the embedding dimension $k=2$. We carried out the computations on a Macbook pro with an Apple M1 chip as the processing unit and 8 GB of RAM.
See electronic supplementary material section~\ref{sec:hosp_trajectory_induced_jump} for the time courses of $\rho$.

In practice, a landmark replacement does not occur every time a new data point, $x_{r}$, arrives. Therefore, the actual computation time may be smaller than the theoretical results derived in Corollary~\ref{corr:time_complexity_sequential_arrival}, i.e., $O(n^3)$, which corresponds to the worst case. We show in Figs.~\ref{fig:time_complexity}(a), \ref{fig:time_complexity}(b), and \ref{fig:time_complexity}(c) the time consumed for sequentially embedding the first $n$ data points in the S-curve, SPY, and Hospital data sets, respectively. Note that we used $n$ to denote the number of all data points in the previous sections, whereas here we use $n$ to denote an arbitrary number of data points. However, we believe that there is no confusion.
 
Figures~\ref{fig:time_complexity}(a) and \ref{fig:time_complexity}(b) suggest that the actual computation time approximately scales as $O(n^2)$ rather than $O(n^3)$. In the figure, we show $\propto n^2$ and $\propto n^3$ lines as a guide to the eyes, where $\propto$ represents ``in propotion to''.  
We need to be careful when assessing the computation time for the Hospital data because it is a temporal network, whereas the S-curve and SPY data are
conventional multidimensional time series. We recall that we have used the unnormalized Laplacian network distance measure for the Hospital data and the Euclidean distance for the S-curve and SPY data. Computing the former takes much longer time, specifically, 
$O(n^3)$~\cite{thongprayoon2023embedding}, than computing the latter because the former requires computation of eigenvalues. Therefore, we precompute the Laplacian distance for each pair of static networks in the tie-decay temporal network. Then, we carry out our online embedding algorithm by referring to the precomputed (and stored) Laplacian distance values.
We show in Fig.~\ref{fig:time_complexity}(c) two types of computation time for embedding the Hospital data.
One is the total computation time. The other is the computation time excluding the calculation of the Laplacian distance between each pair of static networks.
We find that the computation time without the calculation of the Laplacian distance roughly scales as $O(n^3)$. Therefore, the calculation of the Laplacian distance is not a bottleneck of computation. The total computation time starts at a relatively large value but seems to approach the $O(n^3)$ scaling as $n$ increases. In conclusion, the numerically evaluated time complexity is at most cubic and smaller for the S-curve and SPY data. These results are consistent with our theoretical results.

\begin{figure}[t]
    \centering
     \includegraphics[width=\textwidth]{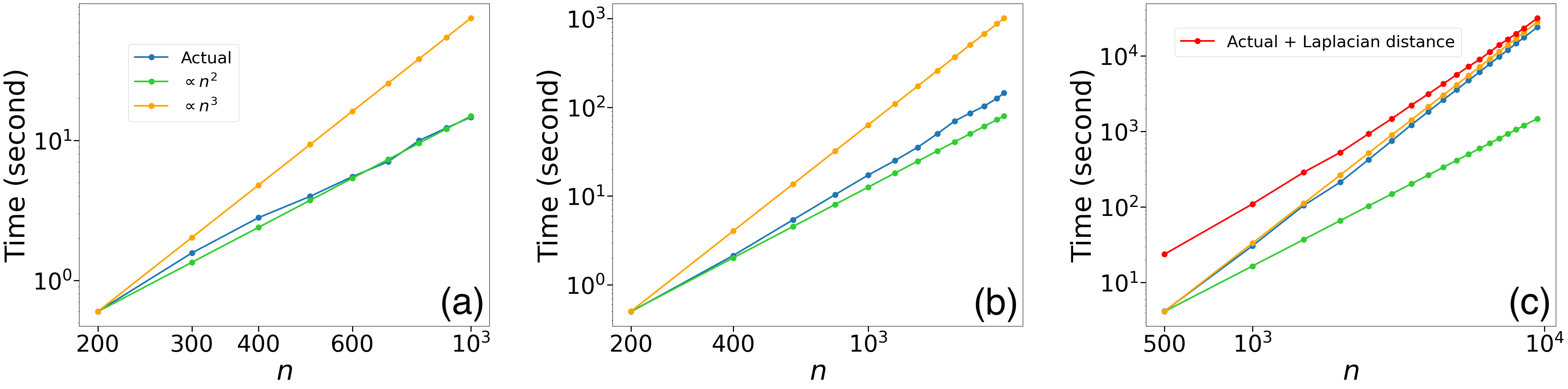}
     \caption{Numerically evaluated computation time of the proposed online embedding algorithm. (a) S-curve. (b) SPY. (c) Hospital. The green and orange straight lines represent $\propto n^2$ and $\propto n^3$, respectively. The intercept of these lines is arbitrary. In (c), we show the total computation time and the computation time excluding the computation of the Laplacian distance matrix by different lines.}
\label{fig:time_complexity}
\end{figure}

\section{Discussion}

We developed an algorithm of replacing landmarks in an online fashion for out-of-sample embedding techniques. In particular, we employed the framework of geometric graphs and dominating sets to capture geometry of landmarks and other data points. We then provided mathematical underpinning of the proposed algorithm on the computation time and perturbation of the embedding coordinate upon a landmark replacement event. We finally carried out numerical simulations to verify the performance of the algorithm combined with the LMDS. On the three data sets examined, an embedding dimension of two was practically enough, and the algorithm performed on par with the offline counterpart of the LMDS in which the fixed landmarks are selected uniformly at random from all the data points (which is by definition impossible for any online algorithm).

Our online landmark replacement scheme crucially depends on variable $\rho$, including its initial value. Variable $\rho$ regulates the number of edges in the geometric graph used in Algorithm~\ref{alg:landmark_replacement}. If the arrival of a new data point triggers an actual landmark replacement event, then $\rho$ increases. 
We numerically verified that $\rho$ drastically increases when the trajectory explores new part of the data space and that $\rho$ saturates or only gradually increases after some time
(see electronic supplementary material section~\ref{sec:hosp_trajectory_induced_jump}).
Our algorithm forces $\rho$ to grow monotonically in time and lacks a mechanism to limit and shrink the size of $\rho$. A large $\rho$ implies that the update of the geometric graph is insensitive to the position of new data points. For example, if $\rho$ is extremely large, then any new data point would be connected to all the existing nodes. We suspect that $\rho$ in later stages of our simulations is larger than would be necessary such that the minimal dominating set of the geometric graph tends to contain less than $m$ nodes and therefore our landmark choice is suboptimal. If $\rho$ were smaller such that the minimal dominating set contains just $m$ nodes, then these $m$ nodes may be a better set of $m$ landmarks. To realize such an online embedding scheme requires a mechanism to both increase and decrease $\rho$ without a formidable computational cost. Extension of our algorithm to this direction is left for future work.

We theoretically showed that the time complexity of the proposed algorithm is $O(n^3)$. Numerical experiments suggested $O(n^2)$ or $O(n^3)$ time depending on the data set. If we run the LMDS every time a new data point $x_{m+r}$ arrives, with $\{x_1, \ldots, x_{m+r} \}$ as landmarks, to generate an embedding trajectory \cite{thongprayoon2023embedding}, then the total time complexity for adding the $(m+1)$th to the $(m+n)$th data points is $\sum_{r=1}^{n}O\left((m+r)^3\right) = O(n^4)$. Therefore, our online algorithm reduces the computational time by a factor of $n$. Nevertheless, faster online algorithms are desirable. Because our landmark replacement algorithm just uses the distance matrix as input, it does not have to be combined with LMDS. Among the online and out-of-sample algorithms discussed in section~\ref{sec:introduction}, L-ISOMAP, the online L-ISOMAP, and the landmark diffusion maps can be combined with
our proposed landmark replacement algorithm. This is because these three algorithms use a fixed number of landmarks. While L-ISOMAP shares the same time complexity as LMDS \cite{de2004sparse}, the online L-ISOMAP \cite{law2006incremental} and the landmark diffusion maps \cite{long2019landmark} are faster than $O(n^3)$. We used LMDS only because Theorem~\ref{thm:eigen_projection_bound}, a theorem associated with LMDS, allows us derive mathematical properties (Lemma~\ref{lemma:bound_first_part} and Theorem~\ref{thm:embedding_dislocation}) of the embedding using LMDS. Exploring online landmark replacement with the online L-ISOMAP, landmark diffusion maps, and similar algorithms may be interesting.

A potential for future work is to incorporate change-point detection into the landmark replacement framework to suppress the computation time. In the change-point detection problem, we examine the presence of a change in a given time series at each time step, usually via a statistical test \cite{chandola2009anomaly, akoglu2015graph, masuda2020guide, takeuchi2006unifying, adams2007bayesian, p2018change}. One idea is to trigger the landmark replacement algorithm only  when a new data point is detected to mark a change point. In this manner, we can reduce the number of times that one calls the actual landmark replacement routine. A potential challenge is how to maintain $L$ to be the dominating set when a new data point is an outlier to an intermediate extent such that it is not detected to be a change point whereas
it is an isolated node in the geometric graph when added. Devising landmark replacement algorithms with change-point detection is worth exploring.

\section{Acknowledgments}

We thank Lorenzo Livi for discussion. The work of Naoki Masuda was supported in part by the Air Force Office of Scientific Research (AFOSR) European Office under Grant FA9550-19-1-7024, in part by the National Science Foundation or NSF under grant DMS-2052720, in part by JSPS KAKENHI under grants JP21H04595 and 23H03414, and in part by the Japan Science and Technology Agency (JST) under grant JPMJMS2021.

\section{Supplementary material}

\subsection{Big $O$ notation}
\label{sec:bigO}

In this section, we provide the formal definition of big $O$ notation.

\begin{definition}\label{def:bigO}
Let $A$ be an unbounded subset of $\mathbb{R}^{+}$ such that $f_i:A\rightarrow\mathbb{R}$
(with $i \in \{ 1, 2 \}$) is continuous. Then, $f_1(x) = O\left(f_2(x)\right)$ as $x\rightarrow\infty$ if there exist $M > 0$ and $C > 0$ such that $\left|f_1(x)\right| \le M\left| f_2(x)\right|$, $\forall x > C$.

Alternatively, let $B$ be a subset of $\mathbb{R}$ such that $g_i: B\rightarrow\mathbb{R}$ (with $i \in \{ 1, 2 \}$) is continuous. Furthermore, let $x_0$ be a real number. Then, $g_1(x) = O(g_2(x))$ as $x\rightarrow x_0$ if there exist $\delta > 0$ and $M > 0$ such that $\left|g_1(x)\right| \le M\left| g_2(x)\right|$, $\forall x$ satisfying $0 < \left| x - x_0 \right| < \delta$.
\end{definition}

\subsection{Linear operators}
\label{sec:linear_operators}

In this section, we review basic linear operator theory that we have used in the main text.

\begin{definition}
Consider normed spaces $\left(X_1,\left\|\cdot\right\|_{X_1}\right)$ and $\left(X_2,\left\|\cdot\right\|_{X_2}\right)$, and let $T: X_1\rightarrow X_2$ be a linear map. Then, $T$ is bounded if there exists $\widetilde{M}\ge 0$ such that
\begin{equation}
\left\| T(x)\right\|_{X_2}\le\widetilde{M}\left\| x\right\|_{X_1}
\end{equation}
for all $x\in X_1$.
\label{def:bounded_operator}
\end{definition}

\begin{definition}
Consider normed spaces $\left(X_1,\left\|\cdot\right\|_{X_1}\right)$ and $\left(X_2,\left\|\cdot\right\|_{X_2}\right)$, and let $T: X_1\rightarrow X_2$ be a bounded linear map. We define the operator norm of $T$ as follows:
\begin{equation}
\left\| T\right\|_{\textnormal{op}} = \inf\{\widetilde{M}\ge 0: \left\| T(x)\right\|_{X_2}\le\widetilde{M}\left\| x\right\|_{X_1} \forall x\in X_1\}.
\end{equation}
\label{def:operator_norm}
\end{definition}
\noindent Note that Definitions~\ref{def:bounded_operator} and \ref{def:operator_norm} depend on the choice of norms.

\begin{theorem}(Einsiedler and Wart \cite{einsiedler2017functional})
Let $\left(X_1,\left\|\cdot\right\|_{X_1}\right)$ and $\left(X_2,\left\|\cdot\right\|_{X_2}\right)$ be normed spaces, and $T: X_1\rightarrow X_2$ be a bounded linear map. Then,
\begin{equation}
\left\| T(x)\right\|_{X_2}\le \left\| T\right\|_{\textnormal{op}}\left\|x\right\|_{X_1}
\end{equation}
for all $x\in X_1$.
\label{thm:bound_range_operator}
\end{theorem}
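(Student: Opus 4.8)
The plan is to argue directly from the definition of the operator norm as an infimum. Set $S = \{\widetilde{M} \ge 0 : \|T(y)\|_{X_2} \le \widetilde{M}\|y\|_{X_1}\ \text{for all } y \in X_1\}$, so that $\|T\|_{\textnormal{op}} = \inf S$ by Definition~\ref{def:operator_norm}. Since $T$ is assumed bounded, Definition~\ref{def:bounded_operator} guarantees that $S$ is nonempty, and it is bounded below by $0$, so $\inf S$ is a well-defined nonnegative real number.

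First I would fix an arbitrary $x \in X_1$. By the very definition of $S$, the inequality $\|T(x)\|_{X_2} \le \widetilde{M}\|x\|_{X_1}$ holds for \emph{every} $\widetilde{M} \in S$. Hence $\|T(x)\|_{X_2}$ is a lower bound for the set $\{\widetilde{M}\|x\|_{X_1} : \widetilde{M} \in S\}$, so $\|T(x)\|_{X_2} \le \inf_{\widetilde{M}\in S}\big(\widetilde{M}\|x\|_{X_1}\big)$. Because $\|x\|_{X_1}$ is a fixed nonnegative constant, scaling commutes with the infimum: $\inf_{\widetilde{M}\in S}\big(\widetilde{M}\|x\|_{X_1}\big) = \|x\|_{X_1}\inf_{\widetilde{M}\in S}\widetilde{M} = \|x\|_{X_1}\,\|T\|_{\textnormal{op}}$. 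Combining the two inequalities yields $\|T(x)\|_{X_2} \le \|T\|_{\textnormal{op}}\|x\|_{X_1}$, and since $x$ was arbitrary we are done. Equivalently, one can run an $\varepsilon$-argument: for each $\varepsilon > 0$ choose $\widetilde{M}_\varepsilon \in S$ with $\widetilde{M}_\varepsilon < \|T\|_{\textnormal{op}} + \varepsilon$ (possible since $\|T\|_{\textnormal{op}} = \inf S$), deduce $\|T(x)\|_{X_2} \le \widetilde{M}_\varepsilon\|x\|_{X_1} < (\|T\|_{\textnormal{op}} + \varepsilon)\|x\|_{X_1}$, and let $\varepsilon \downarrow 0$.

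I do not anticipate a real obstacle: the statement is essentially the assertion that the infimum defining $\|T\|_{\textnormal{op}}$ remains a valid bound for each individual $x$, which follows from elementary properties of infima. The only place deserving a moment of care is the degenerate case $\|x\|_{X_1} = 0$, i.e.\ $x = 0$; there linearity of $T$ forces $T(0) = 0$, so both sides vanish and the inequality holds trivially. The scaling identity above already absorbs this case, but it is worth isolating it so that the argument never implicitly divides by $\|x\|_{X_1}$.
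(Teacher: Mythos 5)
Your argument is correct and complete: it is the standard infimum argument for this fact, including the careful handling of the case $\left\|x\right\|_{X_1}=0$. The paper itself gives no proof of Theorem~\ref{thm:bound_range_operator}, citing Einsiedler and Ward instead, and your proof is essentially the textbook one being referenced, so there is nothing further to reconcile.
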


Recall that a metric space is complete if every Cauchy sequence converges to an element in the metric space, and that an inner product space $\left(X,\left\langle\cdot, \cdot\right\rangle\right)$ is a Hilbert space if it is complete with respect to its induced norm $\left\|\cdot \right\| = \sqrt{\left\langle\cdot, \cdot\right\rangle}$.

\begin{theorem} (Halmos~\cite{halmos1947finite})
Let $\left(X_1, \left\langle\cdot, \cdot\right\rangle_{X_1} \right)$ and $\left(X_2, \left\langle\cdot, \cdot\right\rangle_{X_2} \right)$ be Hilbert spaces. If $X_1$ is finite-dimensional and $T:X_1\rightarrow X_2$ is linear, then $T$ is bounded.
\label{thm:bounded_operator}
\end{theorem}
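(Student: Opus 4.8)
The plan is to exploit the finite dimensionality of $X_1$ by fixing an orthonormal basis, so that the desired estimate reduces to a finite sum that is controlled by the Cauchy--Schwarz inequality. Note that completeness will play no essential role; only the inner-product structure on $X_1$ and the norm on $X_2$ are needed, together with $\dim X_1 < \infty$.

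First I would apply the Gram--Schmidt process to any basis of $X_1$ to obtain an orthonormal basis $e_1,\ldots,e_n$ of $X_1$, where $n = \dim X_1$; this is legitimate precisely because $X_1$ is finite-dimensional, so the process terminates and the resulting orthonormal family spans $X_1$. Every $x \in X_1$ then has the expansion $x = \sum_{i=1}^{n} c_i e_i$ with $c_i = \langle x, e_i\rangle_{X_1}$, and the Pythagorean identity gives $\|x\|_{X_1}^2 = \sum_{i=1}^{n} |c_i|^2$.

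Next, using linearity of $T$, I would write $T(x) = \sum_{i=1}^{n} c_i\, T(e_i)$, apply the triangle inequality in $X_2$, and then apply the Cauchy--Schwarz inequality in $\mathbb{R}^n$ to the vectors $(|c_1|,\ldots,|c_n|)$ and $(\|T(e_1)\|_{X_2},\ldots,\|T(e_n)\|_{X_2})$:
\[
\|T(x)\|_{X_2} \le \sum_{i=1}^{n} |c_i|\,\|T(e_i)\|_{X_2}
\le \Bigl(\sum_{i=1}^{n} |c_i|^2\Bigr)^{1/2}\Bigl(\sum_{i=1}^{n} \|T(e_i)\|_{X_2}^2\Bigr)^{1/2}.
\]
Recognizing the first factor as $\|x\|_{X_1}$ and setting $\widetilde{M} = \bigl(\sum_{i=1}^{n} \|T(e_i)\|_{X_2}^2\bigr)^{1/2}$, which is a finite nonnegative constant that does not depend on $x$, yields $\|T(x)\|_{X_2} \le \widetilde{M}\,\|x\|_{X_1}$ for all $x \in X_1$, which is exactly boundedness of $T$ in the sense of Definition~\ref{def:bounded_operator}.

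I do not expect a genuine obstacle here; the only subtlety is making sure the existence of the orthonormal basis truly uses $\dim X_1 < \infty$ (so that the sums are finite and $\widetilde{M}$ is well defined), and observing that the hypothesis that $X_2$ is Hilbert — and even that $X_1$ is complete — is not actually exercised in the argument. An alternative route would be to invoke equivalence of all norms on the finite-dimensional space $X_1$ and continuity of $T$ in coordinates, but the orthonormal-basis computation above is the most self-contained given the definitions already available in the paper.
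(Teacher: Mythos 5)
Your proof is correct and complete: the Gram--Schmidt step, the expansion $x=\sum_{i=1}^n c_i e_i$ with the Pythagorean identity, and the triangle-plus-Cauchy--Schwarz estimate giving $\widetilde{M}=\bigl(\sum_{i=1}^n \|T(e_i)\|_{X_2}^2\bigr)^{1/2}$ constitute the standard textbook argument, and it establishes boundedness exactly in the sense of Definition~\ref{def:bounded_operator}. Note that the paper itself does not prove this statement at all; it is quoted from Halmos as a known result, so there is no in-paper proof to compare against. Your side remarks are also accurate: neither the completeness of $X_1$ nor any property of $X_2$ beyond being a normed space is used, so the conclusion holds more generally than stated; only $\dim X_1<\infty$ and the inner-product structure on $X_1$ matter.
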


\subsection{Dependence of the accuracy of the online embedding on the number of landmarks}
\label{sec:dependence_on_m}

To assess the dependence of the quality of our online embedding algorithm on the number of landmarks, we show in Figs.~\ref{fig:stress_against_m}(a), \ref{fig:stress_against_m}(b), and \ref{fig:stress_against_m}(c) the values of $\sigma_L$ and $\sigma$ as a function of $m$ for the S-curve, SPY, and Hospital data, respectively. We set $m \in \{3, 6, 8, 10, 12, \ldots, 500 \}$ for the S-curve data and $m \in \{3, 5, 8, 11, 14, \ldots, 29, 30 \}$ for the SPY and Hospital data. We skipped relatively many $m$ values in the latter two data sets due to high computational cost. Note that $m \ge 3$ landmarks are necessary for matrix $\Delta$ to have at least two positive eigenvalues, thus enabling the embedding the data points into $\mathbb{R}^2$ using LMDS.
%

Figure~\ref{fig:stress_against_m}(a) indicates that $\sigma_L$ and $\sigma$ 
are near an acceptable range (i.e., $<0.15$)~\cite{borg2005modern} over a reasonably wide range of $m$ including $m=100$, which we used in the main analysis. With too few landmarks (i.e., $m < 50$),
$\sigma$ is substantially larger than $0.15$ and carries larger fluctuations, and $\sigma_L$ also fluctuates much.
Large values and high fluctuation of $\sigma_L$ and $\sigma$ are also present when $m\approx 220$ to $m\approx 320$, although the reasons for this behavior are unclear. Excluding these outliers, the values of $\sigma_L$ and $\sigma$ are stable for large $m$. The values of $\sigma_L$ and $\sigma$ for the SPY (Fig.~\ref{fig:stress_against_m}(b)) and Hospital (Fig.~\ref{fig:stress_against_m}(c)) data are substantially smaller than $0.15$ over the entire range of $m$ that we have examined, which includes $m=10$, the value used in our main analysis. In sum, the accuracy of the present online embedding algorithm is sufficiently robust with respect to changes in $m$ around the values used in our main analysis.

\begin{figure}[t]
    \centering
     \includegraphics[width=\textwidth]{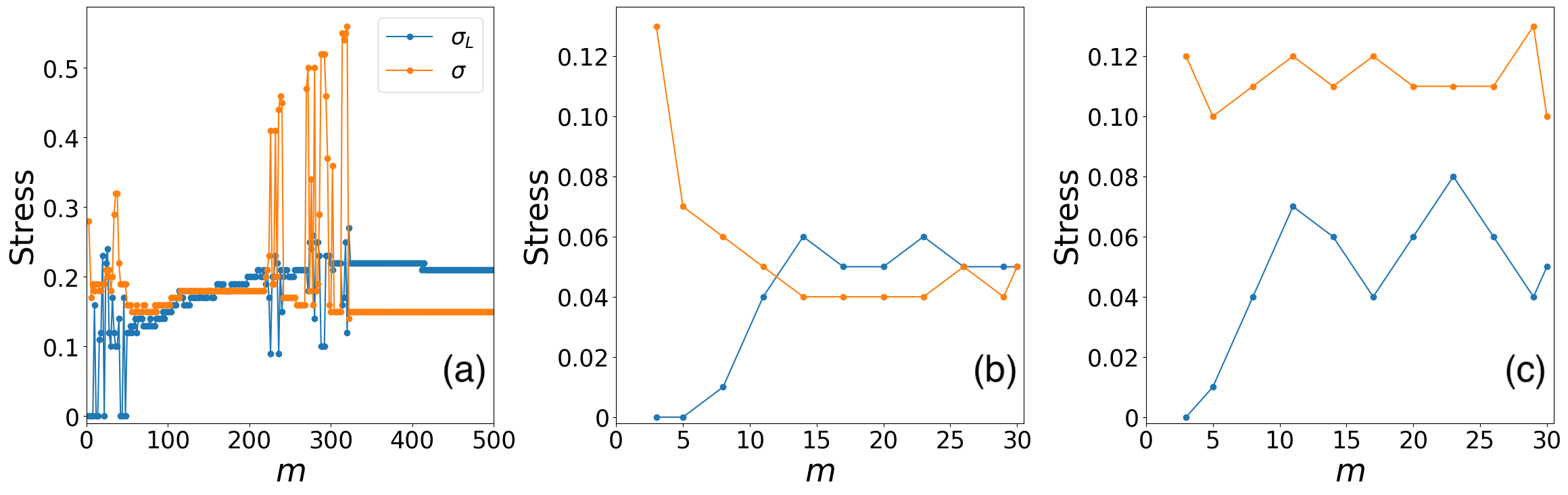}
     \caption{Stress function plotted against the number of landmarks, $m$. (a) S-curve. (b) SPY. (c) Hospital.
     }
\label{fig:stress_against_m}
\end{figure}

\subsection{Random online embedding algorithm}
\label{sec:rand_online}

Our random online embedding algorithm, which we used for comparison purposes, runs as follows.
We begin by using the first $m$ data points as the initial landmarks. Then, each new data point becomes a new landmark with probability $0.5$ upon arrival. If the new data point becomes a landmark, then we drop an existing landmark selected uniformly at random to keep the size of $L$ equal to $m$.

\subsection{Tie-decay network and its embedding}
\label{sec:tie_decay_embedding}

In this section, we briefly review tie-decay networks and its embedding.

Denote the number of nodes of the temporal network by $N$.
Consider a sequence of time-stamped contact events each of which occurs between a pair of nodes. One can represent each contact event as a tuple $(i_e, j_e, t_e)$, where $i_e$ and $j_e$ are nodes, and $t_e$ $(\ge 0)$ is the time of the contact event. For the Hospital data set, each contact event is undirected such that the order of $i_e$ and $j_e$ does not matter.
Let $B(t)=[b_{ij}(t)]$ be a time-dependent $N\times N$ weighted adjacency matrix of the tie-decay network, where $b_{ij}(t)$ represents the weight of edge $(i, j)$ at time $t\in \mathbb{R}_{\ge 0}$. For each $(i, j)$, we set
\begin{equation}
    b_{ij}(t) = \sum_{r; \tilde{t}_r \le t} e^{-\alpha(t-\tilde{t}_r)}H(t-\tilde{t}_r),
\label{eq:b_ij}
\end{equation}
where $H$ is the Heaviside step function defined by $H(x) = 1$ for $x\ge 0$ and $H(x) = 0$ for $x < 0$, $\tilde{t}_r$ is the time of the $r$th event on edge $(i, j)$, and $\alpha$ is a parameter~\cite{ahmad2021tie, sugishita2021opinion, porter2020nonlinearity+, zuo2021models}. An event on $(i, j)$ increases $b_{ij}(t)$ by $1$. We set $\alpha = 10^{-2}$ in our numerical simulations.

The exponentially decaying nature of Eq.~\eqref{eq:b_ij} is convenient for updating matrix $B$ upon any event arrival. To explain this, 
we consider a sequence of times $0\leq t_1<t_2<\cdots < t_n$, where $t_{r}$ is the $r$th time at which at least one event occurs in the entire network. We denote by $A_{r}$ the adjacency matrix of the network composed of all the events occurring at time $t_r$. The input data is then equivalent to the time series of adjacency matrices $\{ A_1, A_2, \ldots \}$. Then, the weighted adjacency matrix of the tie-decay network is given by
\begin{equation}
B(t) = \sum_{r; t_r \le t} e^{-\alpha(t - t_r)}A_r,
\label{tie_decay_adj_mat_terms}
\end{equation}
which allows a convenient updating rule:
\begin{equation}
B(t_{r+1}) = e^{-\alpha (t_{r+1} - t_{r})} B(t_{r}) + A_{r+1}.
\end{equation}

We embed $\{ B(t_1), \ldots, B(t_n) \}$ using the LMDS. This method is advantageous in that one can produce a continuous-time trajectory of a given tie-decay network, i.e., including the times at which no contact event occurs, if we use an appropriate distance measure for networks~\cite{thongprayoon2023embedding}. 
In the demonstration with the Hospital data set, we do not exploit this continuous-time nature of the trajectory because our aim is to assess the performance of our online landmark replacement algorithm. However, we use this method.

To run the algorithm for a given $\alpha$ value, we first construct the tie-decay network
from the input adjacency matrices, $\{ A_1, \ldots, A_n \}$, using Eq.~\eqref{tie_decay_adj_mat_terms}.
Second, we specify the landmarks. In the original method~\cite{thongprayoon2023embedding}, we use
$B(t_1)$, $\ldots$, $B(t_n)$ as landmarks and embed $B(t)$, where $t \notin \{t_1, \ldots, t_n \}$, using the LMDS. Here we only use the $m$ landmarks that are a subset of $\{ B(t_1), \ldots, B(t_n) \}$. Third, we run the LMDS to embed all $B(t)$'s that are not landmarks. 
For $B(t_{r})$ that are not landmarks, the usual procedures of the LMDS determine their embedding coordinates.
To embed $\{B(t) : t\ge 0 \}$, where $t\notin \{ t_1, \ldots, t_n \}$, we use the exponentially decaying nature of the tie-decay network model to obtain
\begin{equation}
B(t) = e^{-\alpha(t-t_r)}B(t_r),
\label{matrix_B}
\end{equation}
where $r$ is the unique value of $r \in \{ 1, \ldots, n\}$ that satisfies $t_r < t < t_{r+1}$. Using Eq.~\eqref{LMDS_arbitrary_coordinates}, we obtain
the embedding coordinate for $B(t)$ for any $t_r < t < t_{r+1}$ by
\begin{equation}
        \psi(B(t)) = -\frac{1}{2}L'_k(\delta_B(t)-\delta_\mu),
    \label{tie_decay_arbitrary_coordinates}
    \end{equation}
where
\begin{equation}
    \delta_B(t) = \begin{bmatrix}
    d^2(B(t),B(t_1))\\
    \vdots \\
    d^2(B(t),B(t_n))
    \end{bmatrix}.
\end{equation}

We use the unnormalized Laplacian network distance as $d$.
The Laplacian matrix $\tilde{L}(t)$ of a tie-decay network at time $t$ is given by $\tilde{L}(t)\equiv \tilde{D}(t) - B(t)$, where $\tilde{D}(t)$ is the diagonal matrix of which the $i$th diagonal entry is $\displaystyle\sum_{j=1}^N b_{ij}(t)$. The unnormalized Laplacian network distance is given by
\begin{equation}
    d_{\tilde{L}}(\tilde{L}(t_r),\tilde{L}(t_{s})) = \sqrt{\sum_{k=1}^N \left[\rho_k(\tilde{L}(t_r)) - \rho_k(\tilde{L}(t_{s})) \right]^2}\,,
\label{eq:unnormalized-Laplacian-distance-def}
\end{equation}
where  $\rho_k(P)$ is the $k$th smallest eigenvalue of symmetric matrix $P$~ \cite{masuda2019detecting, donnat2018tracking, wilson2008study}. One can replace $\sum_{k=1}^N$ by $\sum_{k=2}^{N}$ in Eq.~\eqref{eq:unnormalized-Laplacian-distance-def} because the smallest eigenvalue of a Laplacian matrix is always $0$.
Using Eq.~\eqref{matrix_B}, we can simplify Eq.~\eqref{eq:unnormalized-Laplacian-distance-def} as follows:
\begin{equation}
d_{\tilde{L}}(\tilde{L}(t),\tilde{L}(t_{s})) = \sqrt{\sum_{k=2}^N \left[e^{-\alpha(t-t_r) }\rho_k(\tilde{L}(t_r)) - \rho_k(\tilde{L}(t_{s})) \right]^2}\,.
\end{equation} 

\subsection{Growth of $\rho$\label{sec:hosp_trajectory_induced_jump}}

We plot in Figs.~\ref{fig:and_epsilon}(a), \ref{fig:and_epsilon}(b), and \ref{fig:and_epsilon}(c) the values of $\rho$ against the number of data points for the S-curve, SPY, and Hospital data, respectively. In Fig.~\ref{fig:and_epsilon}(a), $\rho$ rapidly grows till approximately the first 200 data points. In Fig.~\ref{fig:and_epsilon}(b), $\rho$ starts growing faster than before at $r \approx 1,500$ because of the nature of the SPY data; the stock prices started to explore different part of the space of the data, roughly in the beginning of year 2020 (i.e., 1,761th day). Although this interpretation is consistent with Fig.~\ref{fig:etf_embedding}(b), we further support this claim by the following computation. For simplicity, we calculated the average of the ten-dimensional vector whose each entry is the price of a stock averaged over all valid days of a year. The distance between the thus obtained vectors of two years, which we averaged over all pairs of years between 2013 and 2019, was equal to $1.80\times 10^{-4}$.
In contrast, the distance between the vector for year 2020 and that for any year between 2013 and 2019 was equal to $4.90\times 10^{-4}$ on average.
Therefore, the coordinates between 2013 and 2019 in the original space stay closer to each other than the coordinates in 2020. In Fig.~\ref{fig:and_epsilon}(c), the value of $\rho$ suddenly increases by a large amount on Monday at 2:08  PM, Monday at 2:14 PM, and Tuesday at 10:16 AM. Note that we use the logarithmic scale of $n$ in Fig.~\ref{fig:and_epsilon}(c) to uncover the first two big leaps in $\rho$ values.

\begin{figure}[t]
    \centering
     \includegraphics[width=\textwidth]{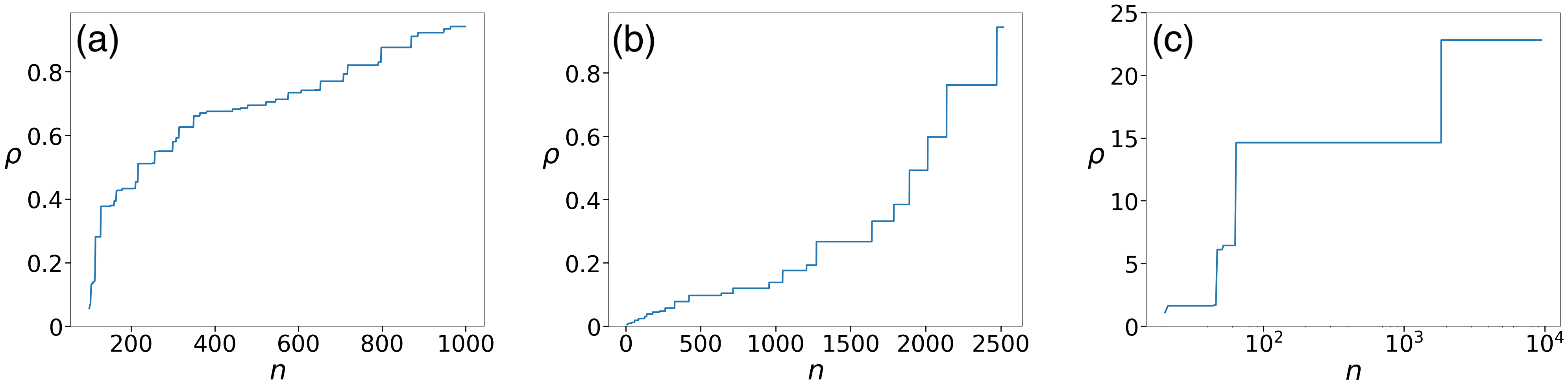}
     \caption{Evolution of $\rho$ as a function of time. (a) S-curve. (b) SPY. (c) Hospital.}
\label{fig:and_epsilon}
\end{figure}

This last phenomenon is considered to be because the embedding trajectory from Monday between 2:00 PM and 3:00 PM and Tuesday between 10:00 AM and 11:00 AM explores unexplored part of the data space compared to other segments of the same trajectory in the past. As evidence, we show in Fig.~\ref{fig:hospital_traj_spec_hour} sixteen one-hour segments of the trajectory of the Hospital temporal network, one per panel. The segments of the trajectory shown in Fig.~\ref{fig:hospital_traj_spec_hour} that include Monday 2:08 AM and 2:14 PM explore new part of the space relative to the segments in the past (i.e., the segment between Monday 1:00 PM and 2:00 PM). Similarly, apart from the segment containing Tuesday 10:16 AM, all segments up to Tuesday 10:00 AM are confined below $y = 6$ and to the right to $x = -25$. After the third sudden increment, $\rho$ $(= 22.79)$ is large enough that the existing landmarks cover the incoming networks. We recall that the $x$ and $y$ ranges shown in Fig.~\ref{fig:hosp_embedding}(b) are appoximately $x \in (-15,10)$ and $y \in (-2.5,5.5)$, respectively, which are a small subset of the embedding space shown in each panel of Fig.~\ref{fig:hospital_traj_spec_hour}. This is because we show the hourly averages of the embedding coordinate in Fig.~\ref{fig:hosp_embedding}, while we show the embedding coordinate at individual times in Fig.~\ref{fig:hospital_traj_spec_hour}; the latter is not an averaged quantity and therefore accompanies larger fluctuations.

\begin{figure}[t]
    \centering
     \includegraphics[width=\textwidth]{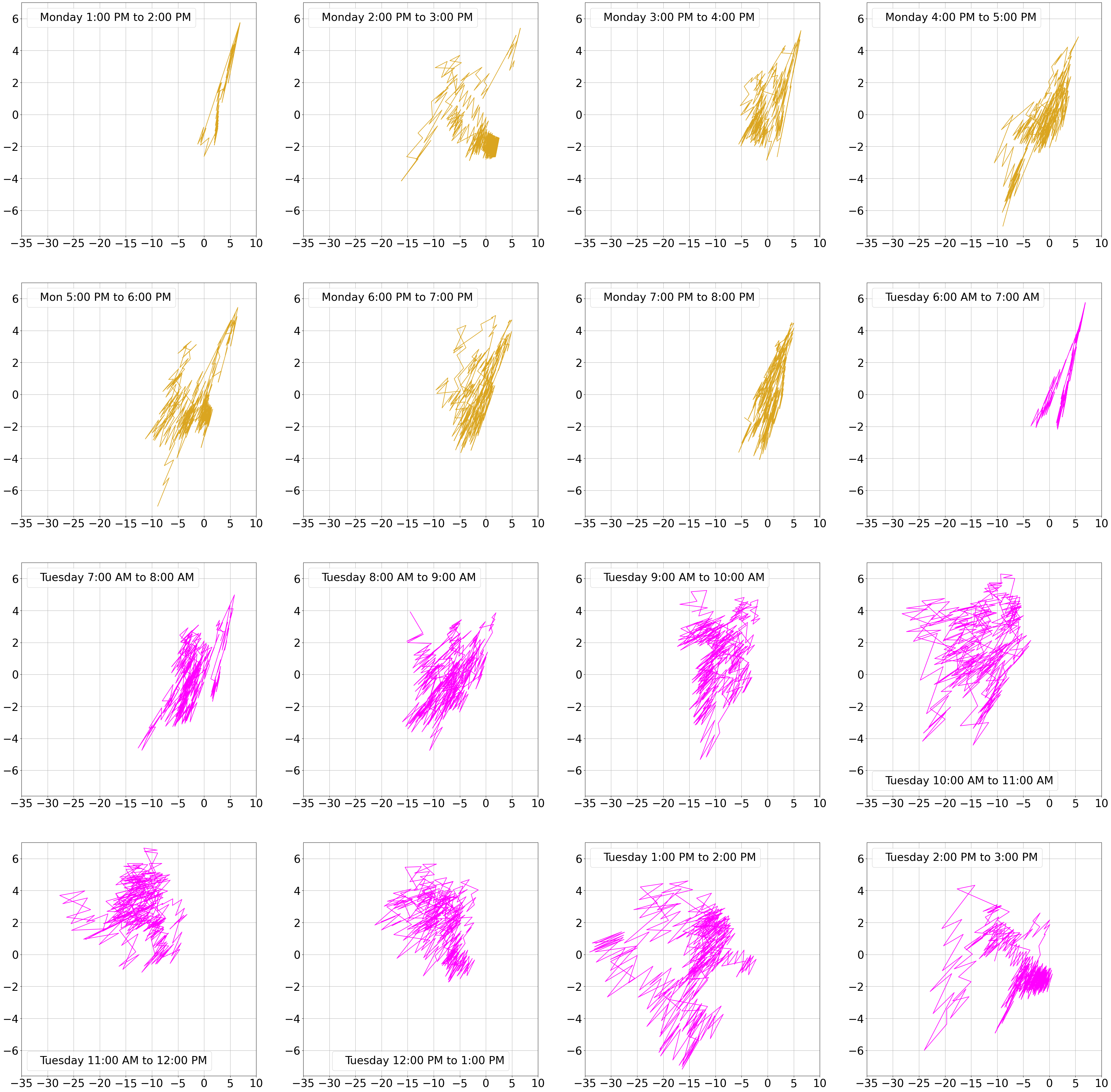}
     \caption{Two-dimensional trajectory of the Hospital temporal network in each hour from Monday 1:00 PM to 8:00 PM and from Tuesday 6:00 AM to 3:00 PM. Because 95.6\% of the contacts occur between 6:00 AM to 8:00 PM throughout the weekdays, we have omitted the trajectories outside these hours. We show the trajectory on Monday and Tuesday in different colors.}
\label{fig:hospital_traj_spec_hour}
\end{figure}

While there is no control or a stopping criterion for $\rho$, it should stop growing after its value has become sufficiently large because any newly arriving data points, including outliers, would be connected to an existing landmark when $\rho$ is sufficiently large.
In fact, Fig.~\ref{fig:and_epsilon}(c) shows that $\rho$ saturates after Tuesday at 10:16 AM (i.e., after $n = 1,838$ in the figure). Although $\rho$ does not saturate for the other two data sets
(see Figs.~\ref{fig:and_epsilon}(a) and \ref{fig:and_epsilon}(b)), we expect that it roughly saturates if the data are longer in the case of the SPY data. In the case of the S-curve, $\rho$ would continue to increase if the S-like shape further continues. However, this situation is obviously artificial.

\clearpage
\clearpage
\bibliographystyle{chronological}
\bibliography{bibliography.bib}

\end{document}